\documentclass[a4paper,11pt]{article}

\usepackage{booktabs} 

\usepackage[ruled]{algorithm2e} 

\SetAlFnt{\small}
\SetAlCapFnt{\small}
\SetAlCapNameFnt{\small}
\SetAlCapHSkip{0pt}
\IncMargin{-\parindent}

\usepackage{epigraph}
\usepackage{nicefrac}
\usepackage{hyperref}
\usepackage{fullpage}
\hypersetup{colorlinks=true,linkcolor=black,citecolor=blue}

\usepackage{subfigure}
\usepackage{algpseudocode}

\usepackage{amsmath, amsthm, amssymb, amstext, comment, graphicx}
\usepackage{mathtools,extarrows}
\usepackage{url}
\newtheorem{theorem}{Theorem}
\newtheorem{definition}{Definition}
\newtheorem{open}{Open Problem}
\newtheorem{proposition}{Proposition}
\newtheorem{corollary}{Corollary}

\newtheorem{lemma}{Lemma}

\newtheorem{example}{Example}

\newtheorem{observation}{Observation}
\allowdisplaybreaks


%
%

\usepackage{url}
\allowdisplaybreaks
\usepackage{hyperref}
\hypersetup{colorlinks=true,linkcolor=black,citecolor=blue}

\title{Universal Growth in Production Economies}
\author{
	Simina Br\^anzei\footnote{Purdue University, USA. E-mail: \href{mailto:simina.branzei@gmail.com}{simina.branzei@gmail.com}.}
	\and
	Ruta Mehta\footnote{University of Illinois at Urbana-Champaign, USA. E-mail: 
		\href{mailto:rutameht@cs.illinois.edu}{rutamehta@cs.illinois.edu}.}
	\and
	Noam Nisan\footnote{Hebrew University of Jerusalem and Microsoft Research, Israel. E-mail: \href{mailto:noam@cs.huji.ac.il}{noam@cs.huji.ac.il}.}
}
\date{}

\begin{document}
	\maketitle
\begin{abstract}
We consider a simple variant of the von Neumann model of an expanding economy, in which multiple producers make goods according to their production function. The players trade their goods at the market and then use the bundles acquired for the production in the next round. 
We study a simple decentralized dynamic---known as proportional response---in which players update their bids proportionally to how useful the investments were in the past round. 

We show this dynamic leads to \emph{growth} of the economy in the long term (whenever growth is possible) but also creates unbounded \emph{inequality}, i.e. very rich and very poor players emerge.  
We analyze several other phenomena, such as how the relation of a player with others influences its development and the Gini index of the system.

One of the key technical findings is that the players learn a \emph{global} feature of the network (the optimal cycle) in a decentralized way, while interacting \emph{locally} with their direct neighbors. We obtain this by studying the volume in the resulting dynamical system and showing that the volume of each cycle expands or contracts by a constant factor in each round.
\end{abstract}

\section{Introduction}

Market equilibria are central objects of study in economic theory \cite{walras,AD54,Brainard00}. However, the notion of
an equilibrium deliberately abstracts away {\em how} such equilibria are reached or even whether 
they are reached at all.  On this topic, Fisher \cite{Fisher83} writes: \emph{``Whether or not the actual economy is stable, we largely lack a convincing theory of why that should be so. Lacking such a theory, we do not have an adequate theory of value, and there is an important lacuna in the center of microeconomic theory... To only look at situations where the Invisible Hand has finished its work cannot lead to a real understanding of how that work is accomplished.''} This issue is only compounded by the computational complexity results on the hardness of computing various notions of equilibria even in simple models \cite{CDDT09,AGT_book}.

There has, of course, been a vast amount of work done on 
understanding various  market dynamics and whether and how they converge to an equilibrium
\cite{Quah96,FZ07,CF08,zhang11,CCD13,MPP15} as well as the computational implications of such dynamics \cite{CMV05,CPV05,CF08,CCD13}. 
In this paper we are interested in {\em growth} in production markets, a scenario whose whole
point is that we do not expect (or desire) the economy to approach an equilibrium, but rather
to ``expand''.  Our model is a simple variant of the pioneering model of an expanding economy due to von Neumann \cite{vNeumann46}, where the goods are substitutes (the initial model was for perfect complements), represented through additive production functions \cite{Gale89}.  
The work of von Neumann as well as most of the literature following that focused on analyzing 
equilibrium-like states where all components of the economy expand by the same constant factor
as time progresses.  In contrast, we focus on the dynamics themselves, and analyze scenarios 
that in no sense approach any stable point.  We will be interested in understanding which market mechanisms lead to growth in the long run when the players act in a decentralized way, regardless of whether equilibrium states are ever reached or not. 

\subsection{Model Overview and Example}

In our model we have $n$ producers, each can only produce one type of (eponymous) good
according to his endowed production function, using as inputs various amounts of other goods in the 
system.  
In the simplest case, on which we focus, these production functions are simply additive.
The main decision that the ``economy
needs to make'' is that of the allocation of resources: for every time unit $t$, how do we split the 
goods produced at time $t$ between the different producers who need to use them to produce goods
for time $t+1$. The model allows for circular dependencies, which are useful for capturing situations in which people would use each other's products, such as a baker using a computer and the person making computers eating bread. Our goal is growth: having the amounts of resources in the economy  
increase with time (e.g. more bread, computers, scientific discoveries).

\medskip

Here is a simple example with two producers: 

\begin{itemize}
	\item Every time unit, producer A can make 0.1 units of good A from every unit of good B, and 
	0.99 units of good A from every unit of good A.\footnote{I.e., a unit of good A that stays with producer 
		A loses 1\% of its quantity every time unit.}
	\item Every time unit, producer B can make 10.2 units of good B from every unit
	of good A, and 0.99 units of good B from every unit of good B.  
\end{itemize}

Looking globally, it is clear that the producers 
should cooperate in this production market: if each good is allocated back to its own producer 
(i.e. agent A gets all
the existing quantity $x_{t}$ of good A, and produces from it $x_{t+1} = 0.99 x_{t}$ units of
good A, and similarly for B) then the economy will shrink.  If on the other hand we allocate the whole
quantity of good A each time unit to producer B, and all of good B to producer A, then
the economy will grow: $x_{t}$ units of good A will produce $y_{t+1}=10.2 \cdot x_t$ units of good B, 
which in turn will produce $x_{t+2} = 0.1 \cdot y_{t+1} = 1.02\cdot x_t$ units of good A.  The question
we ask in this paper is whether and how this growth can be achieved using {\em natural decentralized
	dynamics.}  After all, agent B needs significant foresight in order to send essentially everything that he produces to agent A (if he keeps even 2\% to himself then the economy will not
be able to grow).

It is not hard to globally characterize whether a general market with additive production functions can 
keep growing in an unbounded way under a globally-optimal allocation of resources --
see Theorem \ref{thm:onegood}. We would like to find a ``local mechanism'' where individual actions by the 
different producers that use only information they posses can achieve such unbounded growth, whenever growth is
possible in the first place. Leaving the space of allowed ``mechanisms'' intentionally vague, our goal is the following:


\vspace{0.1in}
\noindent
{\bf Definition:} A mechanism is called {\em universal} if for any market with additive production
functions that can grow (in an unbounded manner) under an optimal allocation of resources,
the market also grows in an unbounded manner by following the mechanism.
\vspace{0.1in}

\subsection{Our Results}

Here is the basic class of mechanisms that we study.  In these, each producer starts with a
certain amount of his good that can be used to produce other goods, as well as an amount of 
``money'' (budget) with no intrinsic value but  
used for trading.  At each time
unit, every producer splits (all of) his current money into bids that he places on each of the goods.  
Then each producer sells his good to the producers that bid on it (and collects all the money from these bids).
%

The basic decisions each producer must make
in every time unit are $(i)$ how to split his money into bids and $(ii)$ how to allocate his good to the bidders depending on their bids.
The mechanism we analyze uses the following rules: 


\begin{itemize}
	\item {\bf Trading Posts:}  In each time unit, the producers come to the market with money and goods. Each player bids on the goods and every good is allocated among its bidders 
{\em in proportion to the bid amount}. Each player collects the money from selling its good and takes home the bundle purchased, from which it produces for the next time unit.

In our example, if producer A bid \$0.2 on good A while producer B bid \$0.3 on good A, then 
	producer A gets 40\% of the quantity of good A while producer B gets 60\%; producer A collects \$0.5 from selling its good, which becomes its new budget.

	\item {\bf Proportional Updates:} In each time unit, every producer splits his money into bids on the goods in a way that is
	{\em proportional to the amount he produced from that good in the last round}. For example, if in the previous
	round producer A got 0.5 units of good A, from which it produced $0.99 \cdot 0.5$ units of good A
	and got 0.3 units of good B from which it produced $0.1 \cdot 0.3$ units of good A, then he
	will bid a fraction $(0.99 \cdot 0.5)/(0.99 \cdot 0.5 + 0.1 \cdot 0.3) \approx 94.2\%$ of his money on good A and
	the rest of $\approx 5.4\%$ on good B.
\end{itemize}

\begin{figure}[h!]
	\centering
	\subfigure[Amounts of the producers]
	{
\centering
		\includegraphics[scale = 0.46]{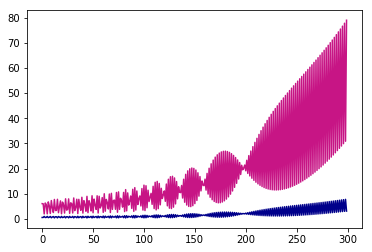}
		\label{figintro:n=2_x}
	}
	\subfigure[Budget of producer 2]
	{
		\includegraphics[scale=0.46]{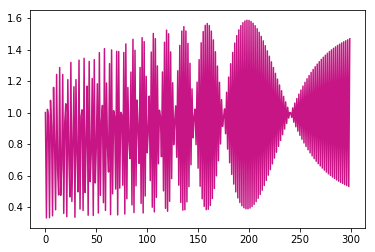}
		\label{figintro:n=2_B1}
	}
	\caption{Two producer economy given by $\vec{a} = [[0.99, 0.1], [10.2, 0.99]]$. The initial amounts are $(1,2)$, while all the initial bids are $0.5$. Producer $1$ is shown with blue and $2$ with red. The $X$ axis shows the time unit.}
	\label{figintro:growth}
\end{figure}

\begin{figure}[h!]
	\centering
	\begin{minipage}{0.47\textwidth}
		\includegraphics[scale=0.44]{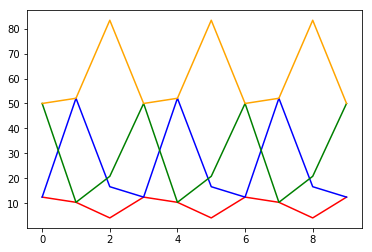}
		\caption{Cycling behavior in the bids for the two-producer economy $\vec{a} = [[1, 5], [0.2, 1]]$. The initial amounts are $(1,1)$ and budgets $(25,100)$. The bids of producer $1$ are shown in red and blue and those of producer $2$ in green and orange. The period is $3$.}\label{figintro:cycling}
	\end{minipage}
	\hfill
	\begin{minipage}{0.47\textwidth}
\centering
		\includegraphics[scale = 0.78]{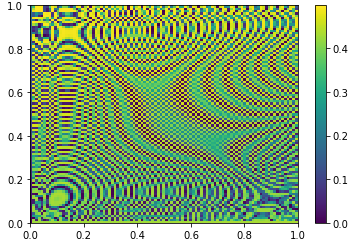}
		\caption{Gini index computed for the quantities after 800 rounds, for the two producer economy $\vec{a} = [[\sqrt{1.5}, 0.1], [15, \sqrt{1.5}]]$. The initial amounts are $(1,1)$ and initial budgets $(1,1)$. The value $x$ on the $X$ axis shows the initial bid of producer $1$ on $2$ (producer $1$ starts by bidding the remainder of $1-x$ on its own good), and $Y$ axis value the initial bid of producer $2$ on good $1$.}\label{figintro:gini}
	\end{minipage}
\end{figure}

Trading Posts have been introduced by Shapley and Shubik to explain the formation of prices in the exchange economy \cite{SS77,Shubik16}. 
The ``bid'' of a player on a good can be seen as its spending on that good.

The proportional update is also known as ``proportional response'' and has been studied before in several works. In static models such as Fisher markets, where the players repeatedly come to the market with the same amount of goods and money, Trading post with proportional updates (aka the proportional response dynamic) converges to market equilibria \cite{FZ07,zhang11}. This convergence holds for a large 
class of valuations known as constant elasticity of substitution. In fact, for additive valuations in the static Fisher model,
Trading post with proportional updates is equivalent to gradient descent (with respect to the 
Bregman divergence instead of the Euclidean distance) \cite{BDX11}.

\newpage

The combination of these two rules results in rather interesting dynamics, even when we only
have two producers as in the example above.
Figure \ref{figintro:growth} presents simulations results of this dynamic
of the two-producer example above (from some
simple starting amounts). 
Note that while the system seems to oscillate in a rather irregular fashion, one can clearly see
the quantities produced are growing over time. The budgets of the players also
oscillate in an unclear pattern. Figure \ref{figintro:cycling}
shows in greater detail the oscillations in a simulation of another two-producer market where growth
is not achieved but instead a repeated oscillation with period 3 (!) is.

\medskip

Our first main result is that Trading post with proportional updates is indeed a universal
mechanism: if a production economy can grow unboundedly using an optimal (possibly centralized) allocation rule, the economy also grows unboundedly by following this mechanism.\footnote{In other words, the performance of a universal mechanism is not too far from that of an optimal centralized allocation rule.}



\begin{theorem}[Universal growth]
\label{thm:intro2}
Trading posts with proportional updates is a universal mechanism.
\end{theorem}

We further study the properties of this mechanism, for example showing that its rate of growth is as fast as possible. Our proof analyzes the volumes of cycles in the resulting dynamical system and shows that on each cycle the volume expands or shrinks by a constant factor in each round, which implies the asymptotically optimal rate of growth of the economy.

\smallskip

Our results also show that the players learn a \emph{global} feature (the optimal cycle) on the network while only interacting \emph{locally} with their direct neighbors. The information flows along the edges in such a way that in the limit the dynamic learns implicitly the best production cycles.

\medskip

We also focus on the level of inequality between the different producers
that develops as the market grows. Figure \ref{figintro:gini} depicts a heat map of the Gini index (a measure of inequality) reached in
a certain two producer economy (after 800 rounds) as a function of the initial bids.
Despite the complexity of the dynamics we are able to show that
our mechanism always leads to growing inequality between the producers, in terms of the quantity
of goods that they get as time progresses. Specifically, we have the following inequality theorem:

\begin{theorem}[Inequality]
\label{thm:intro2}
The producers get differentiated over time into two classes: the ``rich'', who participate in the ``most
	efficient production cycle'' versus the ``poor'', who do not. While the inequality gaps {\em within}
	the ``max-efficiency'' class remain bounded by a constant, the inequality gap between this class
	and the rest grows to infinity.
\end{theorem}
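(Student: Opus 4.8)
The plan is to read the dynamics as a coupled replicator (multiplicative-weights) process and then track how purchasing power concentrates on the most efficient cycle. First I would fix notation: let $x_i(t)$ be the quantity of good $i$, $b_{ij}(t)$ the bid of producer $i$ on good $j$, $p_j(t)=\sum_k b_{kj}(t)$ the price of good $j$ (the total money spent on it), and $B_i(t)=\sum_j b_{ij}(t)$ producer $i$'s budget. I would record the basic identities $q_{ij}(t)=b_{ij}(t)x_j(t)/p_j(t)$ (the amount of good $j$ bought by $i$ under proportional allocation), $x_i(t+1)=\sum_j a_{ij}q_{ij}(t)$, and, for the proportional bid rule, $B_i(t+1)=p_i(t)$ together with $b_{ij}(t+1)=B_i(t+1)\,a_{ij}q_{ij}(t)/x_i(t+1)$; total money $\sum_i B_i(t)=M$ is conserved. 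The key reformulation is that $b_{ij}(t+1)/b_{ij}(t)$ is proportional to the ``return on investment'' $a_{ij}\,x_j(t)/p_j(t)$, with a common $j$-independent normalizing factor $B_i(t+1)/x_i(t+1)$. This exhibits the update as a replicator dynamic in which the bid on a good gains relative mass exactly when its return exceeds producer $i$'s current average return.

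Next I would isolate the two classes. Using Theorem~\ref{thm:onegood}, let $\lambda^*$ be the optimal growth rate, equal to the maximal geometric mean $(\prod_{(i,j)\in C}a_{ij})^{1/|C|}$ over directed cycles $C$; define the \emph{rich} set $R$ to be the producers lying on some cycle attaining $\lambda^*$ and the \emph{poor} set $P$ to be the rest. By the universality result (our first main theorem) together with the matching upper bound on the growth rate, the whole economy grows at rate $\lambda^*$, so I would pass to the rescaled quantities $\hat{x}_i(t)=x_i(t)/(\lambda^*)^{t}$ and study their long-run behaviour. The inequality theorem then amounts to: $\hat{x}_i(t)\to 0$ for $i\in P$, while $\hat{x}_i(t)$ stays bounded above and away from $0$ for $i\in R$.

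The heart of the argument is to show that the money share of every poor good is defunded geometrically, i.e. $p_i(t)/M\to 0$ for $i\in P$, whence the poor budgets $B_i(t)=p_i(t-1)$ vanish as a fraction of $M$. I would extract this from the replicator form: along any cycle the product of returns around the cycle is controlled by its geometric mean, so the time-averaged log-return on goods lying off every max-efficiency cycle is strictly below $\log\lambda^*$, whereas on the max cycle it equals $\log\lambda^*$; pushing this strict gap through the multiplicative update forces off-cycle bids to lose a constant factor of relative mass per period. Once the poor budgets vanish, a poor producer's relative purchasing power $b_{ij}(t)/p_j(t)$ on every good tends to $0$, so the amount it buys and hence $\hat{x}_i(t)$ tends to $0$: the poor fall infinitely far behind, giving the unbounded rich/poor gap.

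Finally, restricting to $R$, I would show the rescaled on-cycle quantities remain comparable, so ratios within the rich class stay within a constant factor. After rescaling, the max cycle is exactly balanced (the product of its rates equals $(\lambda^*)^{|C|}$), so the on-cycle subsystem behaves like an irreducible nonnegative map whose circulating mass keeps all coordinates two-sidedly bounded; a Perron--Frobenius / maximal-cycle-mean structure prevents any rich coordinate from either running away or collapsing. The main obstacle throughout is that the system never settles to a fixed point --- the figures show persistent and even period-$3$ oscillation --- so I cannot invoke convergence to an equilibrium: the crux is to prove the geometric-defunding estimate and the two-sided on-cycle bounds as statements about time averages and invariant sets that survive the oscillations, most plausibly via a relative-entropy (Bregman) potential in the spirit of the gradient-descent interpretation noted above. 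A secondary subtlety I would handle is the case of several distinct max-efficiency cycles, where I must argue that the coupling through shared prices keeps the separate rich components within bounded ratios rather than letting them drift apart.
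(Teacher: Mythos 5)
Your reduction of the theorem to two claims --- (i) rescaled poor quantities vanish, (ii) rescaled rich quantities stay two-sidedly bounded --- matches the shape of the paper's statement, but the route you take to (i) has a fatal gap. Your central step is ``geometric defunding'': that the prices, and hence budgets, of all poor producers vanish, $p_i(t)/M \to 0$ for $i \in P$. This is precisely the paper's Open Problem \ref{op:concentration} (concentration of money), which the authors conjecture from simulations but explicitly cannot prove; the strongest statement they establish in this direction is Theorem \ref{thm:zero_cut}, that money flow \emph{across the cut} between $C$ and $N \setminus C$ vanishes, which is weaker than poor budgets vanishing. Your argument for defunding --- off-cycle goods have time-averaged log-return strictly below $\log \lambda^*$, so the multiplicative update bleeds their bids at a constant rate --- does not go through, because the return per dollar on good $j$ is $a_{ij}\,x_j(t)/p_j(t)$, which depends on the endogenous price: as a poor good is defunded, $p_j(t)$ falls and its per-dollar return rises, pulling money back toward it. The cycle structure of the $a_{ij}$ alone does not control these returns, and this self-correcting feature is exactly what makes money concentration hard. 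The four-player example in Note \ref{note:appabsorb} shows how badly naive monotonicity fails: there a poor player ends up bidding $100\%$ of his budget on another poor player whose quantity tends to zero.

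The paper circumvents money concentration altogether and argues on the goods side. Its engine is the exact multiplicative invariant $F(t) = \prod_{(i,j) \in C} b_{i,j}(t)\, x_i(t) = \alpha^t F(0)$ from Theorem \ref{thm:universal}, which yields the two-sided bounds $\gamma\,\alpha^{t/k} < x_i(t) < \zeta\,\alpha^{t/k}$ on the best cycle (Theorem \ref{lem:minmaxbound}) and a constant lower bound on on-cycle bids (Lemma \ref{lem:best_bidproduct}); combining these with a flow argument about ``bad splits'' (Lemma \ref{lem:bounded_fraction}) gives Theorem \ref{lem:100_percent}: each rich player receives a fraction of its predecessor's good tending to $100\%$. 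Poor quantities then vanish in the normalized economy because the fraction of rich goods reaching the poor tends to zero, and whatever circulates inside $N \setminus C$ decays through strictly bad cycles --- no statement about poor budgets is ever needed. Note also that your claim (ii) really requires this same exact invariant: a Perron--Frobenius heuristic does not apply, since the effective allocation matrix changes every round and, as the oscillation figures show, never converges. Finally, the paper assumes the best cycle is unique throughout Section \ref{sec:inequality}; the multiple-best-cycle case you defer as a ``secondary subtlety'' is not handled there and remains open.
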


We obtain the theorem by establishing that as time progresses producers of the ``max-efficiency'' class trade more and more among themselves, and their production is more efficient than that of any other group of producers. 

\smallskip

We find several other phenomena, such as the fact that other producers, who are not part of the ``max-efficiency'' class, can also grow. In fact, a producer can grow even if he is not part of any efficient production cycle, if he is instead 
``well-networked''.
For this we show the existence of 
{\em phase transitions} in the long term quantity of a producer depending on the quality of his connections.


%

\subsection{Open Problems}

We believe that we have only started scratching the surface of understanding
the dynamics of decentralized models of economic growth and open problems
abound.  First, there are a bunch of questions that we are not able to answer in our 
specific model.  For example, we would like to 
characterize the set of players that obtain increasing amount of goods versus those that
get vanishing amounts over time.

\smallskip

Second, we feel that we have only a very partial understanding of the class of mechanisms that
can be utilized.  E.g., to what extent are the results that we obtained peculiar to the specific rules
that we chose (trading posts and proportional bids)?  Which other types of rules will provide similar performance (or better in some sense such as less inequality)?  
Our work is part of a larger literature on learning how to bid in auctions and markets, and so a question is what mechanisms have good performance when players are learning how to participate in a market or game and adapting their behavior based on past performance.

\smallskip

Third, considering additive production functions can
obviously only be considered as a first step to understanding wider and more realistic classes of production
functions.  At a higher level yet, we believe that our whole framework is only a first step, and improved
modeling (as well as connection to existing macro-economic models) is interesting.

\subsection{Related Work}
Analysis of markets (economies) is central to economics. 
The growth model due to von Neuman \cite{vNeumann46} for production economies has been extensively studied, {\em e.g.,} see \cite{KMT56,Pas62,HM65,Lan12}, however mainly to understand the growth rate of the economy at equilibrium and under Leontief-type production functions. The classical Arrow-Debreu market model which involves both production and consumption has been studied for its equilibrium properties \cite{AD54}, and within CS for their computation and complexity \cite{DPSV,orlin2010improved,vegh2012concave,CSVY,Chen.plc,EY07,LeonFIXP}. 

The question of how the equilibrium prices are reached is analyzed under the natural price adjustment process due to Walras \cite{walras}, called Tatonnement -- increase the prices of over demanded goods and decrease for the under demanded goods. In particular, \cite{AH58,ABH59,AH60} showed that it converges to an equilibrium in markets with valuations restricted to the weak-gross-substitutes (WGS) property. The work within CS showed fast convergence of specific Tatonmment rules for WGS \cite{CF08,CMV05,CPV05} as well as a class of non-WGS markets \cite{CCD13}. All of these results focus on static markets where goods/money endowments of agents remain fixed. For such a static model, \cite{FZ07,zhang11} studies proportional response dynamics with trading post mechanism, showing fast convergence to a market equilibrium for a large class of valuations known as constant elasticity of substitution. 
On the contrary, our model is inherently dynamic, where amounts of both money and goods of an agent change based on the trade and production that happened in the previous round. 
Questions of growth exist in macroeconomics as well \cite{solow56,swan56,macro_book,Acemoglu} where the focus is on {\em technical progress}, {\em i.e.,} capital accumulation, population growth, etc., and typically uses Cobb-Douglas type production functions.


Like in our model, a recent work on trading networks due to \cite{ravi17} also studies trades along edges of an underlying graph, where firms trade contracts consisting of exchange, production and pricing. However, again the focus is on equilibrium existence, which is obtained when the underlying graph is a tree. 

There has been extensive work on understanding dynamics in games and auctions under various behavioural model of agents, such as best-response, multiplicative weight update, no regret learning (e.g., \cite{FS99,KPT09,DDK15,MPP15,PP16,RST17,DS16,MT12,HKMN11,bhawalkar2011welfare,lucier2010price,nisan2011best,BBN17,PpadP16,LST16,CD11,DK17,CDEG+14}). 
In the former the focus has been on convergence to an equilibrium, preferable Nash, and if not then (coarse) correlated equilibria, and the rate of covergence. In the latter the focus has been on either convergence points and their quality (price-of-anarchy), or dynamic mechanism such as ascending price auction to reach a solution (such as the Ausubel auction \cite{Ausubel04}). 
Also, the Trading post mechanism has been studied in other scenarios, such as rent seeking, allocating computational resources, matching markets, and fairness \cite{ILWM17,BKM17,FLZ09,Tullock80}.

Study of evolution naturally involves dynamics at the level of genes or species ({\em e.g.,} see \cite{Lotka10,Volterra28,GMM71,W78,Hofbauer98,akin,PNAS2:Chastain16062014,MPP15}). 
The Lotka-Volterra model \cite{Lotka10,Volterra28,W78} studies interdependence of animals and how they ``help'' or ``destroy'' each other based on their interactions and reproduction. There seems a high-level similarity between this model and ours, but the former may be thought of as a mechanism with a fixed splitting rule in our setting. 

\paragraph{Organization.} Section 2 formalizes our market model and class of mechanisms. The proportional response dynamic is defined in Section 3, and in Section 4 we analyze the growth under it and show that it is a universal mechanism. Section 5 analyzes inequality, while Sections 6 and 7 further characterizes its various properties. We conclude with Gini index simulations in Section 8.

\section{Model}

Let $N = \{1, \ldots, n\}$ be a set of players. Each player $i$ can make an eponymous good \footnote{In other words, each player $i$ makes good $i$; no other player can make good $i$ and player $i$ only knows how to make this type of good.} using the recipe given by his additive production function 
described by $\vec{a}_i = (a_{i,1}, \ldots, a_{i,n})$, where $a_{i,j} \geq 0$ is the amount made by player $i$ from one unit of good $j$. The player knows his own function.
A bundle of goods is a vector $\vec{y} = (y_1, \ldots, y_n)$, where $y_j \geq 0$ is the amount of good $j$. Given a bundle $\vec{y}$, player $i$ makes from it an amount $p_i(\vec{y}) = \sum_{j=1}^n a_{i,j} \cdot y_j$ of his good.

A production economy operates over time $t = 0, 1, 2, \ldots, \infty$. At every time unit $t$, each player $i$ produces an amount $x_i(t)$ of his good (from the bundle of ingredients he owns), then trades the good at the market. The bundle obtained by the player from trade at time $t$ is the input to his production at time $t+1$. We will assume 
that each player $i$ starts at $t=0$ by having an initial amount $x_i(0) > 0$ of his good that he directly enters trade with. 

\medskip

We will assume the economy (i.e. digraph $\vec{a}$) is \emph{strongly connected}.

\subsection{Mechanisms}

For the activity in the economy to be completely specified, we must state what mechanism is used for trade.

\begin{definition}[Abstract Mechanism]
	A market mechanism specifies how trade takes place in each round and is defined as an infinite sequence of ``splitting rules'' $\beta(t)_{t \geq 0}$, such that $\beta_{i,j}(t) \geq 0$ is the fraction received by player $i$ from good $j$ in round $t$ and $\sum_{k=1}^n \beta_{k,j}(t) \leq 1$. 
\end{definition}

Our interest is in mechanisms where the $\beta_{i,j}(t)$'s are determined by player $i$ locally from information that
he has in time $t$, and such local mechanisms will additionally include a communication protocol specifying the communication between the players.  
Our main analysis is for a specific mechanism with natural local decisions (trading posts) and very simple natural communication
(proportional bids).


\medskip
Our measure for how well the economy is doing at any point in time $t$ will be the total amount of goods in the economy: $X(t) = \sum_{i=1}^n x_i(t)$. 

\begin{definition}[Growth and decay]
	An economy (equipped with some mechanism for trade) \emph{grows} if $\lim_{t \to \infty} X(t) = \infty$ and \emph{vanishes} (or decays) if $\lim_{t \to \infty} X(t) = 0$. Similarly, a player $i$ grows if $\lim_{t \to \infty} x_i(t) = \infty$ and vanishes if $\lim_{t \to \infty} x_i(t) =0$, respectively.
\end{definition}

We are also interested in measuring the inequality in the economy, captured by the Gini index.

\begin{definition}[Gini index]
	Given a vector $\vec{u} = (u_1, \ldots, u_n)$, its Gini index is:
	$$
	G(\vec{u}) = \frac{\sum_{i=1}^n \sum_{j=1}^n |u_i - u_j|}{2n \cdot \sum_{i=1}^n u_i}
	$$
\end{definition}
The Gini index is normalized between $0$ and $1$ so that $0$ means perfect equality and $1$ maximum inequality. The latter is achieved when one entry is equal to $1$ and all other entries are zero.

\begin{example}
	Let $\vec{a} = [[1.1, 0], [0.2, 0]]$ be a two player economy with initial amounts $x_i(0) = 1$, i.e. player $1$ can make $1.1$ units (of good $1$) from one unit of good $1$ and zero units from one unit of good $2$, while player $2$ can make $0.2$ units (of good $2$) from one unit of good $1$ and zero units from one unit of good $2$. 
	If the mechanism used is to give each player $100\%$ of his own good in every round, then player $1$ will grow while player $2$ will vanish. If on the other hand the goods are split equally in every round, both players will vanish.
\end{example}

The calculations are in Appendix \ref{app:mechanisms}. The high level idea of the example is that player $1$ is ``productive'' by himself, but not so productive as to support both himself and another player receiving equal shares througout time.

\subsection{Cycles}
Important objects in our analysis will be simple cycles. In short a cycle will be a set of players $(i_1, \ldots, i_k, i_1)$, where all the $i_j$'s are different. For simplicity we will denote this by $C = (i_1, \ldots, i_k)$ and consider them as an ordered set of nodes. Given such a cycle $C$, we will override notation and denote an edge in $C$ by $(i, j) \in C$, meaning that $(1)$ $i,j \in C$ and $(2)$ $j$ is the successor of $i$ in the ordering $(i_1, \ldots, i_k, i_1)$, simply saying that we consider how useful good $i$ is for player $j$. 


\begin{definition}[Good and bad cycle]
	A cycle $C$ in the digraph $\vec{a}$ is \emph{good} if the product of weights along the cycle is strictly greater than one (i.e. $\prod_{(i,j) \in C} a_{i,j} > 1$) 
	and \emph{bad} if the product is strictly less than one.
\end{definition}
\begin{definition}[Best cycle]
	A cycle $C$ in the digraph $\vec{a}$ is the \emph{best cycle} if its geometric mean is the highest among all the cycles.
\end{definition}

\subsection{General Mechanisms}
We first observe that if all the cycles in the weighted directed graph induced by the values $a_{i,j}$ are ``good'', then every non-wasteful mechanism will lead to growth, while if all the cycles are ``bad'', no mechanism can save the economy from shrinking to zero in the limit. Note an economy is strongly connected if the directed graph with nodes $N$ and weights $\vec{a}$ is so.

\begin{proposition} \label{thm:allbad} 
	An economy vanishes with any mechanism if and only if all the cycles are bad.
\end{proposition}
A mechanism $\beta(t)_{t\geq 0}$ is \emph{non-wasteful} if it never throws away goods (i.e. $\sum_{i=1}^n \beta_{i,j}(t) = 1$ for all $j,t$) or allocates goods to players that do not need them (i.e. if $a_{i,j} = 0$ then $\beta_{i,j}(t) = 0$ for all $i,j,t$).

\begin{proposition} \label{thm:allgood}
	A strongly connected economy grows with any non-wasteful mechanism if and only if $(i)$ it has least one good cycle and $(ii)$ each directed cycle is either good or has zero edges all along.
\end{proposition}

For an economy with additive production to grow, there must exist at least one good cycle, say $C$. Moreover, there exists a mechanism that can grow such an economy by having the players along $C$ send their good to their successor in $C$. This gives the next statement.

\begin{proposition} \label{thm:onegood}
	A strongly connected economy grows with some non-wasteful mechanism if and only if it has at least one good cycle.
\end{proposition}

In our model, a universal mechanism will enable growth precisely when the economy has at least one good cycle.

\begin{definition}
	A mechanism $\mathcal{M}$ is called {\em universal} if for any economy with additive production that has at least one good cycle, the economy grows by using the splitting rule given by $\mathcal{M}$.
\end{definition}

\section{Trading Post}

From now on we focus on studying the Trading post mechanism. Each player $i$ will be initially endowed with some amount of artificial currency it can use to acquire goods that they like. In a round, every player runs a contest to decide how to allocate its good. The players submit bids on the goods they are interested in buying, after which every good $j$ is allocated in fractions proportional to the bids. Thus if the bids in some round are $b_{k,\ell}$, for all $k,\ell \in N$, then player $i$ receives the following fraction of good $j$:
\begin{equation*}
f_{i,j} = \left\{
\begin{array}{ll}
\frac{b_{i,j}}{\sum_{k=1}^{n} b_{k,j}} & \mbox{if} \; b_{i,j} > 0\\
0 & \mbox{otherwise}
\end{array}
\right.
\end{equation*}
Each player collects the money made from selling his good, and this money will be his budget in the next round. 

\smallskip
We will analyze Trading post when the players update their bids in proportion to the contribution of each good in the production from last round. This corresponds to the \emph{proportional response} dynamic, which has been studied before in Fisher markets; we generalize its definition to capture exchange settings, where all the players buy and sell. Formally, we have the following dynamical system.

\begin{definition}[Proportional dynamic]
	The initial amount of each good $i$ is $x_i(0)$ and the initial bids of player $i$ are $b_{i,j}(0)$, which sum up to an initial budget of $B_i(0)$.
	At each time $t$, the following steps take place:
	\begin{itemize}
		\item	\textbf{\emph{Exchange of goods.}} Every player $i$ brings an amount $x_i(t)$ of good $i$ and money $B_i(t)$, which is split into bids $b_{i,j}(t)$. Then player $i$ receives an amount $y_{i,j}(t)$ of each good $j$, where 
		\begin{equation*}
		y_{i,j}(t) = \left\{
		\begin{array}{ll}
		\left(\frac{b_{i,j(t)}}{\sum_{k=1}^n b_{k,j}(t)}\right) \cdot x_{j}(t), & \mbox{if} \; b_{i,j}(t) > 0\\
		0, & \mbox{otherwise}
		\end{array}
		\right.
		\end{equation*}	
		\item	\textbf{\emph{Production.}}	After trade, each player $i$ produces its good from the bundle purchased, where the amount is
		$$
		x_i(t+1) = \sum_{j=1}^{n} a_{i,j} \cdot y_{i,j}(t).
		$$
		\item	\textbf{\emph{Bid update.}} Each player collects the money made from selling: $B_i(t+1) = \sum_{k=1}^n b_{k,i}(t)$, and updates his bids proportionally to the contribution of each good in production \footnote{The bid fractions are unchanged if no production took place. This will turn out to not matter however since we will study non-degenerate starting states, which will imply that throughout time every player will be able to produce a non-zero (but possibly very small) amount.}:
		$$
		b_{i,j}(t+1) = \left( \frac{a_{i,j} \cdot y_{i,j}(t)}{x_{i}(t+1)} \right) \cdot B_i(t+1) 
		$$	
	\end{itemize}
\end{definition}

For the purpose of our results, we can w.l.o.g. normalize the money in the economy so that $\sum_{i,j=1}^n b_{i,j}(0) = 1$.
We will also assume the starting configuration is non-degenerate: $(a)$ $x_i(0) > 0$ and $(b)$ the players bid on goods worth something to them, so $b_{i,j}(0) > 0 \iff a_{i,j} > 0$. 



\section{Growth}

Our first main result is that the trading post mechanism with proportional updates leads to growth of the economy (whenever growth is possible). We establish this by analyzing the volumes of on cycles and showing they expand by a constant factor in each round for every good cycle.

We additionally show that all the players on the best cycle are guaranteed to grow (if that cycle is good), and in fact their growth is within a constant factor 
of the optimal growth (that could be achieved under some optimal allocation of resources in every round, where the constant may depend on $n$).
Players with good self-loops and players on ``good enough'' cycles will also grow infinitely.

\begin{theorem}[Universal growth] \label{thm:universal}
	Trading post with proportional updates is a universal mechanism.
\end{theorem}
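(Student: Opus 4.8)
The plan is to reduce the theorem to an \emph{exact} linear invariant tracked along a single good cycle. By Theorem~\ref{thm:onegood} the hypothesis ``growth is possible'' is equivalent to the existence of a good cycle, so I would fix any good cycle $C = (c_1, \ldots, c_k)$ (with $c_{k+1} := c_1$) and set $\rho := \prod_{(i,j)\in C} a_{i,j} > 1$. It then suffices to show that the players on $C$ cannot all stay bounded, since $X(t) \ge \max_{i\in C} x_i(t)$.

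First I would rewrite the dynamics in convenient coordinates: the budgets $B_i(t)$, the \emph{bid fractions} $q_{i,j}(t) := b_{i,j}(t)/B_i(t) \in [0,1]$ (so $\sum_j q_{i,j}(t)=1$), and the prices $p_j(t) := \sum_k b_{k,j}(t)$, observing the identity $p_j(t) = B_j(t+1)$ and the conservation of total money $\sum_i B_i(t) = 1$ (hence $B_i(t)\le 1$). A preliminary lemma, proved by a coupled induction using strong connectivity and the non-degenerate start (cf.\ the footnote in the definition of the mechanism), establishes that $x_i(t)>0$ and $B_i(t)>0$ for all $i,t$ and that the support is preserved, $b_{i,j}(t)>0 \iff a_{i,j}>0$. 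This guarantees that every logarithm appearing below is finite, in particular along the edges of $C$.

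The heart of the argument is to fuse the two update rules into a single multiplicative recurrence for the bid fractions. Substituting $y_{i,j}(t) = q_{i,j}(t)B_i(t)x_j(t)/p_j(t)$ and $p_j(t)=B_j(t+1)$ into the proportional update $q_{i,j}(t+1) = a_{i,j}y_{i,j}(t)/x_i(t+1)$ gives
\begin{equation*}
q_{i,j}(t+1) \;=\; q_{i,j}(t)\cdot \frac{a_{i,j}\,B_i(t)\,x_j(t)}{B_j(t+1)\,x_i(t+1)}.
\end{equation*}
Taking logarithms and summing over the edges $(c_\ell,c_{\ell+1})$ of $C$, the budget factors telescope into $S(t)-S(t+1)$ with $S(t):=\sum_{i\in C}\log B_i(t)$, the amount factors telescope into $L(t)-L(t+1)$ with $L(t):=\sum_{i\in C}\log x_i(t)$, and the weights contribute $\log\rho$. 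Writing $U(t):=\sum_{(i,j)\in C}\log q_{i,j}(t)$, everything collapses to the exact invariant $\Psi(t):=U(t)+S(t)+L(t)$ satisfying $\Psi(t+1)=\Psi(t)+\log\rho$, so that $\Psi(t)=\Psi(0)+t\log\rho$.

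I would then read off growth from signs: since each $q_{i,j}(t)\le 1$ we have $U(t)\le 0$, and since each $B_i(t)\le 1$ we have $S(t)\le 0$, whence $L(t)=\Psi(0)+t\log\rho-U(t)-S(t)\ge \Psi(0)+t\log\rho \to \infty$ because $\rho>1$. As $\log\max_{i\in C}x_i(t)\ge L(t)/k$, the cycle contains a player whose amount diverges, giving $X(t)\to\infty$. I expect the only genuinely delicate step to be the positivity/support-preservation lemma: the telescoping is exact and purely algebraic, but it is valid only as long as no relevant bid, budget, or amount collapses to zero (which would send a logarithm to $-\infty$ and break the invariant); everything else — the sign bounds $U,S\le 0$ and money conservation — is immediate. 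The rate $\log\rho$ per step also foreshadows the ``growth as fast as possible'' and inequality results, since the same invariant applied to the \emph{best} cycle should pin down the optimal exponent.
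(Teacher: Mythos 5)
Your proof is correct and is essentially the paper's own argument written in logarithmic coordinates: your invariant $\Psi(t)=U(t)+S(t)+L(t)$ is exactly $\log F(t)$ for the paper's potential $F(t)=\prod_{(i,j)\in C} b_{i,j}(t)\,x_i(t)$, your telescoping of the budget terms is the same cancellation of the $B_j(t+1)$ factors along the cycle, and your sign bounds $U(t),S(t)\le 0$ play the role of the paper's observation that $b_{i,j}(t)\le 1$. The only cosmetic differences are that you make the positivity/support-preservation induction explicit (the paper delegates this to the non-degeneracy assumption and its footnote) and that you conclude via $\max_{i\in C}x_i(t)\ge e^{L(t)/k}$ where the paper invokes the AM--GM inequality; both steps are immediate.
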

\begin{proof}
The high level idea is to consider the volume of cycles as follows.
	Let $\vec{x}(0)$ and $\vec{b}(0)$ be any starting quantities and bids with the property that $x_i(0) > 0$ for all $i$ and $b_{i,j}(0) > 0$ for all $i,j$ where $a_{i,j} > 0$.
	Let $C$ be any good cycle and denote by $\alpha = \prod_{(i,j)\in C} a_{i,j} > 1$ the product along $C$.
	Take the product of all the quantities and bids on $C$, given by the function
	$$
	F(t) = \prod_{(i,j) \in C} b_{i,j}(t) \cdot x_i(t), \; \mbox{for all} \; t = 0, 1, \ldots, \infty
	$$
	Since the starting state is non-degenerate, $F(0) = \prod_{(i,j) \in C} b_{i,j}(0) \cdot x_i(0) > 0$.
	Rewriting the bids at time $t+1$ as a function of the quantities and bids at time $t$, we obtain
	\begin{eqnarray*}
		F(t+1) & = & \prod_{(i,j) \in C} b_{i,j}(t+1) \cdot  x_i(t+1) \\
		& = &
		\left(\prod_{(i,j) \in C} 
		\frac{a_{i,j} \cdot y_{i,j}(t)}{x_i(t+1)} \cdot B_i(t+1) \right)
		\left( \prod_{i \in C} x_i(t+1) \right) \\ 
		& = &
		\prod_{(i,j) \in C} 
		\left( a_{i,j} \cdot y_{i,j}(t) \cdot B_i(t+1) \right) \\
		& = & 
		\prod_{(i,j) \in C}  \left( a_{i,j} \cdot
		\left(\frac{b_{i,j}(t)}{\sum_{k=1}^n b_{k,j}(t)} \right) \cdot x_j(t)  \cdot  B_i(t+1) \right)
		\\
		& = & 
		\prod_{(i,j) \in C} \left( a_{i,j} \cdot
		\left(\frac{b_{i,j}(t)}{B_j(t+1)} \right) \cdot x_j(t)  \cdot  B_i(t+1) \right)
		\\
		& = & \prod_{(i,j) \in C}\left( a_{i,j} \cdot b_{i,j}(t) \cdot x_j(t) \right) \\
		& = & \left(\prod_{(i,j) \in C} a_{i,j}\right) \left(\prod_{(i,j) \in C} b_{i,j}(t) \cdot x_j(t) \right) \\
		& = & \alpha \cdot F(t) 
	\end{eqnarray*}
	It follows by induction that $F(t) = \alpha^t \cdot F(0)$, for all $ t = 0,1,\ldots, \infty$. Since $\alpha > 1$, we have $\lim_{t \to \infty} F(t) = \infty$.
	The total amount of money is $1$, so $b_{i,j}(t) \leq 1$ for all $i,j,t$, thus $\prod_{i \in C} x_i(t) \geq F(t)$.
	Let $\ell$ denote the length of $C$. From the geometric-arithmetic mean, we have
	$$\sum_{i \in C} x_i(t) \geq \ell \cdot \sqrt[\ell]{\prod_{i \in C} x_i(t)} \geq \ell \cdot \sqrt[\ell]{F(t)}.
	$$ 
	Then $\lim_{t \to \infty} \sum_{i \in C} x_i(t) = \infty$, so for every good cycle, the sum of quantities of the players along the cycle grows infinitely.
\end{proof}

A simulation for two players can be found in Figure \ref{fig:two}, showing how the amounts, budgets, fractions invested by the players on each other, and the Gini coefficient evolve over time in an economy where the only good cycle contains both players. In Figure \ref{fig:n=2_F} it can be seen that each player spends $100\%$ of his money on buying the good of the other player.
 Figure \ref{fig:n=2_Gx} shows the Gini coefficient in terms of the amounts, which oscillates for the whole duration of the time simulated, and Figure \ref{fig:n=2_G} the Gini coefficient in terms of budgets, which eventually reaches a stable point.

\begin{figure}[h!]
	\centering
	\subfigure[Two player economy.]
	{
		\includegraphics[scale=0.49]{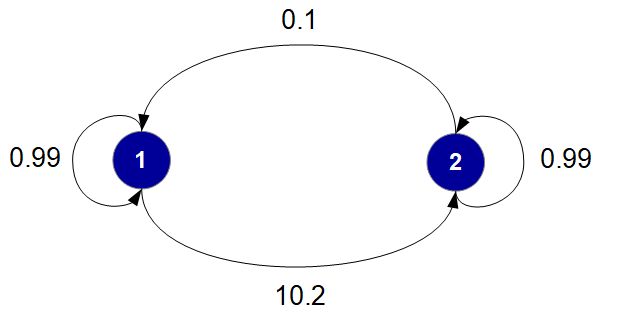}
		\label{fig:n=2}
	}
	\subfigure[Amounts over time]
	{
		\includegraphics[scale = 0.49]{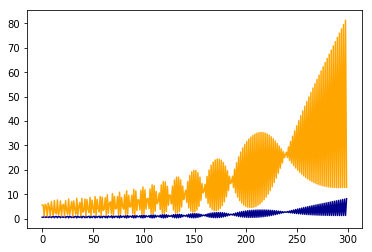}
		\label{fig:n=2_x}
	}
	\subfigure[For each player, fraction of his budget that he invests on the other player]
	{
		\includegraphics[scale=0.49]{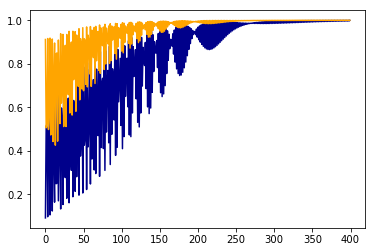}
		\label{fig:n=2_F}
	}
	\subfigure[Budget of player 1]
	{
		\includegraphics[scale=0.49]{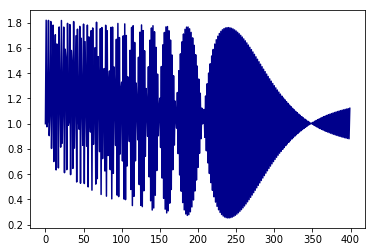}
		\label{fig:n=2_B1}
	}
	\subfigure[Budget of player 2]
	{
		\includegraphics[scale=0.49]{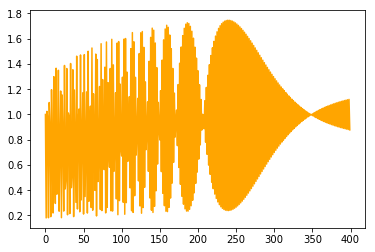}
		\label{fig:n=2_B1}
	}
	\subfigure[Gini coefficient: amounts]
	{
		\includegraphics[scale=0.49]{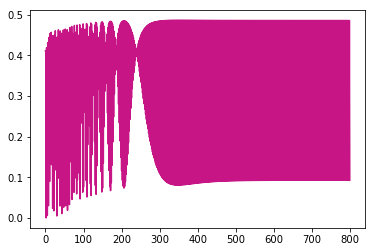}
		\label{fig:n=2_Gx}
	}
	\subfigure[Gini coefficient: budgets]
	{
		\includegraphics[scale=0.49]{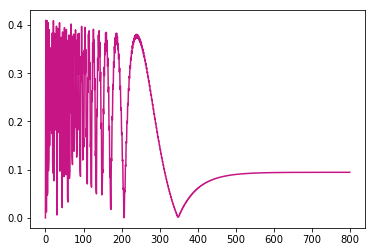}
		\label{fig:n=2_G}
	}
	\caption{Two-player economy $\vec{a} = [[0.99, 0.1], [10.2, 0.99]]$. The $X$ axis shows time (number of rounds) and $Y$ axis the variable plotted. Player 1 is marked with blue and player 2 with orange.}
\label{fig:two}
\end{figure}

Next we study the question of which players grow in the limit.

\begin{corollary} \label{cor:selfloop}
	Each player with a good self-loop grows.
\end{corollary}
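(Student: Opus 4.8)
The plan is to treat a good self-loop as a good cycle of length one and then invoke the invariant already constructed in the proof of Theorem~\ref{thm:universal}. A self-loop at player $i$ is simply the cycle $C = (i)$ whose single edge is $(i,i)$, and it is good precisely when $\alpha := a_{i,i} > 1$. So morally there is nothing new to prove: the corollary is the special case $\ell = 1$ of the universal growth theorem, and the only thing I need to check is that the length-one case actually delivers growth of the \emph{individual} player rather than merely of a sum.

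First I would instantiate the potential $F$ from that proof on this one-node cycle, so that $F(t) = b_{i,i}(t)\cdot x_i(t)$. The same telescoping computation carried out there applies verbatim (here the successor of $i$ in the cycle is $i$ itself, so the identification $x_j(t) = x_i(t)$ is immediate), yielding $F(t+1) = \alpha\, F(t)$ and hence $F(t) = \alpha^t F(0)$. Non-degeneracy of the starting state supplies $F(0) > 0$: since $a_{i,i} > 1 > 0$ we have $b_{i,i}(0) > 0$, and $x_i(0) > 0$ by assumption. As $\alpha > 1$, this forces $F(t) \to \infty$.

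The one genuinely new point---the reason the corollary says more than the bare conclusion of Theorem~\ref{thm:universal}---is that for a length-one cycle the ``sum of quantities along the cycle'' degenerates to the single quantity $x_i(t)$. Concretely, with the money normalized to $1$ we have $b_{i,i}(t) \le 1$, so $x_i(t) \ge b_{i,i}(t)\,x_i(t) = F(t) = \alpha^t F(0) \to \infty$, i.e. $\lim_{t\to\infty} x_i(t) = \infty$. I do not anticipate a real obstacle: for cycles of length $\ell \ge 2$ the theorem controls only the sum of the cycle's quantities (and the arithmetic--geometric-mean step there cannot separate the individual players), whereas at $\ell = 1$ that step collapses to the trivial bound above and pins down the lone player directly.
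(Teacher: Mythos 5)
Your proposal is correct and follows exactly the paper's own argument: specialize the invariant $F(t) = b_{i,i}(t)\,x_i(t) = a_{i,i}^t\,F(0)$ from Theorem~\ref{thm:universal} to the length-one cycle, and conclude via $x_i(t) \geq b_{i,i}(t)\,x_i(t) = F(t) \to \infty$ using $b_{i,i}(t) \leq 1$. Your extra remarks (verifying the telescoping identity when $j=i$, and that non-degeneracy gives $F(0)>0$) are just explicit versions of steps the paper leaves implicit.
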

\begin{proof}
	Let $i$ be any player for which $a_{i,i} > 1$. By Theorem \ref{thm:universal}, $F(t) = b_{i,i}(t) \cdot x_i(t) = a_{i,i}^t \cdot F(0)$. We have $x_i(t) \geq b_{i,i}(t) \cdot x_i(t)$ and $\lim_{t \to \infty} F(t) = \infty$, so $\lim_{t \to \infty} x_i(t) = \infty$ as required.
\end{proof}

\begin{corollary} \label{cor:all}
	In economies where all the cycles are good, every player grows.
\end{corollary}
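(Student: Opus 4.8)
The plan is to upgrade the cycle-level conclusion of Theorem~\ref{thm:universal} (growth of the \emph{sum} along a good cycle) to a genuinely per-player statement, by introducing node potentials that linearize the ``all cycles good'' hypothesis and then running a balancing argument on the potential-scaled quantities. Since $\vec a$ is strongly connected every player lies on a simple cycle (take a shortest closed walk through the player and delete repeated vertices), and under the hypothesis that cycle is good; but Theorem~\ref{thm:universal} only controls $\prod_{i\in C}x_i(t)$, so the whole content here is to rule out a coordinate lagging behind.

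First I would record the structural consequence of the hypothesis. Writing $\alpha_C=\prod_{(i,j)\in C}a_{i,j}$, ``all cycles good'' means $\alpha_C^{1/|C|}>1$ for every simple cycle $C$, so $\rho=\min_C \alpha_C^{1/|C|}>1$. By the minimum-cycle-mean / shortest-path duality applied to the weights $\log a_{i,j}$, there exist node potentials $\pi_i>0$ with $a_{i,j}\ge \rho\,\pi_i/\pi_j$ for every edge $(i,j)$; the companion bound $a_{i,j}\le R\,\pi_i/\pi_j$ holds with $R=\max_{(i,j)}a_{i,j}\pi_j/\pi_i<\infty$. Feasibility of the potentials is exactly the statement that every cycle sum of $\log a_{i,j}$ is at least $|C|\log\rho$, which holds by the choice of $\rho$.

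Next I would pass to the scaled quantities $\hat x_i(t)=x_i(t)/\pi_i$. Since $B_j(t+1)=\sum_k b_{k,j}(t)$, the receipt coefficients $Q_{i,j}(t)=b_{i,j}(t)/B_j(t+1)$ satisfy $\sum_i Q_{i,j}(t)=1$, i.e. $Q(t)$ is column-stochastic, and (because bids stay positive exactly on edges, as the non-degenerate start guarantees) its support is the fixed, strongly connected edge set of $\vec a$. Feeding the two potential bounds into $x_i(t+1)=\sum_j a_{i,j}Q_{i,j}(t)x_j(t)$ yields the sandwich $\rho\,(Q(t)\hat x(t))_i\le \hat x_i(t+1)\le R\,(Q(t)\hat x(t))_i$. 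Summing the lower bound and using column-stochasticity gives $\hat X(t+1)\ge \rho\,\hat X(t)$ for $\hat X=\sum_i\hat x_i$, hence $\hat X(t)\ge\rho^t\hat X(0)\to\infty$, re-deriving global growth with the sharp rate $\rho$.

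The hard part is the final step: up to the bounded multiplicative distortion $[\rho,R]$, the scaled system is iterated multiplication by column-stochastic matrices with fixed primitive support, which spreads mass across the strongly connected graph, so one expects the shares $\hat x_i(t)/\hat X(t)$ to stay bounded below and each $\hat x_i(t)$ to inherit the $\rho^t$ growth. I would try to prove this by noting that over any window of at most $n$ steps strong connectivity supplies a directed path from the current maximum coordinate to player $i$, while column-stochasticity forces some entry in each column to be $\ge 1/n$, transferring the maximum (amplified by a factor $\rho$ per step) toward $i$. The obstacle is that the individual coefficients $Q_{i,j}(t)$ along such a path may decay, and proportional-response updates can even self-reinforce a player's underinvestment; to keep this under control I would invoke the invariant $\prod_{(i,j)\in C}b_{i,j}(t)\,x_i(t)=\alpha_C^{t}F_C(0)$ from the proof of Theorem~\ref{thm:universal} together with the crude upper bound $x_k(t)\le M^tX(0)$ (where $M=\max_{(i,j)}a_{i,j}$, using that goods are fully sold each round) to certify that every edge bid decays at most geometrically. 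The genuinely delicate point — showing that this decay rate is dominated by the per-step $\rho$-amplification, so that $\hat x_i(t)\to\infty$ rather than merely staying positive — is where the main quantitative effort lies and is the step I expect to be the true bottleneck. (Players with a good self-loop are already covered cleanly by Corollary~\ref{cor:selfloop}, so one may isolate the genuinely networked players for this argument.)
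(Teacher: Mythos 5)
Your proposal has a genuine gap, and it also overshoots the intended scope of the statement. The paper's proof is two lines: under its conventions, ``all the cycles are good'' includes the self-loops, so the hypothesis forces $a_{i,i}>1$ for every player $i$, and Corollary~\ref{cor:selfloop} then gives growth of each player individually. There are no ``genuinely networked players'' left over to handle --- the case you isolate in your parenthetical remark (players whose growth must come from longer cycles rather than a good self-loop) simply does not arise under the hypothesis as the paper reads it. Recognizing this would have ended the proof immediately.

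The route you propose instead is attacking a strictly harder statement, and the place where it stops is exactly the hard part. Your aggregate bookkeeping is fine: the potentials $\pi_i$ with $a_{i,j}\ge\rho\,\pi_i/\pi_j$ exist by min-cycle-mean duality, the matrix $Q_{i,j}(t)=b_{i,j}(t)/B_j(t+1)$ is indeed column-stochastic, and summing the resulting inequality does give $\hat X(t)\ge \rho^t\hat X(0)$. But passing from growth of the sum to growth of each coordinate is precisely what cannot be done by soft mixing arguments here: the support of $Q(t)$ is fixed, yet its individual entries are \emph{not} bounded below, and the proportional-response dynamics can drive particular bids toward zero. Your own fix --- the invariant $\prod_{(i,j)\in C}b_{i,j}(t)x_i(t)=\alpha_C^t F_C(0)$ plus the crude bound $x_k(t)\le M^t X(0)$ --- only certifies at-most-geometric decay of bids, with no comparison between that decay rate and the per-step amplification $\rho$; you explicitly flag this comparison as an unresolved ``bottleneck,'' which means the proposal is a plan, not a proof. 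Note that this difficulty is of the same nature as the paper's own Open Problem (whether every player on a good cycle grows): the paper resolves per-player growth only for the best cycle (Theorem~\ref{lem:minmaxbound}, via the upper bound of Lemma~\ref{lem:upperbound}) and for ``good enough'' cycles, not from cycle-level invariants in general. So the step you defer is not a routine estimate one could expect to fill in; within the toolkit of this paper it is open, and the corollary is only trivial because its hypothesis hands every player a good self-loop.
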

\begin{proof}
	If all the cycles are good, each self-loop is good. By Corollary \ref{cor:selfloop}, each player grows.
\end{proof}

\begin{lemma} \label{lem:upperbound}
	Let $C$ be one of the best cycles. Then there exists a constant $\zeta > 0$ such that for each player $i \in N$ and time $t$, 
	$
	x_i(t) < \zeta \cdot \alpha^{t/|C|},
	$
	where $\alpha = \prod_{(i,j) \in C} a_{i,j}$ is the product along $C$.
\end{lemma}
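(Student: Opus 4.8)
The plan is to find a fixed set of positive node weights $c_1,\dots,c_n$ that ``discount'' each player so that the weighted total $V(t)=\sum_{i} c_i x_i(t)$ grows by a factor of at most $\lambda := \alpha^{1/|C|}$ per round, where $\lambda$ is the geometric mean of the best cycle $C$. Since $C$ is a best cycle, $\lambda$ is the maximum cycle geometric mean, so every cycle $C'$ in $\vec a$ satisfies $\prod_{(i,j)\in C'} a_{i,j}\le \lambda^{|C'|}$. The algebraic target is a system of weights satisfying
$$ a_{i,j}\,c_i \le \lambda\, c_j \quad \text{for every pair } (i,j) \text{ with } a_{i,j}>0. $$
Writing $\pi_i=\log c_i$, this is exactly the difference-constraint system $\pi_i-\pi_j \le \log\lambda-\log a_{i,j}$.

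First I would establish feasibility of this system, which is the crux of the argument. By the standard characterization of difference constraints (equivalently, the Bellman--Ford / no-negative-cycle condition for shortest-path potentials), the system has a solution if and only if the constraint graph, carrying an edge of length $\log\lambda-\log a_{i,j}$ for each pair with $a_{i,j}>0$, has no negative-length cycle. A cycle there corresponds to a directed cycle $C'$ in $\vec a$, whose total length is $|C'|\log\lambda-\log\prod_{(i,j)\in C'} a_{i,j}\ge 0$, precisely because $\lambda$ is the maximum cycle mean. Hence no negative cycle exists and positive weights $c_i=e^{\pi_i}$ as above can be chosen; strong connectivity guarantees the potentials are finite.

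Next I would push the weighted total through one step of the dynamic. Recall $x_i(t+1)=\sum_j a_{i,j}\,f_{i,j}(t)\,x_j(t)$, where $f_{i,j}(t)=b_{i,j}(t)/B_j(t+1)$ are the allocation fractions of good $j$, which satisfy $f_{i,j}(t)\ge 0$ and $\sum_i f_{i,j}(t)\le 1$ for each $j$ (terms with $a_{i,j}=0$ contribute nothing and impose no constraint). Then, using $c_i a_{i,j}\le \lambda c_j$ in the first inequality,
$$ V(t+1)=\sum_j x_j(t)\sum_i c_i a_{i,j} f_{i,j}(t) \;\le\; \sum_j x_j(t)\,\lambda c_j\sum_i f_{i,j}(t)\;\le\; \lambda\sum_j c_j x_j(t)=\lambda V(t). $$
By induction $V(t)\le \lambda^t V(0)$ for all $t$.

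Finally, since all weights are positive, for every player $i$ we obtain $x_i(t)\le V(t)/c_i \le (V(0)/c_i)\,\lambda^t$, so taking any $\zeta$ slightly larger than $V(0)/\min_i c_i$ yields $x_i(t)<\zeta\,\lambda^t=\zeta\,\alpha^{t/|C|}$ for all $i$ and $t$, as required. The main obstacle is the feasibility step: everything else is routine once the weights $c_i$ exist, and their existence is exactly where the ``best cycle'' hypothesis (maximality of the cycle mean) enters, ruling out a negative cycle in the reweighted graph.
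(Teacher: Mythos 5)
Your proof is correct, but it takes a genuinely different route from the paper's. The paper's own proof asserts, with only an informal appeal to ``passing as much flow as possible through a best cycle,'' that $x_i(kt) < \left(\sum_j x_j(0)\right)\cdot \alpha^{t}$ for $k=|C|$, and then handles the remainder $t \bmod k$ with a constant involving the largest coefficient $\Delta$. As literally stated that intermediate inequality is not quite right: an amount can pick up a one-time boost along a non-cyclic path segment before settling onto cycles. For instance, in the economy $[[0.99,0.1],[10.2,0.99]]$ (where $\alpha=1.02$, $k=2$), nearly all of good $1$ can reach player $2$ and then sit on his self-loop, giving $x_2(2)\approx 10.1$, far above $\left(\sum_j x_j(0)\right)\cdot 1.02$; so the paper's argument implicitly needs an additional constant of the kind it introduces elsewhere (the factor $D$ in the proof of Lemma \ref{lem:appbounded_fraction}), which is harmless for the lemma's truth since only a constant $\zeta$ is claimed, but is a gap in rigor. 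Your potential-function argument closes this cleanly: the weights $c_i>0$ with $a_{i,j}c_i\le\lambda c_j$ exist exactly because $\lambda$ is the maximum cycle geometric mean (the no-negative-cycle condition for the difference constraints), the quantity $V(t)=\sum_i c_i x_i(t)$ then contracts by $\lambda$ per round for any splitting rule with $\sum_i f_{i,j}(t)\le 1$, and the weights $c_i$ absorb precisely the path-segment boost that the paper glosses over. Your version is also more general (it applies verbatim to arbitrary, even wasteful, mechanisms rather than just trading post), dispenses with the $t\bmod k$ bookkeeping, and isolates exactly where best-cycle maximality is used; what the paper's approach buys in exchange is only brevity and intuition.
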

\begin{proof}
	From Theorem \ref{thm:universal}, we obtain a function $F$ with the property: $F(t) = \prod_{(i,j) \in C} b_{i,j}(t) \cdot x_i(t) = \alpha^t \cdot F(0)$, where $F(0) > 0$ since the starting configuration is non-degenerate. Let $k = |C|$.
	
	The maximum growth given a number of rounds divisible by $k$ can be achieved by passing as much flow as possible through one of the best cycles, so for any player $i \in N$ and any $ t \in \mathbb{N}$, 
	$$
	x_{i}(k  t) < \left(\sum_{j=1}^n x_j\right) \cdot \alpha^{\frac{k\cdot t}{k}} =\left(\sum_{j=1}^n x_j\right) \cdot \alpha^{t}
	$$
	Then for any time $t = kt' + r$ with $0 \leq r < k$, we have the following bound, where $\Delta = \max_{(\ell_1,\ell_2)} a_{\ell_1,\ell_2}$ is the maximum edge in the graph and $\zeta = \left(\sum_{j=1}^n x_j\right) \cdot \max\{1,\Delta^{r}\}/\min\{1,\alpha^{r/k}\}$ is a constant, dependent on the graph and the starting configuration, but independent of time:
	$$
	x_i(t) < \left(\sum_{j=1}^n x_j\right) \cdot \alpha^{t'} \cdot \Delta^{r} 
	=  \left(\sum_{j=1}^n x_j\right) \cdot \alpha^{t/k} \cdot  \frac{\Delta^{r}}{\alpha^{t/k-t'}} 
	< \left(\sum_{j=1}^n x_j\right) \cdot \alpha^{t/k} \cdot \frac{\max\{1,\Delta^{r}\}}{\min\{1,\alpha^{r/k}\}} 
	= \alpha^{t/k} \cdot \zeta
	$$
	This completes the bound.
\end{proof}

\begin{lemma} \label{lem:best_bidproduct}
	Let $C$ be one of the best cycles. Then there exists a constant $\delta > 0$ such that for all $t \in \mathbb{N}$, the bids along the cycle are bounded from below by $\delta$; that is, $b_{i,j}(t) > \delta$ for all $(i,j) \in C$.
\end{lemma}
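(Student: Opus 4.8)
The plan is to bound the \emph{product} of the bids along $C$ first, and then descend to each individual bid. Write $k = |C|$ and $\alpha = \prod_{(i,j)\in C} a_{i,j}$. The starting point is the potential identity from the proof of Theorem \ref{thm:universal}: the quantity $F(t) = \prod_{(i,j)\in C} b_{i,j}(t)\cdot x_i(t)$ satisfies $F(t) = \alpha^t F(0)$, with $F(0) > 0$ by non-degeneracy. (This identity is purely algebraic and never used $\alpha > 1$, so it is available for any cycle, in particular a best cycle, regardless of whether it happens to be good.) Since every node of $C$ is the tail of exactly one edge of $C$, the potential factors as $F(t) = \left(\prod_{(i,j)\in C} b_{i,j}(t)\right)\cdot\left(\prod_{i\in C} x_i(t)\right)$, whence
\[
\prod_{(i,j)\in C} b_{i,j}(t) = \frac{\alpha^t F(0)}{\prod_{i\in C} x_i(t)}.
\]

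Next I would control the denominator from above using Lemma \ref{lem:upperbound}, which gives $x_i(t) < \zeta\,\alpha^{t/k}$ for every player $i$ and every time $t$. Multiplying this over the $k$ nodes of $C$ yields $\prod_{i\in C} x_i(t) < \zeta^k\,\alpha^{t}$. Substituting into the display above makes the time dependence cancel exactly, leaving the uniform lower bound
\[
\prod_{(i,j)\in C} b_{i,j}(t) > \frac{\alpha^t F(0)}{\zeta^k\,\alpha^{t}} = \frac{F(0)}{\zeta^k} =: \delta > 0.
\]
The key point is the precise matching of exponents: Lemma \ref{lem:upperbound} supplies a per-player growth rate of $\alpha^{1/k}$, so the product over the $k$ cycle members grows like $\alpha^t$, which is exactly what is needed to annihilate the $\alpha^t$ appearing in $F(t)$.

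Finally I would pass from the product bound to the per-edge bound. Recall (as already used in the proof of Theorem \ref{thm:universal}) that the total money is conserved and normalized to $1$, so $b_{i',j'}(t)\le 1$ for all $i',j',t$. Fixing any edge $(i,j)\in C$ and noting that the remaining $k-1$ factors of the product are each at most $1$,
\[
b_{i,j}(t) \ge b_{i,j}(t)\cdot \prod_{\substack{(i',j')\in C\\ (i',j')\neq(i,j)}} b_{i',j'}(t) = \prod_{(i',j')\in C} b_{i',j'}(t) > \delta,
\]
which is exactly the claim with $\delta = F(0)/\zeta^k$.

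I do not expect a genuine obstacle here: once the potential identity and Lemma \ref{lem:upperbound} are in hand the argument is a two-line manipulation. The only delicate point is the exponent bookkeeping described above — had the per-player upper bound in Lemma \ref{lem:upperbound} been any weaker than $\alpha^{t/k}$, the ratio $\alpha^t/\prod_{i\in C}x_i(t)$ would grow or decay with $t$ rather than staying bounded, and the bid-product bound (hence the per-edge bound) would fail to be uniform in $t$.
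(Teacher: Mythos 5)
Your proposal is correct and follows essentially the same route as the paper's own proof: both use the potential identity $F(t)=\alpha^t F(0)$ from Theorem \ref{thm:universal}, bound $\prod_{i\in C}x_i(t) < \zeta^k\alpha^t$ via Lemma \ref{lem:upperbound} so that the $\alpha^t$ factors cancel, and then drop from the product bound to each individual bid using $b_{i,j}(t)\le 1$. Your constant $F(0)/\zeta^k$ is exactly the paper's $\delta$ (the paper pre-absorbs the $k$-th power into its $\zeta$), and your side remark that the potential identity never needs $\alpha>1$ is accurate and consistent with how the paper uses it.
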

\begin{proof}
	By Lemma \ref{lem:upperbound}, there exists a constant $\tilde{\zeta} > 0$ such that for each player $i \in N$ and time $t$, the following inequality holds: $x_i(t) < \tilde{\zeta} \cdot \alpha^{t/k}$, where $\alpha = \prod_{(i,j) \in C} a_{i,j}$ and $k = |C|$ is the length of cycle $C$. Let $\zeta = \tilde{\zeta}^k$.
	Then 
	\begin{equation} \label{eq:product_bound}
	\prod_{i \in C} x_i(t) < \prod_{i \in C} \tilde{\zeta}^{k} \cdot (\alpha^{t/k})^k = \zeta \cdot \alpha^t
	\end{equation}
	
	Combining the expression for $F(t)$ with inequality (\ref{eq:product_bound}) we get:
	$$F(t) = \alpha^t \cdot F(0) = \prod_{(i,j) \in C} b_{i,j}(t) \cdot x_i(t) = \left[\prod_{(i,j) \in C} b_{i,j}(t) \right] \cdot \left[ \prod_{i \in C} x_i(t) \right] < \left[ \prod_{(i,j) \in C} b_{i,j}(t) \right] \cdot \alpha^ t \cdot \zeta.$$
	Recall that $F(0)> 0$, $b_{i,j}(t) < 1$ for all $i,j,t$. Set $\delta = F(0)/\zeta > 0$. Then the product of bids along $C$ is 
$$
	\prod_{(i,j) \in C} b_{i,j}(t) > \frac{\alpha^t \cdot F(0)}{\alpha^t \cdot \zeta} = \delta.$$
	Since $b_{i,j}(t) < 1$ for all $i,j \in N$, the bid of player $i$ on good $j$, where $(i,j) \in C$, is at least:
		$$b_{i,j}(t) > \delta/\left(\prod_{(\ell, \ell') \in C \setminus (i,j)} b_{\ell,\ell'} (t)\right) >  \delta,$$ for all $t \in \mathbb{N}$. 
	Thus the bids along $C$ are bounded from below by a constant throughout time.
\end{proof}

The amounts of the players on the best cycle will turn out to be within a constant factor of the optimal throughout time (where the optimal amounts are those that could be achieved by a central planner by ensuring trade only happens on the best cycle in each round).

\begin{proposition}[Optimal rate of growth] \label{lem:minmaxbound}
	Let $C$ be one of the best cycles and $k$ its length. Then there exist constants $\gamma,\zeta > 0$ such that for each player $i \in C$ and time $t$,
	$
	\gamma \cdot \alpha^{t/k} < x_i(t) < \zeta \cdot \alpha^{t/k}, 
	$
	where $\alpha = \prod_{(i,j) \in C} a_{i,j}$ is the product along $C$.
\end{proposition}
\begin{proof}
	Define sequences $y(t) = \max_{i \in C} x_i(t)$ and $z(t) = \min_{i \in C} x_i(t)$ containing the maximum and minimum amounts, respectively, on the cycle $C$ at each time $t$. 
	By Theorem \ref{thm:universal}, there is a function $F$ with the property that $F(t) = \prod_{(i,j) \in C} b_{i,j}(t) \cdot x_i(t) = \alpha^t \cdot F(0)$. 
	By Lemma \ref{lem:upperbound}, there is a constant $\zeta > 0$ such that for each $i \in C$ and time $t \in \mathbb{N}$,
	$
	x_i(t) < \alpha^{ \frac{t}{k} } \cdot \zeta$, so $y(t) <  \alpha^{ \frac{t}{k} } \cdot \zeta$.	
	Let $\ell_t \in C$ be the player on the cycle with minimum amount at $t$, breaking ties lexicographically, and 
	$\gamma = F(0)/\zeta^k > 0$ a constant.
	Then
	\begin{eqnarray*}
		z(t) & = & \frac{\prod_{i \in C} x_i(t)}{\prod_{i \in C \setminus \{\ell_t\}} x_i(t)} \\
		& = & \frac{\alpha^t \cdot F(0)}{\prod_{(i,j) \in C} b_{i,j}(t)} \cdot \frac{1}{\prod_{i \in C \setminus \{\ell_t\}} x_i(t)} 
		\\
		& > &  \frac{\alpha^t \cdot F(0)}{\prod_{i \in C \setminus \{\ell_t\}}  \alpha^{ \frac{t}{k} } \cdot \zeta} \\
		& = & \frac{\alpha^t \cdot F(0)}{\zeta^{k-1} \cdot \alpha^{\frac{(k-1)t}{k}}} \\
	 	& = & \alpha^{t/k} \cdot \gamma,
		%
	\end{eqnarray*}
	where $\gamma = F(0)/\zeta^{k-1}$.
	Then for any $i \in C$ and time $t$, we get that $\gamma \cdot \alpha^{t/k} < z(t) \leq x_i(t) \leq y(t) < \alpha^{t/k} \cdot \zeta$ as required.
\end{proof}

\begin{corollary}[Rate of growth of the economy]
There exists constant $c > 0$ (possibly dependent on $\vec{a}$ but independent of time), such that $$X(t) \geq c \cdot X_{OPT}(t) \; \; \mbox{for all} \; t$$ where $X_{OPT}(t)$ is the highest total amount that could be achieved by any (possibly centralized) mechanism at time $t$.
\end{corollary}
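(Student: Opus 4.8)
The plan is to sandwich both $X(t)$ and $X_{OPT}(t)$ between constant multiples of the same exponential, and then observe that the ratio of the two sandwiches is independent of time. Fix $C$ to be one of the best cycles, let $k = |C|$, and let $\alpha = \prod_{(i,j) \in C} a_{i,j}$ be the product along $C$, so that $\alpha^{1/k}$ is the maximum geometric mean over all cycles. The whole argument reduces to two one-line bounds against $\alpha^{t/k}$.

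First I would establish the lower bound on $X(t)$. By Theorem~\ref{lem:minmaxbound}, under trading post with proportional updates there is a constant $\gamma > 0$ with $x_i(t) > \gamma \cdot \alpha^{t/k}$ for every $i \in C$ and every $t$. Since $X(t) = \sum_{i=1}^n x_i(t) \geq \sum_{i \in C} x_i(t)$, this immediately gives $X(t) > k\gamma \cdot \alpha^{t/k}$.

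Next I would establish the upper bound on $X_{OPT}(t)$. The key point is that the inequality driving the proof of Lemma~\ref{lem:upperbound} --- that after $kt$ rounds any single good reaches at most $\left(\sum_{j} x_j(0)\right) \cdot \alpha^{t}$ --- is a statement about \emph{every} mechanism, not just trading post, since the best a central planner can do is route flow through a cycle of maximum geometric mean $\alpha^{1/k}$. Extending to arbitrary $t = kt' + r$ with the factor $\Delta^{r}$ exactly as in that proof yields a constant $\zeta > 0$ such that the amount of each player's good at time $t$ is below $\zeta \cdot \alpha^{t/k}$ under any (possibly centralized) mechanism. Summing over the $n$ players, $X_{OPT}(t) < n\zeta \cdot \alpha^{t/k}$. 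Dividing the two bounds, $X(t)/X_{OPT}(t) > (k\gamma \cdot \alpha^{t/k})/(n\zeta \cdot \alpha^{t/k}) = k\gamma/(n\zeta)$, so the claim holds with $c = k\gamma/(n\zeta) > 0$, which depends only on $\vec{a}$ and the starting configuration but not on $t$.

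The main obstacle is not computational but a matter of justification: I must make explicit that the per-player upper bound of Lemma~\ref{lem:upperbound} is mechanism-independent (it rests on the max-geometric-mean-cycle characterization of optimal flow through the network), so that it legitimately bounds $X_{OPT}(t)$ rather than merely the trading-post trajectory. Once that observation is in place, the lower bound comes straight from Theorem~\ref{lem:minmaxbound} and the two estimates combine with a single division, with no reliance on whether $\alpha > 1$ (the ratio bound holds verbatim even if the best cycle is bad).
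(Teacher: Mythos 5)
Your proposal is correct and takes essentially the same approach as the paper's own proof: a lower bound on $X(t)$ via Theorem~\ref{lem:minmaxbound} applied to players on the best cycle, an upper bound on $X_{OPT}(t)$ of the form $n\zeta\cdot\alpha^{t/k}$ coming from the fact that the per-player bound of Lemma~\ref{lem:upperbound} (optimal flow routed through the maximum-geometric-mean cycle) holds for \emph{any} mechanism, and then a single division of the two estimates. Your explicit flagging of the mechanism-independence of Lemma~\ref{lem:upperbound} is precisely the observation the paper's argument rests on.
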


\begin{proposition}[Growth of players on the best cycles] \label{lem:all_grow}
	Suppose an economy has at least one good cycle. Then for each of the best cycles $C$, all the players in $C$ grow. 
\end{proposition}
\begin{proof}
	By Proposition \ref{lem:minmaxbound}, there exist constants $\gamma, \zeta > 0$ such that for any best cycle $C$ and any player $i \in C$, $\gamma \cdot \alpha^{t/k} < x_i(t) < \zeta \cdot \alpha^{t/k}$, where $\alpha = \prod_{(i,j) \in C} a_{i,j}$.
	Since $\alpha > 1$, it follows that $\lim_{t \to \infty} x_i(t) = \infty$ as required.
\end{proof}

The bounds on growth of Proposition \ref{lem:minmaxbound} imply that growth can be achieved also by players outside the best cycle, even if such players don't have good self loops, as long as they are part of a ``good enough'' cycle.

\begin{proposition}[Growth of players on ``good enough'' cycles]
	Suppose an economy has at least one good cycle and let $C$ be one of the best cycles.
	 Then all players on any cycle $C'$ with $
	\alpha' > \left(\alpha\right)^{\frac{|C'|-1}{|C|}}$ grow, where $\alpha = \prod_{(i,j) \in C} a_{i,j}$ and $\alpha'=\prod_{(i,j) \in C'} a_{i,j}$ are the product on $C$ and $C'$, respectively.
\end{proposition}
\begin{proof}
	Let $k = |C|$ and $\ell = |C'|$.
	By Lemma \ref{lem:upperbound}, there exists $\zeta > 0$ such that for each $i \in N$, $x_i(t) < \zeta \cdot \alpha^{t/k}$ for all $ t \in \mathbb{N}$.
	By Theorem \ref{thm:universal}, we have that $\prod_{(k,j) \in C'} b_{k,j}(t) \cdot x_k(t) = (\alpha')^t \cdot c$, for a fixed $c > 0$. Then for any player $i \in C'$, we have
	\begin{eqnarray*}
		x_i(t) & = & \frac{(\alpha')^t \cdot c}{\left(\prod_{(k,j) \in C'} b_{k,j}(t)\right) \cdot \left(\prod_{k \in C' \setminus\{i\}} x_k(t) \right)} \\
		& > & \frac{(\alpha')^t \cdot c}{\prod_{k \in C' \setminus \{i\}} \zeta \cdot \alpha^{t/k}} \\
		& = & \zeta^{1-\ell} \cdot \frac{(\alpha')^t \cdot c}{\left(\alpha^{t /k}\right)^{\ell-1}} \\
		& = & c \cdot \zeta^{1-\ell} \cdot \left(\frac{\alpha'}{\alpha^{\frac{\ell-1}{k}}}\right)^t
	\end{eqnarray*}
	From the condition that $\alpha' > \alpha^{\frac{|C'|-1}{|C|}}$, we get that $\lim_{t \to \infty} x_i(t) = \infty$ as required.
\end{proof}

We leave open the question of understanding more precisely which players grow and, in particular, whether all the players situated on a good cycle are guaranteed to grow.

\begin{open}
	Are all the players on a good cycle guaranteed to grow?
\end{open}

\section{Inequality} \label{sec:inequality}

In this section we assume the best cycle is unique. We show that the players will get split into two classes: the ``rich'' (who will turn out to be the players on the best cycle) and the ``poor'' (everyone else), such that the inequality between these classes will diverge to infinity. The ``poor'' players will be poor when compared to rich, but some (or even all) of them may grow too, just at a slower rate. 

En route to proving the inequality theorem we establish several other statements: $(1)$ in the limit the players on the best cycle will bid $100\%$ of their budget on their predecessor on the cycle and $(2)$ in the limit there is no flow of money between the players on the best cycle and rest. We conjecture that in fact the best cycle absorbs all the money in the limit.

We obtain the existence of a limit vector of money (budgets) so that the players on the best cycle rotate these budgets among themselves throughout time.
For amounts we get periodicity in a normalized version of the economy, which will imply that the Gini index for amounts cycles with period $k$, where $k$ is the length of the best cycle.

\begin{lemma} \label{lem:bounded_fraction}
	Consider an economy with a unique best cycle $C$ that is run on a sequence of arbitrary splitting rules $\beta(t)$, such that $\beta_{k,\ell}(t)$ is the fraction received by player $k$ from good $\ell$ in round $t$ for each $k,\ell$. Suppose $\alpha(C) = 1$ and for all other cycles $C'\neq C$, $\alpha(C') \le (1-\epsilon)$ for some $\epsilon>0$. If there exists an edge $(j,i)\in C$, where $j$ preceeds $i$, such that player $i$ receives infinitely often less than a $\gamma$ fraction from the good of player $j$ for some $0\le \gamma <1$, then there is a \emph{subsequence of rounds} where the amount received by player $i$ from $j$ goes to zero.
	%
\end{lemma}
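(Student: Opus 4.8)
The plan is to find a weighting of the players that turns the total amount of goods into a Lyapunov function, and then to read off the claim from the fact that any flow diverted away from the edge $(j,i)$ of the best cycle must eventually be destroyed.

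First I would construct positive weights $w_1,\dots,w_n$ with $w_i\,a_{i,j}\le w_j$ for every edge (player $i$ consuming good $j$, i.e.\ $a_{i,j}>0$), with equality on every edge of $C$. Taking logarithms this is a system of difference constraints $\log w_i-\log w_j\le-\log a_{i,j}$, which is feasible iff the graph has no cycle with $\prod a>1$; since $\alpha(C)=1$ is the maximum and every other cycle satisfies $\alpha(C')\le 1-\epsilon$, no such cycle exists, so the weights exist. Because $\alpha(C)=1$, the telescoping product of these constraints around $C$ forces each of them to be tight. Now define $W(t)=\sum_i w_i x_i(t)$. Using $x_i(t+1)=\sum_j a_{i,j}\beta_{i,j}(t)x_j(t)$, the budget constraint $\sum_i\beta_{i,j}(t)\le 1$, and $w_i a_{i,j}\le w_j$, I get $W(t+1)=\sum_j x_j(t)\sum_i w_i a_{i,j}\beta_{i,j}(t)\le\sum_j w_j x_j(t)=W(t)$. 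Hence $W$ is non-increasing and bounded below by $0$: it converges, the per-round losses $W(t)-W(t+1)\ge 0$ are summable, and every amount is uniformly bounded, $x_i(t)\le W(0)/w_i$.

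Next I would localize the loss. Writing the single-good loss $L_j(t)=x_j(t)\big(w_j-\sum_k w_k a_{k,j}\beta_{k,j}(t)\big)\ge 0$, one has $W(t)-W(t+1)=\sum_\ell L_\ell(t)\ge L_j(t)$. Suppose, for contradiction, that along the infinite set $S=\{t:\beta_{i,j}(t)<\gamma\}$ the amount $y_{i,j}(t)=\beta_{i,j}(t)x_j(t)$ does not tend to $0$ on any subsequence; then $x_j(t)\ge y_{i,j}(t)\ge\delta>0$ for all large $t\in S$, so a fixed fraction $(1-\gamma)$ of the quantity $x_j(t)$ is \emph{diverted} away from the efficient successor $i$. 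In the generic situation where $i$ is the only consumer $k$ of good $j$ with $w_k a_{k,j}=w_j$, this diversion already produces a one-step loss: every other consumer satisfies $w_k a_{k,j}\le(1-\eta)w_j$, where $\eta>0$ is the smallest relative gap over the finitely many non-tight edges, so $L_j(t)\ge\eta\,w_j(1-\gamma)\,x_j(t)\ge\eta\,w_j(1-\gamma)\delta>0$ for all large $t\in S$. Summability then forces infinitely many such losses to add up to infinity, contradicting the convergence of $W$, so $x_j$ (hence $y_{i,j}$) must vanish along a subsequence.

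The remaining, and hardest, case is when good $j$ also has a \emph{tight} consumer $k\ne i$, since the mechanism can then dump the diverted mass onto $k$ with no immediate loss. Here I would use that the best cycle is unique: the subgraph $T$ of tight edges cannot contain any directed cycle other than $C$ (such a cycle would have product $1$ and thus be a best cycle distinct from $C$). Consequently mass that leaves $C$ along tight edges can neither return to $C$ nor circulate off $C$ without creating a second best cycle, so every tight path off $C$ has length at most $n$; a tagged unit of diverted good-$j$ mass must therefore traverse a non-tight edge within $n$ rounds, losing at least a factor $\eta$ of its weighted value. Tracking this tagged sub-flow shows that each $t\in S$ with $x_j(t)\ge\delta$ contributes a loss $\ge c(1-\gamma)\delta$ over the window $[t,t+n]$ for a constant $c>0$; choosing an $n$-separated subsequence of $S$ makes the windows disjoint and the total loss infinite, again contradicting the convergence of $W$. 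I expect this bounded-horizon bookkeeping — rigorously charging the eventual loss of adversarially split, tight-hopping mass back to the round where it was diverted — to be the main obstacle; the clean structural input that makes it go through is precisely the uniqueness of the best cycle, which collapses the tight subgraph to a single cycle plus an acyclic remainder.
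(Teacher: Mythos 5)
Your proof is correct, but it takes a genuinely different route from the paper's. The paper works with the unweighted total quantity: it absorbs all possible gains from edges with $a_{k,\ell}>1$ into a one-time multiplicative boost $D$, views the evolution as a convex combination of products along walks, and uses the combinatorial fact that any mass diverted off the edge $(j,i)$ must, within $n$ rounds, close a simple cycle different from $C$ (the first node such a walk repeats is forced to yield a cycle containing the deviation edge, or an off-$C$ cycle), hence of product at most $1-\epsilon$. Each ``counted'' bad split therefore permanently subtracts $\epsilon(1-\gamma)x_j(t_d)$ from a quantity bounded above by $D\cdot Q$, so $\sum_d x_j(t_d)<\infty$ and $x_j(t_d)\to 0$ along the counted subsequence. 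You instead build difference-constraint (shortest-path) duals $w_i$ so that $W(t)=\sum_i w_i x_i(t)$ is exactly non-increasing, and you replace the paper's ``first completed cycle is bad'' argument with the structural fact that, by uniqueness of the best cycle, the tight edges form $C$ plus an acyclic part that can never re-enter $C$; diverted mass must therefore cross a non-tight edge (or be discarded) within $n$ rounds, losing an $\eta$-fraction of its weighted value, and disjoint $n$-windows along an infinite subsequence of bad splits give the contradiction. What your approach buys: no boost bookkeeping (monotonicity of the potential is exact), uniform boundedness of all $x_i(t)$ for free, and a reusable structural lemma about the tight subgraph that would also serve, e.g., in the proofs of Theorems \ref{lem:100_percent} and \ref{thm:zero_cut}. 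What the paper's approach buys: it is more elementary (no dual construction), and its accounting gives the slightly stronger quantitative statement $\sum_d x_j(t_d)<\infty$ directly rather than by contradiction. The step you flag as the main obstacle --- charging the eventual loss of adversarially split, tight-hopping mass back to the round of diversion --- does go through, because the dynamics $x(t+1)=\langle A,\beta(t)\rangle x(t)$ are linear in the quantities, so the tagged sub-flow evolves linearly, its weighted value is conserved along tight edges, and every unit of it crosses a non-tight edge inside the window; hence your sketch is complete in all essentials.
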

\begin{proof}
The product of $a_{i,j}$s on cycle $C$ is exactly one, and it is at most $(1-\epsilon)$ on any other cycle. Intuitively any amount that passes through $C$ maintains its quantity, while the ones passing through any other cycle get reduced by at least a multiplicative factor of $(1-\epsilon)$. 
Assume without loss of generality that $x_i(0)\le 1$ for all players $i$, where $x_j(t)$ is the amount (of good $j$) produced by player $j$ at time $t$. Denote by $q(t)$ the amount of good $j$ received by player $i$. 

Using the above intuition, we will show that as $t\rightarrow \infty$, either there is a subsequence of rounds $t_1,\dots,t_k, \ldots$ such that $\lim_{k \to \infty} x_j(t_k)= 0$ and thereby $\lim_{k \to \infty} q(t_k) = 0$, or the total amount in the system goes to zero. 

Define $D = \prod_{(k,\ell): a_{k,\ell}>1} a_{k,\ell}$ and let the total initial amount be $Q=\sum_{k=1}^n x_k(0)$. Consider a self-intersecting path $P$ of arbitrary length. It consists of a set of cycles and non-overlapping segments. The total length of non-overlapping segments can be at most $n$ and the product of their $a_{i,j}$s can be at most $D$. Therefore, the product of $a_{i,j}$s on this path is at most $D(1-\epsilon)^k$ where $k$ is the number of bad cycles other than $C$ in the path $P$. 
Thus any quantity can get a boost of (at most) $D$ at most once throughout time.

Recall that at every time $t$, the fraction of good $\ell$ that player $k$ gets is denoted by $\beta_{k,\ell}(t)$. We will say that a split at time $t$ is a {\em bad split} if $\beta_{i,j}(t) \le \gamma$. Let $t_1$ be the first round when a bad split happens. Thus only $q(t_1)\le \gamma x_j(t_1)$ remains on the good cycle after $t_1$. The remaining amount $(1-\gamma)x_j(t_1)$ goes through some bad cycles and within the next $n$ rounds gets multiplied by $(1-\epsilon)$ (or less). In addition to completion of a bad cycle the amount may travel through a segment of good edges, but the overall increase due to this is at most $D$. Counting this {\em one time} boost a priori for all of $Q$, the total amount after $(t_1+n)$ rounds can be bounded as follows:

\begin{eqnarray} \label{eq:firstboundall}
\sum_{\ell=1}^n x_{\ell}(t_1+n) & \leq & (D \cdot Q - x_j(t_1)) + x_j(t_1) (\gamma + (1-\epsilon)(1-\gamma)) \nonumber \\
& = & D \cdot Q - \epsilon (1-\gamma) x_j(t_1)\nonumber
\end{eqnarray}

Given that all the cycles are less than equal to one and we already factored in any temporary boost that the total amount could receive, the bound at time $n+t_1$ on the total amount continues to hold for all times $t' \geq n+t_1$, that is
\begin{equation} \label{eq:firstboundall_onwards}
\sum_{\ell =1}^n x_{\ell}(t') \leq D \cdot Q - \epsilon (1-\gamma) x_j(t_1)
\end{equation}

Now suppose the first bad split after $(n+t_1)$ rounds happen at time $t_2$. By using a similar argument to the one for inequality (\ref{eq:firstboundall}) and invoking inequality (\ref{eq:firstboundall_onwards}), we get that after $(n+t_2)$ rounds the total amount in the system can be at most 
\begin{eqnarray}
\sum_{\ell=1}^n x_{\ell}(t_2 + n) & \leq & \left(D \cdot Q- \epsilon(1- \gamma) x_j(t_1) - x_j(t_2)\right) + x_j(t_2)(\gamma + (1-\epsilon)(1-\gamma)) \nonumber  \\
& = & D \cdot Q - \epsilon(1-\gamma) \left(x_j(t_1) + x_j(t_2)\right) \nonumber
\end{eqnarray} 


Thus after every bad split that we consider, we will ignore $n$ rounds. Let us call any such bad split that we consider with the gap of $n$ rounds a {\em counted bad split}. Applying the above argument inductively, we get that if first $(k-1)$ {\em counted bad splits} have occurred at rounds $t_1,\dots,t_{(k-1)}$, and the $k$th happens in round $t_k$, then after $(t_k+n)$ rounds the total amount of all goods is bounded by:
$$\sum_{\ell=1}^n x_{\ell}(t_k + n) \leq D \cdot Q - \epsilon(1-\gamma)\left(\sum_{d=1}^k x_j(t_d)\right)$$

Since there are infinitely many bad splits, and for every {\em counted bad split} we ignore at most $n$ bad splits, the {\em counted bad splits} are also infinitely many. Suppose that these occur at the time sequence $t_1,t_2,\dots$. Then as $k\rightarrow \infty$ either $x_j(t_k) \rightarrow 0$ or the total amount goes to zero. In either case $q(t_k)$ goes to zero as required.
\end{proof}

The above lemma is in fact optimal, in the sense that picking a subsequence of rounds is important. Neither the total amount itself nor the quantity received by $i$ from $j$ need to go to zero as demonstrated by the construction in the next example.
\begin{example} \label{eg:optimal_subsequence}
	Consider an economy with five agents $N = \{1,2,3,4,5\}$. The best cycle is 
	among players $1,2, 3$, with $a_{2,1}=a_{3,2}=a_{1,3}=1$, and there is another cycle
	among players $1,4,5$, with $a_{4,1}=a_{1,5}=1$ and $a_{5,4}=1-\epsilon$ for some $0 < \epsilon <1$. Let the initial amount of each goods with agents $1,2,$ and $3$ be one, and that of $4$ and $5$ be zero(note the statement can be made for non-zero but very small amounts too, but the calculations are simpler for zero so we illustrate this scenario). Consider the following infinite sequence of splitting rules.
	
	Starting from the first round, suppose agent $1$ splits her good between agents $2$ and $4$ in $\gamma$ and $(1-\gamma)$ fraction respectively after every three rounds. In other words $\beta_{2,1}(1+3k)=\gamma$ for all $k=0,1,...,$ and for the rest of the rounds $\beta_{2,1}(t)=1$. For all other agents, there is only one successor and hence no splitting. It is easy to see that the amount of good produced by agent $1$ in time $t$, call it $x_1(t)$ is,
	\[
	\begin{array}{lcll}
	x_1(t) & = & (\gamma+(1-\gamma)(1-\epsilon))^k\ \ \ & \mbox{If $t=3k$ for an integer $k\ge 0$}\\
	& = & 1 & \mbox{Otherwise.}
	\end{array}
	\]
	Thus neither total quantity nor the amount received by agent $2$ from agent $1$ goes to zero. But, a subsequence of $x_j(t)$, namely $t=3k$ for $k=0,1,2,\dots$, goes to zero.
\end{example}

\begin{observation} \label{note:appabsorb}
	Suppose $C$ is the unique best cycle. 
	Then there are economies where some of the players outside $C$ invest $0\%$ of their budget on the goods in $C$ as $t \to \infty$. We find this phenomenon by simulating the following economy:
	$a = [[0.1, 1, 0.1, 0.1]$, $[1, 0.1, 0.1, 0.1]$, $[0.1, 0.1, 0.1, 0.3]$, $[0.1, 0.1, 0.1, 0.1]]$, with budgets and amounts initialized to $1$ and $b_{i,j}=1/3$ for all $i \neq j$. The best cycle is $C = (1,2)$. In Figure 5 
	it can be seen that in the limit player $3$ invests $100\%$ of its budget on player $4$, even though the amount of player $4$ goes to zero.
	Also note that the budgets of players $3$ and $4$ converge to zero.
\end{observation}

\begin{figure}[h!]
	\label{fig:counterexample}
	\centering
	\subfigure[Four player economy. The best cycle is $(1,2)$.]
	{
		\includegraphics[scale=0.4]{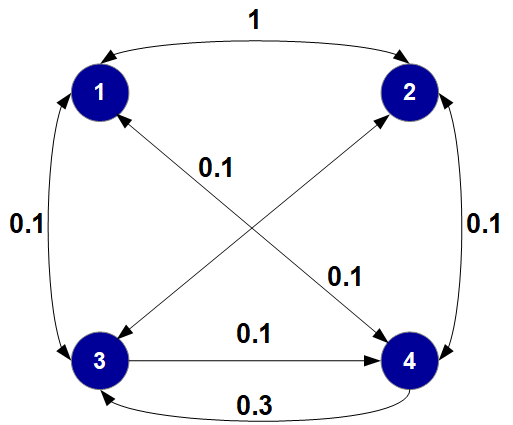}
		\label{fig:outC}
	}
	\subfigure[Budgets. Players $1$, $2$ are shown in blue.]
	{
		\includegraphics[scale=0.4]{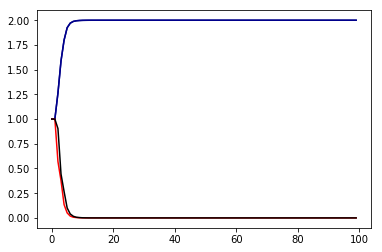}
		\label{fig:outC_budgets}
	}
	\caption{Four player economy where some players outside the best cycle (namely player $3$) invest $0\%$ of their budget on the cycle (in the limit), with budgets and amounts initialized to $1$ and $b_{i,j}=1/3$ for all $ i \neq j$.}
\end{figure}
\begin{figure}  \label{fig:counterexample_part2}
	\centering
	\subfigure[Fractions invested by $3$ on players $1,2,4$. The fraction invested on player $4$ is in blue. In the limit, player $3$ invests $100\%$ of his budget on player $4$.]
	{
		\includegraphics[scale=0.4]{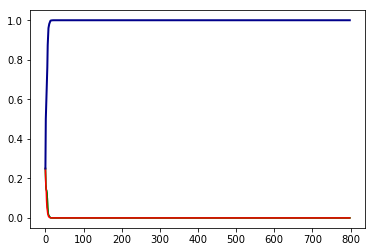}
		\label{fig:outC_f3}
	}
	\subfigure[Fraction invested by player $4$ on $1$.]
	{
		\includegraphics[scale=0.4]{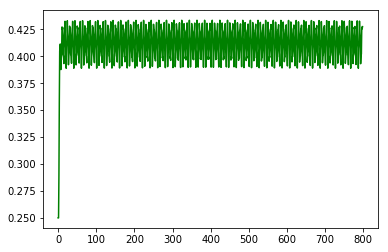}
		\label{fig:outC_f41}
	}
	\subfigure[Fraction invested by player $4$ on $2$.]
	{
		\includegraphics[scale=0.4]{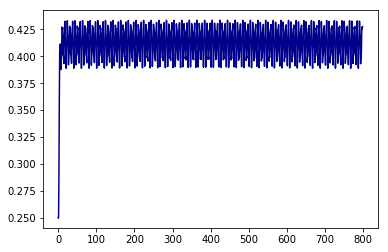}
		\label{fig:outC_f42}
	}
	\subfigure[Fraction invested by player $4$ on $3$.]
	{
		\includegraphics[scale=0.4]{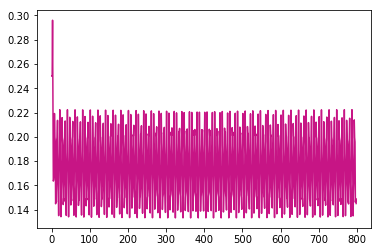}
		\label{fig:outC_f43}
	}	
\caption{In the economy from Figure \ref{fig:counterexample}, the fraction of its budget that each player invests on the goods.}
\end{figure}

\newpage

\begin{lemma} \label{lem:normal}
	Consider any economy $\vec{a}$ running trading post with proportional updates. Then there exists $w > 0$ such that by dividing each edge of $\vec{a}$ by $w$, we obtain a normalized economy $\vec{a}'$ in which any best cycle has product exactly $1$, all the other cycles have product strictly less than $1$, and at any round $t$
	\begin{itemize}
		\item the bids of the players in $\vec{a}$ and $\vec{a}'$ are identical (i.e. $b_{i,j}'(t) = b_{i,j}(t)$ for all $i,j$).
		\item the amount of any player $i$ in $\vec{a}'$ is equal to its amount in $\vec{a}$ divided by $w^t$ (i.e. $x_i'(t) = x_i(t) /w^t$).
	\end{itemize} 
\end{lemma}
\begin{proof}
	Let $a_{i,j}$ be the edges of $\vec{a}$ and $C$ any best cycle. Denote by $k$ and $w$ the length and geometric mean of $C$, respectively. Consider the economy $\vec{a}'$ with edges $a_{i,j}' = a_{i,j}/w$ for all $i,j$. Then in $\vec{a}'$ the cycle $C$ will have geometric mean $1$. For any other cycle $C'$ with length $\ell$ and product $\alpha(C') = \prod_{(i,j) \in C'} a_{i,j}$, its geometric mean in the scaled graph will be 
	$$\sqrt[\ell]{\prod_{(i,j) \in C'} \frac{a_{i,j}}{w}} = \sqrt[\ell]{\frac{\alpha(C)}{w^{\ell}}} = \frac{\sqrt[\ell]{\alpha(C)}}{w}
	$$
	Thus if the geometric mean of the cycle $C'$, $\sqrt[\ell]{\alpha(C)}$, is also $w$, then in the scaled graph $C'$ will have mean and product exactly $1$, and otherwise $C'$ will have mean and product strictly less than $1$. Let $\vec{b}(t)$ and $\vec{x}(t)$ be the initial (non-degenerate) starting state of economy $\vec{a}$. Denote by $\vec{b}'(t)$ and $\vec{x}'(t)$ the bids and amounts at time $t$ in the scaled economy $\vec{a}'$ such that $\vec{b}'(0) = \vec{b}(0)$ and $\vec{x}'(0) = \vec{x}(0)$. Since $b_{i,j}'(0) = b_{i,j}(0)$, the fractions of the goods received by each player in the first round are the same in the two economies:
	$$
	f_{i,j}'(0) = \frac{b_{i,j}'(0)}{\sum_{k=1}^{n} b_{k,j}'(0)} =  \frac{b_{i,j}(0)}{\sum_{k=1}^{n} b_{k,j}(0)} = f_{i,j}(0) 
	$$
	After one round of updates, the amounts in $\vec{a}'$ will be
	$$
	x_i'(1) = \sum_{j=1}^n f_{i,j}' \cdot a_{i,j}' \cdot x_j'(0)
	= \sum_{j=1}^n f_{i,j} \cdot \frac{a_{i,j}}{w} \cdot x_j(0) = \frac{x_i(0)}{w}
	$$
	The updated budgets are $B_i'(1) = \sum_{j=1}^n b_{j,i}'(0) = \sum_{j=1}^n b_{j,i}(0) = B_i(1)$, while the new bids are
	\begin{eqnarray*}
		b_{i,j}'(1) & = & \left( \frac{f_{i,j}' \cdot a_{i,j}' \cdot x_j'(0)}{\sum_{k=1}^n f_{i,k}' \cdot a_{i,k}' \cdot x_k'(0)} \right) \cdot B_i'(1) \\
		& = & \left( \frac{f_{i,j} \cdot \frac{a_{i,j}}{w} \cdot x_j(0)}{\sum_{k=1}^n f_{i,k} \cdot \frac{a_{i,k}}{w} \cdot x_k(0)} \right) \cdot B_i(1) \\
		& = & \left( \frac{f_{i,j} \cdot a_{i,j} \cdot x_j(0)}{\sum_{k=1}^n f_{i,k} \cdot a_{i,k} \cdot x_k(0)} \right) \cdot B_i(1) \\
		& = & b_{i,j}(1)
	\end{eqnarray*}
	With a simple inductive argument, we obtain that $b_{i,j}'(t) = b_{i,j}(t)$ and $x_i'(t) = \frac{x_i(t)}{w^t}$ for all players $i,j$ and round $t$, as required.
\end{proof}

We refer to the unique value $w$ in Lemma \ref{lem:normal} as the normalization coefficient of the economy.

\begin{proposition} \label{lem:100_percent}
	Suppose the best cycle, $C$, is unique. Then in the limit the fraction received by each player in $C$ from the good of its predecessor in $C$ converges to $100\%$.
\end{proposition}
\begin{proof}
	By Lemma \ref{lem:normal}, we can assume that the best cycle, $C$, has product one, while all the other cycles have product less than $1- \epsilon$, for some $\epsilon > 0$.
	Let $i \in C$ and denote by $j$ the predecessor of player $i$ in $C$.
	Suppose towards a contradiction that the fraction received by player $i$ from good $j$ does not converge to $1$ in the limit. Then there is a constant $\gamma < 1$ such that infinitely often, player $i$ receives less than a $\gamma$ fraction from good $j$. By Lemma \ref{lem:bounded_fraction}, it follows that there is a subsequence of rounds for which player $i$ receives in the limit of $t \to \infty$ an amount of zero from good $j$. 
	
	However by Lemma \ref{lem:best_bidproduct}, player $i$ bids at least $\delta$ on good $j$ in every round and by Theorem \ref{lem:minmaxbound}, the amount of good $j$ is in an interval bounded away from zero throughout time, that is, there exist constants $\beta > \alpha > 0$ such that $x_{j}(t) \in [\alpha,\beta]$ for all times $t$. Thus the amount $q(t)$ received by player $i$ from good $j$ at time $t$ remains bounded away from zero throughout time. We obtained a contradiction. Thus the assumption must have been false and the fraction received by player $i$ from player $j$ converges to $1$ as $t \to \infty$.
\end{proof}

\begin{proposition} \label{thm:zero_cut}
	Let $C$ be the unique best cycle. Then in the limit there is no money flowing from players in $C$ to players in $N \setminus C$ and viceversa, i.e. $\lim_{t \to \infty} b_{i,j}(t) = 0 $ and $ \lim_{t \to \infty} b_{j,i}(t) = 0$ for all $i \in C$ and $j \not \in C$.
\end{proposition}
\begin{proof}
	By Proposition \ref{lem:100_percent}, in the limit each player $i \in C$ receives $100\%$ of the good of their predecessor, say $j$, in $C$, which implies that the bids of the other players on good $j$ vanish in the limit. In particular, in the limit there is no money flowing from $N \setminus C$ to $C$. 
	
	To show that the bids of the players in $C$ on the goods in $N \setminus C$ also vanish in the limit, suppose towards a contradiction that they do not. Then there is a constant $\phi > 0$ such that some player $i \in C$ sends infinitely often a bid of at least $b_{i,j}(t) \geq \phi$ on some player $j \not \in C$. By Lemma \ref{lem:best_bidproduct}, there is a constant $\delta > 0$ such that each player in $C$ bids at least $\delta$ on their predecessor in $C$, which implies that $B_{k}(t) \geq \delta$ for all players $k \in C$ and any time $t$. Since the money flowing from $N \setminus C$ to $C$ vanishes in the limit, but player $i$ sends infinitely often at least $\phi > 0$ to player $j$, it follows that there is a subsequence in the sorted vector of budgets of players in $C$ that goes to zero, which is a contradiction. Thus the assumption was false and the money flowing from the players in $C$ to $N \setminus C$ also vanishes in the limit.
\end{proof}

In the limit, the budgets of the players on the best cycle $C$ will converge to some values that get rotated infinitely often along the cycle (from one player in $C$ to its successor in $C$).

\begin{corollary} \label{cor:budget_shift}
	Suppose $C = (1, \ldots, k)$ is the unique best cycle, where player $j$ uses the good of player $j-1$.
	There exist values $B_1^*, \ldots, B_k^*$ such that for each $r \in \{0, \ldots, k-1\}$
	$$
	\lim_{t \to \infty} B_{i}(tk+r) = B_{i-r}^*, \\ 
	\; \; \mbox{where any index} \; \; \ell \leq 0 \; \; \mbox{is interpreted as} \; \; \ell+k
	$$
\end{corollary}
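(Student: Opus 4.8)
The plan is to show that, asymptotically, the budgets of the players on the unique best cycle $C=(1,\dots,k)$ obey an almost-exact cyclic shift, and then to upgrade this to genuine convergence along each residue class modulo $k$. First I would pass to the normalized economy of Lemma~\ref{lem:normal} (budgets are unchanged by normalization, so this costs nothing) and restrict attention to the vector $\vec{B}_C(t)=(B_i(t))_{i\in C}$. Using the budget update $B_i(t+1)=\sum_m b_{m,i}(t)$, I would isolate the contribution of the successor $i+1$, the only cycle-neighbour that uses good $i$: by Theorem~\ref{lem:100_percent} player $i+1$ eventually bids a fraction tending to $100\%$ of its budget on good $i$, while by Theorem~\ref{thm:zero_cut} the bids on good $i$ coming from $N\setminus C$ vanish, and the off-predecessor bids of the remaining cycle players vanish as well. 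This yields $B_i(t+1)=B_{i+1}(t)+\eta_i(t)$ with $\eta_i(t)\to 0$, i.e. $\vec{B}_C(t+1)=P\,\vec{B}_C(t)+\vec{\eta}(t)$, where $P$ is the cyclic shift $(P\vec{B})_i=B_{i+1}$ and $\vec{B}_C$ is bounded (every budget is at most the total money $1$).

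Next I would un-rotate. Setting $\vec{C}(t):=P^{-t}\vec{B}_C(t)$ gives $\vec{C}(t+1)-\vec{C}(t)=P^{-(t+1)}\vec{\eta}(t)$, whose norm equals $\|\vec{\eta}(t)\|\to 0$. Since $P^{k}=I$, proving that $\vec{C}(t)$ converges is exactly equivalent to proving that $\vec{B}_C(tk+r)$ converges for each fixed $r\in\{0,\dots,k-1\}$; and if $B^*_\ell:=\lim_t B_\ell(tk)$, then iterating the shift relation $r$ times gives $\lim_t B_i(tk+r)=B^*_{i+r}$ (indices read modulo $k$), which is the asserted rotation of a single limit budget vector along $C$, up to the orientation convention for the cycle.

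The delicate point, and the main obstacle, is that vanishing increments do not by themselves force convergence: $P$ is an isometry, so the limiting on-cycle dynamics $B_i(t+1)=B_{i+1}(t)$ is neutrally stable and admits a whole continuum of periodic orbits, and the accumulated errors could in principle drift $\vec{C}(t)$ along this continuum forever. The natural limiting invariants do not help: $\prod_{i\in C}B_i(t)$ tends to a constant (from the invariant of Theorem~\ref{thm:universal} together with the fact that $\prod_{i\in C}x_i(t)$ is asymptotically constant), but this pins down only one scalar and cannot select a point of a $(k-1)$-parameter family, and the limit vector has distinct entries generically, so no min/max squeezing is available. The way I would break this is to prove that $\vec{\eta}(t)$ is \emph{summable}, so that $\vec{C}(t)$ is Cauchy. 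Summability reduces to geometric decay of the total ``leaked'' money $\ell(t):=\sum_{i\in C}\bigl(B_i(t)-b_{i,i-1}(t)\bigr)+\sum_{i\in C}\sum_{j\notin C}b_{j,i}(t)$, since a direct inspection of the definition of $\eta_i$ gives $\|\vec{\eta}(t)\|\le c\,\ell(t)$.

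Finally I would establish $\ell(t)\le C\rho^{t}$ for some $\rho<1$. The key identity is that the off-predecessor \emph{bid} fraction of player $i$ at time $t+1$ equals its off-predecessor \emph{production} fraction at time $t$, namely $1-a_{i,i-1}y_{i,i-1}(t)/x_i(t+1)$; this ties the new leak to the amounts player $i$ received along non-cycle edges. Each such unit of off-cycle good can only have originated by routing goods around some cycle $C'\neq C$, and by Lemma~\ref{lem:normal} every such cycle has product at most $1-\epsilon$, so off-cycle mass contracts by a factor bounded away from $1$ over each bounded block of rounds, a quantitative, uniform version of Lemma~\ref{lem:bounded_fraction} (uniformity coming from strong connectivity). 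Combined with the amounts staying in a fixed interval $[\gamma,\zeta]$ (Theorem~\ref{lem:minmaxbound} with $\alpha=1$ after normalization) and the cycle bids staying above $\delta$ (Lemma~\ref{lem:best_bidproduct}), this gives the geometric bound on $\ell(t)$, hence summability of $\vec{\eta}$, hence convergence of $\vec{C}(t)$. Reading off residues yields the values $B^*_1,\dots,B^*_k$ and the rotation in the statement; the same summability simultaneously delivers the asymptotic period-$k$ behaviour of the normalized amounts, and thus of the Gini index, referenced in the surrounding text. I expect this uniform geometric-leakage lemma to be where essentially all of the work lies.
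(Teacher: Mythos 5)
Your first two steps are exactly the paper's proof, which is only three sentences long: invoke Theorems \ref{lem:100_percent} and \ref{thm:zero_cut} to conclude that in the limit each player in $C$ sends all its money to its predecessor, and then assert that limit values $B_1^*,\dots,B_k^*$ exist and rotate. Where you go beyond the paper is in recognizing what that assertion hides: writing $\vec{B}_C(t+1)=P\,\vec{B}_C(t)+\vec{\eta}(t)$ with $\vec{\eta}(t)\to 0$, the shift $P$ is an isometry with a continuum of periodic orbits, so vanishing perturbations do \emph{not} by themselves force convergence of $P^{-t}\vec{B}_C(t)$ along residue classes --- the un-rotated vector could drift forever. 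This is a genuine obstruction that the paper's one-line argument simply steps over, and your diagnosis of it (including the observation that the invariant $\prod_{i\in C}B_i$ pins down only one scalar) is correct. Your orientation $B^*_{i+r}$ is also the right one under the convention ``player $j$ uses the good of player $j-1$'' (money moves backward along the cycle); the paper's index $B^*_{i-r}$ runs the rotation the wrong way, a harmless relabeling.

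The problem is that your repair is not a proof: everything rests on the ``uniform geometric-leakage lemma'' $\ell(t)\le C\rho^{t}$ with $\rho<1$, which you do not prove and which cannot be extracted from the results you cite. Lemma \ref{lem:bounded_fraction} is purely qualitative --- its accounting argument only produces a \emph{subsequence} along which amounts vanish --- and Theorems \ref{lem:100_percent} and \ref{thm:zero_cut} are deduced from it by contradiction, so they deliver $\eta_i(t)\to 0$ with no rate whatsoever. Your heuristic (off-cycle goods contract by $1-\epsilon$ around bad cycles) controls quantities of goods, but $\ell(t)$ is made of \emph{bids}, and the bid update couples money to production through the ratio $a_{i,j}y_{i,j}(t)/x_i(t+1)$; the off-cycle inputs $y_{i,j}(t)$ of a cycle player depend on the off-cycle amounts $x_j(t)$, whose decay rate in the proof of Theorem \ref{thm:disparity} is itself inherited from the rate-free Theorem \ref{lem:100_percent} --- the dependence is circular, and nothing rules out subgeometric (e.g.\ $1/t$) decay, in which case $\vec{\eta}$ is not summable and your Cauchy argument collapses. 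So as written the proposal has a genuine gap, which to your credit you flag yourself; closing it requires either a new, self-contained quantitative bootstrap establishing the leakage bound, or a different mechanism entirely (e.g.\ a Lyapunov or monotonicity argument for the un-rotated budget vector). It should be said plainly that the paper's own proof does not address this point at all, so the corollary's full justification is an honest open gap in both accounts.
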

\begin{proof}
	By Proposition \ref{lem:100_percent} and \ref{thm:zero_cut}, we get that in the limit each player in $C$ sends all its money to its predecessor in $C$. This means there exist values $B_1^* \ldots B_k^*$ so that in the long term the budgets of the players are described by these values (with a shift, as the money gets passed around the cycle). This implies the required limit behavior.
\end{proof}

\begin{theorem}[Inequality] \label{thm:disparity}
	Suppose an economy has a unique best cycle $C$.
	Then in the limit, the players in $C$ are arbitrarily richer than the rest. That is, for all $i \in C$ and $j \not \in C$,  
	$$\lim_{t \to \infty} \frac{x_i(t)}{x_j(t)} = \infty.$$
\end{theorem}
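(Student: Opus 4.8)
The plan is to reduce everything to the normalized economy and then show that all players off the best cycle have vanishing amount. First I would apply Lemma~\ref{lem:normal} to divide every edge by the normalization coefficient $w$. Since normalization leaves the bids, and hence all the received fractions $f_{i,j}(t)=b_{i,j}(t)/\sum_k b_{k,j}(t)$, unchanged and merely rescales amounts by $w^t$, we have $x_i(t)/x_j(t)=x_i'(t)/x_j'(t)$, so it suffices to prove the statement in the normalized economy $\vec{a}'$, where $C$ has product exactly $1$ and every other cycle has product at most $1-\epsilon<1$. In $\vec{a}'$ the growth factor along $C$ is $\alpha^{t/k}=1$, so Theorem~\ref{lem:minmaxbound} yields constants $0<\gamma<\zeta$ with $\gamma<x_i'(t)<\zeta$ for all $i\in C$ and all $t$, while Lemma~\ref{lem:upperbound} gives $x_j'(t)<\zeta$ for \emph{every} player. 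Thus the amounts on $C$ are pinned inside $(\gamma,\zeta)$, and the theorem reduces to proving $x_j'(t)\to0$ for every $j\in S:=N\setminus C$; indeed this gives $x_j'(t)/x_i'(t)<x_j'(t)/\gamma\to0$ for $i\in C$.

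Next I would model the players in $S$ as a perturbed linear system. For $i\in S$, splitting the production according to whether the input good lies in $S$ or in $C$,
\[
x_i'(t+1)=\sum_{j\in S} a'_{i,j}\,f_{i,j}(t)\,x_j'(t)+\eta_i(t),\qquad \eta_i(t)=\sum_{j\in C} a'_{i,j}\,f_{i,j}(t)\,x_j'(t).
\]
By Theorem~\ref{lem:100_percent}, for each $j\in C$ its successor on $C$ receives a fraction tending to $1$, so every other buyer, in particular each $i\in S$, has $f_{i,j}(t)\to0$; since $x_j'(t)<\zeta$, the leakage term $\eta_i(t)\to0$. Hence the subsystem on $S$ is driven only by a vanishing external input, and all of its internal cycles are bad (any cycle contained in $S$ is a cycle of $\vec{a}'$ different from $C$, hence of product $<1$). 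Morally this is Theorem~\ref{thm:allbad} applied to the sub-economy on $S$, but a quantitative contraction is needed because the input from $C$ is only vanishing, not identically zero, at finite times.

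The heart of the argument, and the step I expect to be the main obstacle, is to convert ``all cycles in $S$ are bad'' into an honest contraction. I would seek a strictly positive weight vector $(u_i)_{i\in S}$ and a factor $\rho\in(0,1)$ with $u_i\,a'_{i,j}\le\rho\,u_j$ for every edge $(i,j)$ inside $S$. Writing $\phi_i=\log u_i$, this asks for a potential with $\log a'_{i,j}+\phi_i-\phi_j\le\log\rho$, i.e.\ $\phi_j-\phi_i\ge \log a'_{i,j}+\mu$ with $\mu=-\log\rho>0$. Such a system of difference constraints is feasible precisely when the edge lengths $\log a'_{i,j}+\mu$ create no positive cycle, which holds for all sufficiently small $\mu$ since $S$ is finite and every cycle $C'\subseteq S$ has $\sum_{(i,j)\in C'}\log a'_{i,j}<0$ bounded away from $0$; concretely one may take
\[
\mu<\min_{\text{cycles } C'\subseteq S}\frac{-\log\prod_{(i,j)\in C'} a'_{i,j}}{|C'|},
\]
reading the minimum over the empty set as $+\infty$ when $S$ is acyclic. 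This is exactly where the badness of every cycle in $S$ enters.

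Finally I would push the potential $P(t)=\sum_{i\in S}u_i\,x_i'(t)\ge0$ through the recursion. Swapping the order of summation and using $\sum_i f_{i,j}(t)\le1$ together with $u_i\,a'_{i,j}\le\rho\,u_j$,
\[
P(t+1)=\sum_{j\in S} x_j'(t)\sum_{i\in S} u_i\,a'_{i,j}\,f_{i,j}(t)+g(t)\le\rho\,P(t)+g(t),
\]
where $g(t)=\sum_{i\in S}u_i\,\eta_i(t)\to0$ as a finite sum of vanishing terms. A recursion $P(t+1)\le\rho\,P(t)+g(t)$ with $\rho<1$ and $g(t)\to0$ forces $\limsup_t P(t)\le\limsup_t g(t)/(1-\rho)=0$, hence $P(t)\to0$ and $x_j'(t)\to0$ for every $j\in S$. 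Combined with $x_i'(t)>\gamma$ on $C$, this gives $x_i(t)/x_j(t)=x_i'(t)/x_j'(t)\to\infty$, completing the proof.
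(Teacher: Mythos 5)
Your proposal is correct, and its skeleton matches the paper's proof: normalize via Lemma~\ref{lem:normal}, pin the cycle players' amounts inside a constant interval $(\gamma,\zeta)$ via Theorem~\ref{lem:minmaxbound}, use Theorem~\ref{lem:100_percent} to make the inflow from $C$ into $S=N\setminus C$ vanish, and reduce everything to showing $x_j'(t)\to 0$ for $j\in S$. Where you genuinely diverge is in how that last step is executed. The paper decomposes $x_j(t)=w_j(t)+z_j(t)$ (amount bought from $C$ versus from $S$), gets $w(t)\to0$ from the $100\%$ lemma, and then argues informally that since every cycle inside $S$ has product at most $1-\epsilon$, ``any amount that completes such a cycle is reduced by at least $\sqrt[n]{1-\epsilon}$,'' hence $z(t)\to0$; this flow-tracing intuition is asserted rather than proved, and the paper's own Lemma~\ref{lem:bounded_fraction} and Example~\ref{eg:optimal_subsequence} show that tracking quantities through cycles under time-varying splits is delicate (there, only a subsequence vanishes, and the proof needs a careful accounting of ``bad splits'' and a one-time boost constant). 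You instead build a weighted $\ell^1$ Lyapunov function: the weights $u_i=e^{\phi_i}$ come from a difference-constraint system whose feasibility is exactly the Bellman--Ford no-nonnegative-cycle condition, which holds because every cycle in $S$ has strictly negative log-product; this yields the honest one-step contraction $P(t+1)\le\rho P(t)+g(t)$ with $g(t)\to0$, and hence $P(t)\to0$ by a standard recursion argument. Your route buys rigor and a quantitative geometric decay rate while sidestepping flow-tracing entirely; the paper's route buys brevity and intuition but leaves its key decay step at the level of a plausibility argument. In short: same theorem, same scaffolding, but your contraction argument is a legitimate (and arguably tighter) replacement for the one informal step in the paper's proof.
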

\begin{proof}
	Since the theorem statement requires measuring only the ratios of the amounts, by Lemma \ref{lem:normal}, we can assume w.l.o.g. that the best cycle, $C$, has product one, while all the other cycles have product less than $1- \epsilon$, for some $\epsilon > 0$.
	Let $i$ be any player in $C$ and $j$ any player in $N \setminus C$.
	By Proposition \ref{lem:minmaxbound}, there exist constants $\gamma,\zeta> 0$ such that $\gamma < x_i(t) < \zeta$ for all $t \in \mathbb{N}$.
	By Proposition \ref{lem:100_percent}, the bid of player $j$ on good $i$ vanishes in the limit. We can decompose the  good of player $j$ at time $t$ in two parts: $x_j(t) = w_j(t) + z_j(t)$, where $w_j(t)$ is the amount obtained by bidding on the goods of players in $C$ and $z_j(t)$ the amount obtained from players in $N \setminus C$. Let $s(t) = \sum_{j \in N \setminus C} x_j(t) = w(t) + z(t)$, where $w(t) = \sum_{j \in N \setminus C} w_j(t)$ and $z(t) = \sum_{j \in N \setminus C} z_j(t)$.

	From Proposition \ref{lem:100_percent} it follows that $\lim_{t \to \infty} w_j(t) = 0$ for each $j \in N\setminus C$, so $\lim_{t \to \infty} w(t) = 0$.
	Thus in the limit the players in $N \setminus C$ only receive goods from other players in $N \setminus C$. But since the subgraph induced by $N \setminus C$ has only cycles that are bad (with product less than $1 - \epsilon$), any amount that completes such a cycle is reduced by at least $\sqrt[n]{1 - \epsilon} < 1$ (multiplicatively), which implies that $\lim_{t \to \infty} z(t) = 0$ as well. Thus $\lim_{t \to \infty} x_j(t) = 0$ for each player $j \in N \setminus C$. On the other hand $x_i(t) > \gamma > 0$, so $\lim_{t \to \infty} x_i(t)/x_j(t) = \infty$ as required.
\end{proof}

In our simulation we observed that the best cycle always absorbed all the money in the limit, and if there were multiple best cycles, they absorbed all the money collectively.

\begin{open}[Concentration of money] \label{op:concentration}
	Do the best cycles absorb all the money in the limit? That is, in an economy with best cycles $C_1 \ldots C_k$, is it the case that $$\lim_{t \to \infty} \sum_{i \in C_1 \cup \ldots \cup C_k} B_i(t) = 1$$
\end{open}

A natural property that one may try to use towards settling Open Problem \ref{op:concentration} this is that the players outside the best cycle spend a non-negligible fraction (i.e. bounded from below by a constant $\gamma > 0$) of their budget on the best cycle throughout time. The simulation for the four player economy $a = [[0.1, 1, 0.1, 0.1]$, $[1, 0.1, 0.1, 0.1]$, $[0.1, 0.1, 0.1, 0.3]$, $[0.1, 0.1, 0.1, 0.1]]$ shows that this property is likely to be false; see Note \ref{note:appabsorb} in Appendix \ref{app:inequality}. However, it can be observed that the best cycle still absorbs all the money through a different mechanism: player $4$ bids a non-negligible amount of his budget on the best cycle at all times, and player $3$ bids in the limit $100\%$ of his budget on player $4$.

\smallskip

We can characterize in some sense the behavior of the amounts in the long run: in the normalized economy the players on the best cycle will ``rotate'' a vector of amounts among themselves (except the rotation is imperfect, it is multiplied by the prefix given by the position of the current round in the cycle), while the amounts of the players outside the best cycle converge to zero (the latter is immediate from the inequality theorem).

\begin{corollary} \label{cor:amount_shift}
	Let $\vec{a}$ be any economy with a unique best cycle $C = (1, \ldots, k)$, where player $j$ uses the good of player $j-1$. Let $w$ be the normalization coefficient \footnote{Such that by dividing all the edges of $\vec{a}$ by $w$, we obtain an economy $\vec{a}'$ in which the best cycle has product $1$ and all the other cycles have product less than $1$.} of $\vec{a}$. 
	There exist values $x_{1}^*, \ldots, x_{k}^*$ such that for each player $i \in C$
	\begin{enumerate}
		\item $\lim_{t \to \infty} \frac{x_{i}(tk)}{w^{tk}} = x_{i}^*$
		\item for each $r \in \{1, \ldots, k-1\}$,
		$\lim_{t \to \infty} \frac{x_{i}(tk+r)}{w^{tk+r}} = \frac{x_{i-r}^*}{w^r} \cdot \left( \prod_{j=i-r}^{i-1} a_{j+1,j} \right), \; \mbox{where any index} \; \; \ell \leq 0 \; \; \mbox{is interpreted as} \; \; \ell + k.
		$
	\end{enumerate}
	For each player $i \in N \setminus C$, $\lim_{t \to \infty} \frac{x_{i}(t)}{w^t} = 0.$
\end{corollary}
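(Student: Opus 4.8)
The plan is to work throughout in the normalized economy $\vec{a}'$ of Lemma \ref{lem:normal}, where the best cycle $C$ has product $1$ and every other cycle has product at most $1-\epsilon$. Since $x_i'(t)=x_i(t)/w^t$, it suffices to produce, for each phase $r\in\{0,\dots,k-1\}$, the limit of the normalized amounts $x_i'(tk+r)$; the statement then translates back by multiplying by $w^{tk+r}$. Part (3) is then immediate: Theorem \ref{thm:disparity} already gives $\lim_{t\to\infty}x_j'(t)=0$ for every $j\notin C$, which is exactly $\lim_t x_j(t)/w^t=0$.

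Next I would pin down the asymptotic law of motion on the cycle. By Theorem \ref{lem:100_percent}, in the limit each $i\in C$ receives $100\%$ of good $i-1$, and hence a vanishing fraction of every other cycle good $j\in C$ (since good $j$ is absorbed entirely by its own successor $j+1\neq i$); combined with Part (3), which kills the off-cycle goods, this yields the asymptotically cyclic recursion $x_i'(t+1)=a'_{i,i-1}\,x_{i-1}'(t)+o(1)$ for $i\in C$. The corresponding exact map $M$, with $M_{i,i-1}=a'_{i,i-1}$ and zeros elsewhere, satisfies $M^k=I$ because the normalized cycle product is $1$, so over a full period the dynamics is asymptotically the identity. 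Consequently, once the phase-$0$ limits $x_i'(tk)\to x_i^*$ are known, applying the recursion $r$ times and substituting $a'_{j+1,j}=a_{j+1,j}/w$ yields exactly the Part (2) formula $x_{i-r}^*\,w^{-r}\prod_{j=i-r}^{i-1}a_{j+1,j}$, so the entire corollary reduces to establishing Part (1).

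For Part (1) the amounts on $C$ are trapped in a compact interval $[\gamma,\zeta]$ by Theorem \ref{lem:minmaxbound}, so each sampled sequence $x_i'(tk)$ has convergent subsequences and the task is to show the limit is unique. I would first nail the product: from the invariant $F'(t)=\prod_{(i,j)\in C}b_{i,j}(t)\,x_i'(t)=F'(0)$ of Theorem \ref{thm:universal} (with $\alpha'=1$), together with the convergence of budgets in Corollary \ref{cor:budget_shift} and the fact that each cycle player asymptotically bids $100\%$ on its predecessor (so the bids converge along each residue class), the product $\prod_{i\in C}b_{i,i-1}(tk+r)$ converges; dividing then gives convergence of $\prod_{i\in C}x_i'(tk+r)$ for every phase $r$. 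This pins the product but not the individual coordinates, because $M^k=I$ leaves one neutral direction along which the state could in principle drift. To rule out drift I would write $x_i'((t+1)k)=x_i'(tk)\,(1+\eta_i(t))$ with $\eta_i(t)\to 0$ and argue $\sum_t|\eta_i(t)|<\infty$, so that the sampled sequence is Cauchy: the correction $\eta_i$ aggregates (a) the off-cycle contributions, which are summable because every cycle off $C$ has product $\le 1-\epsilon$ and hence those amounts decay geometrically, and (b) the competing non-successor bids on each cycle good, controlled through Lemma \ref{lem:best_bidproduct} and Theorem \ref{thm:zero_cut}. Summability then forces $x_i'(tk)\to x_i^*$.

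The hard part will be exactly this last step: boundedness (Theorem \ref{lem:minmaxbound}) and the product-invariant only constrain $\prod_{i\in C}x_i'$, and the limiting cyclic map is neutral ($M^k=I$, not a contraction), so the $\omega$-limit set is a priori a whole one-parameter family of periodic orbits. Uniqueness of the limit cannot come from compactness alone and must instead be extracted from the finer structure, namely the convergence of the individual budgets in Corollary \ref{cor:budget_shift} together with the summable error estimate above. Everything else — the reduction, Part (3), the cyclic law of motion, and the Part (2) formula — is routine once this drift has been eliminated.
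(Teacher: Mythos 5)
Your skeleton matches the paper's own proof almost step for step: you pass to the normalized economy of Lemma \ref{lem:normal}; you obtain the off-cycle statement from (the proof of) Theorem \ref{thm:disparity}; you extract the asymptotic rotation $x_i'(t+1)=a_{i,i-1}'\,x_{i-1}'(t)+o(1)$ on $C$ from Theorems \ref{lem:100_percent} and \ref{thm:zero_cut}; and you reduce Part 2 to Part 1 by iterating this recursion $r$ times (your prefix-product bookkeeping, including the $w^{-r}$ factor, is correct). You are also right that the limiting map is neutral ($M^k=I$), so compactness from Theorem \ref{lem:minmaxbound} plus the invariant of Theorem \ref{thm:universal} cannot alone force convergence along residue classes. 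Where you diverge from the paper is in how Part 1 is closed: the paper invokes Corollary \ref{cor:budget_shift} (the cycle budgets converge along each residue class to a rotating vector $B_1^*,\dots,B_k^*$) and transfers that rotation to the amounts, whereas you attempt a self-contained quantitative argument, writing $x_i'((t+1)k)=x_i'(tk)\,(1+\eta_i(t))$ and claiming $\sum_t|\eta_i(t)|<\infty$.

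That summability claim is the genuine gap, and it cannot be repaired with the tools you cite. For source (a), the off-cycle normalized amounts are \emph{not} known to decay geometrically: the $(1-\epsilon)$ cycle-product argument only damps the mass recirculating inside $N\setminus C$, while the off-cycle players keep receiving fresh inflow from the cycle players, proportional to fractions $f_{j,\ell}(t)$ ($\ell\in C$, $j\notin C$) that Theorems \ref{lem:100_percent} and \ref{thm:zero_cut} show tend to $0$ but with no rate whatsoever. For source (b), Lemma \ref{lem:best_bidproduct} is a constant lower bound and Theorem \ref{thm:zero_cut} is purely qualitative, so the competing non-successor bids on cycle goods are likewise only $o(1)$, not summable. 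With only $\eta_i(t)\to 0$ and boundedness, the sampled sequence can oscillate forever: for instance $x_i'(tk)=2+\sin(\log(t+1))$ has multiplicative errors of order $1/t$, which vanish but are not summable, and has no limit. So the drift-exclusion step, which you yourself flag as the heart of the matter, is not actually established; to complete the proof you would need either genuine rate estimates (which none of the cited results provide) or the paper's shortcut of inheriting the rotation from the already-established Corollary \ref{cor:budget_shift}.
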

\begin{proof}
	Let $\vec{a}'$ be the normalized economy, with edges $a_{i,j}' = a_{i,j}/w$.
	By Proposition \ref{lem:100_percent} and \ref{thm:zero_cut}, we get that in the limit in the normalized economy each player in $C$ receives everything from its predecessor in $C$ and a vanishing amount from $N \setminus C$. By Corollary \ref{cor:budget_shift}, there exists a vector of budgets $B_1^* \ldots B_k^*$ that in the limit get rotated among the players in $C$. It follows that the amounts will be rotated similarly, with the caveat that the amounts will be multiplied by a prefix of the coefficients $a_{\ell,\ell-1}'$ contained between the index of the player $i$ considered and $i-r$, where $r \equiv t \; (\mod k)$ is the shift. To get property 2 required in the statement from the amounts of players in $\vec{a}$, multiply the amounts in $\vec{a}'$ by a term equal to $w^t$ in each round $t$. The limit behavior of the amounts of the players in $N \setminus C$ holds by 
	Theorem \ref{thm:disparity}.
\end{proof}
The next statement follows by Corollary \ref{cor:amount_shift}.
\begin{corollary}
	Suppose an economy has a unique best cycle $C = (1, \ldots, k)$. Then the Gini index for amounts cycles with period $k$.
\end{corollary}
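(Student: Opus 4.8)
The plan is to exploit the scale-invariance of the Gini index together with the asymptotic description of the amounts provided by Corollary~\ref{cor:amount_shift}. First I would record the elementary observation that $G$ is invariant under multiplying the whole amount vector by a positive scalar: replacing each $u_i$ by $\lambda u_i$ with $\lambda>0$ multiplies both the numerator $\sum_{i,j}|u_i-u_j|$ and the denominator $2n\sum_i u_i$ by $\lambda$, leaving the ratio unchanged. Hence, letting $w$ be the normalization coefficient of the economy and setting $\tilde{x}_i(t)=x_i(t)/w^t$, we have $G(\vec{x}(t))=G(\tilde{\vec{x}}(t))$ for every $t$, so it suffices to analyze the Gini index of the \emph{normalized} amounts.

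Next I would invoke Corollary~\ref{cor:amount_shift} to pin down the limit of the normalized amount vector along each residue class modulo $k$. Fix $r\in\{0,1,\dots,k-1\}$. For $i\in C$ the corollary gives an explicit limit $\lim_{t\to\infty}\tilde{x}_i(tk+r)=v_i^{(r)}$ (namely $x_i^*$ when $r=0$, and $\frac{x_{i-r}^*}{w^r}\prod_{j=i-r}^{i-1}a_{j+1,j}$ otherwise), while for $i\in N\setminus C$ it gives $\lim_{t\to\infty}\tilde{x}_i(tk+r)=0$. Collecting these coordinatewise limits defines a fixed vector $\vec{v}^{(r)}$, and $\tilde{\vec{x}}(tk+r)\to\vec{v}^{(r)}$ as $t\to\infty$.

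To conclude I would use continuity of $G$ at each $\vec{v}^{(r)}$. The only way $G$ can fail to be continuous is if the denominator $\sum_i u_i$ vanishes, but Theorem~\ref{lem:minmaxbound}, applied in the normalized economy where $\alpha=1$, guarantees $\tilde{x}_i(t)\in(\gamma,\zeta)$ for every $i\in C$, so each coordinate $v_i^{(r)}$ with $i\in C$ is at least $\gamma>0$ and therefore $\sum_i v_i^{(r)}\geq\gamma>0$. Thus $G$ is continuous at $\vec{v}^{(r)}$ and $\lim_{t\to\infty}G(\vec{x}(tk+r))=\lim_{t\to\infty}G(\tilde{\vec{x}}(tk+r))=G(\vec{v}^{(r)})=:G_r^*$. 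Since each residue class converges to its own value $G_r^*$, the sequence $G(t)$ approaches the period-$k$ periodic sequence $G_0^*,\dots,G_{k-1}^*$, which is exactly the claim.

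The main obstacle is this last continuity requirement rather than anything combinatorial: one must ensure the limiting normalized vectors are not the all-zero vector, at which the Gini index is undefined. This is precisely where the lower bound of Theorem~\ref{lem:minmaxbound} on the amounts of the players in $C$ is essential, and it is the reason the argument needs the players on the best cycle to retain normalized amounts bounded away from zero.
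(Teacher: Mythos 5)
Your proposal is correct and follows essentially the same route as the paper, which simply derives the statement from Corollary~\ref{cor:amount_shift}: you make explicit the three ingredients that derivation tacitly relies on (scale-invariance of the Gini index under the normalization $x_i(t)/w^t$, coordinatewise convergence along each residue class modulo $k$, and continuity of $G$ at the limit vectors, justified by the lower bound of Theorem~\ref{lem:minmaxbound} keeping the denominator away from zero). This is a faithful, properly detailed version of the paper's argument.
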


We conjecture the Gini index for budgets converges as the best cycle will absorb all the money in the limit.
\begin{open}
	What happens with the Gini index for amounts when there are multiple good cycles? Does the Gini index in terms of budgets converge when the best cycle is unique?
\end{open}

\section{Phase Transitions}

In this section we show the existence of phase transitions, finding that it is possible for a player to grow even if he is not part of any good (simple) cycle, as long as he is networked ``well-enough''. Our main theorem investigating this phenomenon is for star networks, where a star is a configuration with players $1 \ldots n$, such that player $n$ is the center and trades with everyone else, while the other players only trade with player $n$. Formally, $a_{n,i}, a_{i,n} > 0$, for all $i < n$, while all other $a_{i,j} = 0$. 
\begin{theorem}[Phase transitions for stars] \label{thm:boats}
	Consider a star economy with $n$ players and at least one good cycle, such that player $n$ is the center. Let $\alpha_i = a_{i,n} \cdot a_{n,i}$ be the product on the cycle between player $i$ and player $n$ and $\alpha^* = \max_{i=1}^{n-1} \alpha_i$ the product of the best cycle. Then for any player $i = 1 \ldots n-1$, its amount
	\begin{itemize}
		\item grows if $\alpha_{i} > 1/\sqrt{\alpha^*}$.
		\item vanishes if $\alpha_{i} < 1/\sqrt{\alpha^*}$.
		\item stays in a bounded region throughout time if $\alpha_{i} = 1/\sqrt{\alpha^*}$.
	\end{itemize}
\end{theorem}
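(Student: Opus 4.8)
Throughout I write $C=(i^\ast,n)$ for a best cycle. In a star the only simple cycles are the two-cycles $(i,n)$ with product $\alpha_i=a_{i,n}a_{n,i}$, so every best cycle has length $2$ and product $\alpha^\ast$, and the hub $n$ lies on each of them. The plan is to first strip the dynamics down to two exact scalar identities, then convert the cycle invariant of Theorem~\ref{thm:universal} into a recursion for a single player's amount, and finally read off the threshold.

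\emph{Step 1 (local simplifications).} Since $a_{i,j}=0$ for $j\neq n$, the proportional rule forces each leaf $i<n$ to bid its whole budget on good $n$, i.e. $b_{i,n}(t)=B_i(t)$; and since $a_{k,i}=0$ for $k\neq n$, the hub buys $100\%$ of good $i$, i.e. $y_{n,i}(t)=x_i(t)$. These give the exact identities $x_i(t+1)=a_{i,n}\,\frac{B_i(t)}{B_n(t+1)}\,x_n(t)$ and $B_i(t+1)=b_{n,i}(t)=\frac{a_{n,i}B_n(t)}{x_n(t)}\,x_i(t-1)$. I would be careful with the index shift in the second identity — the bid placed at time $t$ uses the production of round $t-1$ — since this is exactly what later produces the $\sqrt{\alpha^\ast}$ in the threshold rather than a naive $\alpha_i>1$. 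I would also record the invariant of Theorem~\ref{thm:universal} for the two-cycle $(i,n)$, namely $B_i(t)\,B_i(t+1)\,x_i(t)\,x_n(t)=\alpha_i^{\,t}F_i(0)$ with $F_i(0)>0$.

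\emph{Step 2 (control the hub, eliminate budgets).} Because $n$ lies on a best cycle of length $2$, Theorem~\ref{lem:minmaxbound} gives $x_n(t)=\Theta\!\big((\alpha^\ast)^{t/2}\big)$, while Lemma~\ref{lem:best_bidproduct} gives $B_n(t)\ge b_{i^\ast,n}(t)>\delta$ (and $B_n(t)\le1$), so $B_n(t)=\Theta(1)$. Multiplying the two budget identities and substituting into the invariant cancels $x_n(t)$ and both budget factors, leaving the central relation $x_i(t-2)\,x_i(t-1)\,x_i(t)=\Theta\!\big((\alpha_i\sqrt{\alpha^\ast})^{\,t}\big)$, with constants depending only on $\vec{a}$ and the starting state. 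Setting $\lambda=(\alpha_i\sqrt{\alpha^\ast})^{1/3}$, the product of three consecutive amounts grows, stays bounded, or decays according as $\lambda>1$, $\lambda=1$, or $\lambda<1$, i.e. according as $\alpha_i>1/\sqrt{\alpha^\ast}$, $\alpha_i=1/\sqrt{\alpha^\ast}$, or $\alpha_i<1/\sqrt{\alpha^\ast}$. As a sanity check, $i=i^\ast$ gives $\lambda=\sqrt{\alpha^\ast}$, consistent with $x_{i^\ast}(t)=\Theta\!\big((\alpha^\ast)^{t/2}\big)$.

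\emph{Step 3 (from the product to the individual trajectory) and the main obstacle.} A bound on $x_i(t-2)x_i(t-1)x_i(t)$ does not by itself control each $x_i(t)$: a priori the three residue classes mod $3$ could drift apart, and the crude upper bound $x_i(t)<\zeta(\alpha^\ast)^{t/2}$ of Lemma~\ref{lem:upperbound} is too loose near the threshold. To close this I would combine the two identities into the third-order recursion $x_i(t+1)=\alpha_i\,\frac{B_n(t-1)}{B_n(t+1)}\,\frac{x_n(t)}{x_n(t-1)}\,x_i(t-2)$ and telescope it along each residue class. Here $\frac{B_n(t-1)}{B_n(t+1)}\to1$ (same parity, by the period-$2$ convergence of the hub's budget in Corollary~\ref{cor:budget_shift}), and $x_n(t)/(\alpha^\ast)^{t/2}$ converges to a period-$2$ limit (Corollary~\ref{cor:amount_shift}); consequently within each class the product of the oscillating coefficients stays within constant factors of $(\sqrt{\alpha^\ast})^{m}$, upgrading the estimate to $x_i(3m+r)=\Theta(\lambda^{3m+r})$ for each $r\in\{0,1,2\}$, from which growth/boundedness/vanishing follow termwise. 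The hard part I expect to fight with is exactly this control of the \emph{infinite} products of the $o(1)$-perturbed coefficients inside a residue class: termwise convergence is not enough, and I would need the period-$2$ limits of $x_n$ and $B_n$ to be approached with summable (e.g. geometric) error, or else to argue boundedness of these products directly from telescoping structure, so as to rule out a compounding exponential drift. A secondary loose end is that Corollaries~\ref{cor:amount_shift}--\ref{cor:budget_shift} assume a \emph{unique} best cycle, so the tie case ($\alpha_i=\alpha^\ast$ for several leaves) would need separate handling; I expect the same invariant-plus-hub-control scheme to carry over, with $x_n(t)=\Theta\!\big((\alpha^\ast)^{t/2}\big)$ still provided by Theorem~\ref{lem:minmaxbound}.
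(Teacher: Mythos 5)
Your Steps 1 and 2 are correct, and they take a genuinely different route from the paper. The two identities $x_i(t+1)=a_{i,n}\frac{B_i(t)}{B_n(t+1)}x_n(t)$ and $B_i(t+1)=b_{n,i}(t)=\frac{a_{n,i}B_n(t)}{x_n(t)}x_i(t-1)$ check out (including the one-round index shift), and substituting them into the cycle invariant of Theorem \ref{thm:universal}, together with $x_n(t)=\Theta\bigl((\alpha^*)^{t/2}\bigr)$ from Theorem \ref{lem:minmaxbound} and $B_n(t)=\Theta(1)$ from Lemma \ref{lem:best_bidproduct}, does give $x_i(t-2)\,x_i(t-1)\,x_i(t)=\Theta\bigl((\alpha_i\sqrt{\alpha^*})^{t}\bigr)$. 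The paper avoids your difficulty entirely by a different device: it shows that the hub's bid fractions $f_i(t)=b_{n,i}(t)/B_n(t)$ obey the \emph{autonomous} three-round recursion $f_i(t+3)=f_i(t)\alpha_i/\sum_j f_j(t)\alpha_j$ (Lemma \ref{lem:appstar_recursion}), solves it in closed form (Lemma \ref{lem:appstar_formula}), and from $f_i(t)=\Theta\bigl((\alpha_i/\alpha^*)^{\lfloor t/3\rfloor}\bigr)$ reads off two-sided \emph{pointwise} bounds on each $x_i(t)$; all the $B_n$ and $x_n$ factors cancel exactly in that recursion, so no infinite products ever appear.

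The genuine gap is the one you flag in Step 3, and your primary proposed fix does not close it. A bound on $x_i(t-2)x_i(t-1)x_i(t)$ indeed cannot be disaggregated (the three residue classes mod $3$ can drift apart exponentially while the product stays on track), and controlling the products $\prod_k \frac{B_n(\cdot)}{B_n(\cdot)}$ and $\prod_k\frac{x_n(\cdot)}{x_n(\cdot)}$ termwise is hopeless: each factor is only known within a fixed multiplicative window, and that slack compounds exponentially. Your fix via summable convergence rates is not available from the paper's toolkit: Corollaries \ref{cor:budget_shift} and \ref{cor:amount_shift} assert bare limits with no rate whatsoever (their proofs give none), and they additionally assume a \emph{unique} best cycle, so this route stalls both near the threshold and in the tie case.

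The gap can, however, be closed inside your own framework, using a cancellation you already wrote down but did not exploit. In your recursion $x_i(t+3)=\alpha_i\,\frac{B_n(t+1)}{B_n(t+3)}\,\frac{x_n(t+2)}{x_n(t+1)}\,x_i(t)$ the coefficient is $\alpha_i$ times a factor \emph{independent of $i$}. Hence for any leaf $i$ and any best leaf $i^*$, exactly $\frac{x_i(t+3)}{x_{i^*}(t+3)}=\frac{\alpha_i}{\alpha^*}\cdot\frac{x_i(t)}{x_{i^*}(t)}$, so $x_i(3m+r)=(\alpha_i/\alpha^*)^m\,\frac{x_i(r)}{x_{i^*}(r)}\,x_{i^*}(3m+r)$; the troublesome products cancel completely. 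Since $x_{i^*}(3m+r)=\Theta\bigl((\alpha^*)^{(3m+r)/2}\bigr)$ by Theorem \ref{lem:minmaxbound} (which, like Lemma \ref{lem:best_bidproduct}, applies to ``one of the best cycles'' and needs no uniqueness), you obtain $x_i(3m+r)=\Theta\bigl((\alpha_i\sqrt{\alpha^*})^{m}\bigr)$ pointwise in every residue class, and all three cases of the theorem -- including ties -- follow. With this replacement for Step 3, your argument is complete, and it is arguably shorter than the paper's explicit three-round computation.
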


For stars we write $\lambda_i = a_{i,1}$ and $\mu_i = a_{1,i}$ for all $i = 1 \ldots n-1$. Denote by $f_i(t) = b_{n,i}(t)/B_n(t)$ the fraction invested by player $n$ on player $i \in \{1, \ldots, n-1\}$ in round $t$, with $\sum_{i=1}^{n-1} f_i(t) = 1$ for all $t$. Since the proportional update is non-wasteful after the first round, we can assume that $b_{i,n}(t) = B_i(t)$ for all $t$ and all the players $i = 1 \ldots n-1$ (even if player $i$ starts by investing in multiple goods, his bid gets corrected after the first round so that he bids zero on everything except the good of player $n$).
An example of a star with this notation is in Figure \ref{fig:star}.

\begin{figure}[h!]
	\centering
	\includegraphics[scale=0.4]{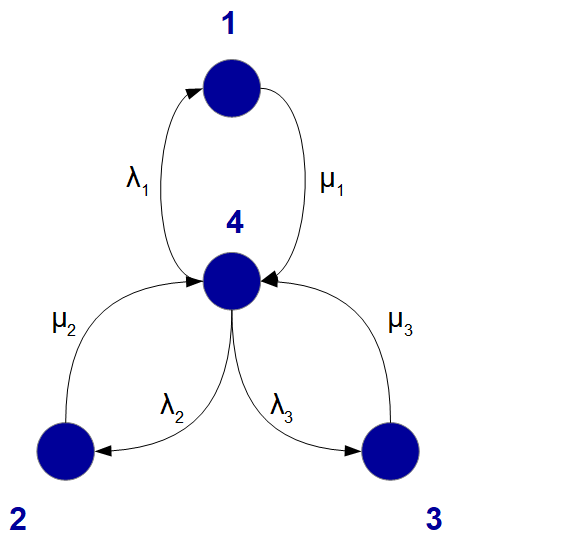}
	\caption{Star network with player $4$ at the center.}
	\label{fig:star}
\end{figure}

\begin{lemma} \label{lem:appstar_recursion}
	The fraction of player $n$'s budget invested by player $n$ on player $i = 1 \ldots n-1$ at time $t+3$, given the fractions invested by player $n$ on player $i$ at time $t$, is 
	$$
	f_i(t+3) = \frac{f_i(t) \cdot \lambda_i \mu_i}{\sum_{j=1}^{n-1} f_j(t) \cdot \lambda_j \mu_j},
	$$where $\lambda_j = a_{j,n}$ and $\mu_j = a_{n,j}$ for all $j = 1 \ldots n-1$.
\end{lemma}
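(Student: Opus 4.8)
The plan is to unwind the trading-post dynamics over three consecutive rounds, exploiting the fact that in a star every trading-post fraction collapses to a trivial expression; throughout I use $\lambda_i = a_{i,n}$, $\mu_i = a_{n,i}$ and the stated normalization $b_{i,n}(t) = B_i(t)$ (each leaf spends its whole budget on the center's good $n$). The first step is to record the structural simplifications forced by the star topology. The only players who value good $n$ are the leaves, since $a_{n,n}=0$ means player $n$ places no bid on his own good; hence the bidders on good $n$ are exactly $1,\ldots,n-1$, leaf $i$ receives $y_{i,n}(t) = \frac{B_i(t)}{\sum_{k<n} B_k(t)}\, x_n(t)$, and it produces $x_i(t+1) = \lambda_i\, y_{i,n}(t)$. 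Symmetrically, player $n$ is the \emph{unique} bidder on each leaf good $i$ (no leaf values another leaf's good), so $y_{n,i}(t) = x_i(t)$ and $x_n(t+1) = \sum_{j<n} \mu_j\, x_j(t)$. Finally, since only player $n$ bids on good $i$, the seller's revenue gives $B_i(t+1) = b_{n,i}(t) = f_i(t)\, B_n(t)$.

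Next I would convert the center's bid-update rule into a one-step recursion for the fractions. Substituting $y_{n,i}(t) = x_i(t)$ and the expression for $x_n(t+1)$ into the definition of $b_{n,i}(t+1)$ and dividing by $B_n(t+1)$ yields $f_i(t+1) = \frac{\mu_i\, x_i(t)}{\sum_{j<n} \mu_j\, x_j(t)}$. This is the single identity I will apply at time $t+2$ to read off $f_i(t+3)$.

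The heart of the argument is then to express $x_i(t+2)$ through $f_i(t)$. From $B_i(t+1) = f_i(t)\, B_n(t)$ and $\sum_{k<n} f_k(t) = 1$ the normalizing denominator collapses, $\sum_{k<n} B_k(t+1) = B_n(t)$, so the fraction of good $n$ that leaf $i$ obtains at time $t+1$ is exactly $f_i(t)$, giving $x_i(t+2) = \lambda_i\, f_i(t)\, x_n(t+1)$. Plugging this into the one-step recursion evaluated at $t+2$, the common factor $x_n(t+1)$ cancels between numerator and denominator, leaving precisely $f_i(t+3) = \frac{\lambda_i \mu_i\, f_i(t)}{\sum_{j<n} \lambda_j \mu_j\, f_j(t)}$, as claimed.

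The computation is routine once these simplifications are in place; the only point requiring care is the bookkeeping of which players bid on which goods, so that the trading-post denominators genuinely reduce to a single term for each leaf good and to the leaves' total budget for good $n$. The clean cancellation of $x_n(t+1)$, and with it the disappearance of \emph{all} amounts from the final map, is what makes the three-round dynamics depend only on the cycle products $\lambda_j \mu_j$ and the current fractions --- and it is the reason the natural period here is three rather than one.
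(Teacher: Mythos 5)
Your proof is correct and follows essentially the same route as the paper's: unwinding the trading-post dynamics over three rounds using the star structure (center is the sole bidder on each leaf good, leaves bid their whole budget on good $n$). Your organization is somewhat more economical --- isolating the one-step identity $f_i(t+1) = \mu_i x_i(t) / \sum_{j<n}\mu_j x_j(t)$ and the collapse $\sum_{k<n} B_k(t+1) = B_n(t)$ lets you skip the explicit computation of $x_n(t+2)$, $B_i(t+2)$ and the time-$(t+3)$ amounts that the paper writes out but does not actually need for the final cancellation.
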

\begin{proof}
	A useful observation is that player $n$ will always be able to get $100\%$ of all the goods of players $1 \ldots n-1$, since he is the only bidder competing for these goods. Thus at any time $t$, the amount of player $n$, given amounts $x_1(t-1) \ldots x_n(t-1)$ at time $t-1$, will be $x_n(t) = \sum_{i=1}^{n-1} x_i(t-1) \cdot \mu_i$.
	Now consider the amounts, budgets, and bids $x_i(t)$, $B_i(t)$, and $b_{i,j}(t)$ at any time $t$. Then the amounts at time $t+1$ are given by 
	$$
	x_n(t+1) = \sum_{j=1}^{n-1} x_j(t) \cdot \mu_j 
	$$
	and
	$$
	x_i(t+1) = \frac{B_i(t) \cdot  x_1(t) \cdot \lambda_i}{\sum_{j=1}^{n-1} B_j(t)}, \;\; \mbox{for all} \; i = 1 \ldots n-1.
	$$
	The updated fractions in player $n$'s bids following the production are:
	$$
	f_i(t+1) = \frac{x_i(t) \cdot \mu_i}{\sum_{k=1}^{n-1} x_k(t) \cdot \mu_k}, \;\; \mbox{for all} \; i = 1 \ldots n-1.
	$$
	The updated budgets are $B_n(t+1) = \sum_{j=1}^{n-1} B_j(t)$ and $B_i(t+1) = f_i(t) \cdot B_n(t)$ for $i = 1 \ldots n-1$. Recall that since the update rule is non-wasteful after the first round, we have that $b_{j,n}(t') = B_j(t')$ for each player $j$ and time $t'$, so the budgets of these players are equal to their bids on player $n$.
	
	\medskip
	
	The amount of player $n$ at time $t+2$ is
	$$
	x_n(t+2) = \sum_{j=1}^{n-1} x_j(t+1) \cdot \mu_j = \sum_{j=1}^{n-1} \left( \frac{B_j(t) \cdot x_1(t) \cdot \lambda_j}{\sum_{k=1}^{n-1} B_k(t)} \right) \mu_j = \left( \frac{x_n(t)}{\sum_{k=1}^{n-1} B_k(t)} \right) \cdot \left( \sum_{j=1}^{n-1} B_j(t) \cdot \lambda_j \mu_j \right)
	$$
	while the amounts of players $i = 1 \ldots n-1$ are
	\begin{eqnarray*}
		x_i(t+2) & = & \frac{B_i(t+1) \cdot x_1(t+1) \cdot \lambda_i}{\sum_{j=1}^{n-1} B_j(t+1)} = \frac{f_i(t) \cdot B_n(t) \cdot \left(\sum_{j=1}^{n-1} x_j(t) \cdot \mu_j\right)\lambda_i}{\sum_{j=2}^n f_j \cdot B_n(t)} \\
		& = & \frac{f_i(t) \cdot \lambda_i \cdot \left(\sum_{j=1}^{n-1} x_j(t) \cdot \mu_j\right)}{\sum_{j=1}^{n-1} f_j(t)} = f_i(t) \cdot \lambda_i \cdot 
		\left( \sum_{j=1}^{n-1} x_j(t) \cdot \mu_j
		\right)
	\end{eqnarray*}
	The updated fractions of player $n$ are
	$$
	f_i(t+2) = \frac{x_i(t+1) \cdot \mu_i}{\sum_{j=1}^{n-1} x_j(t+1)\cdot \mu_j} = 
	\frac{\left(\frac{B_i(t) \cdot  x_n(t) \cdot \lambda_i}{\sum_{j=1}^{n-1} B_j(t)} \cdot \mu_i \right)}{\sum_{j=1}^{n-1} \left( \frac{B_j(t) \cdot  x_n(t) \cdot \lambda_j \mu_j}{\sum_{k=1}^{n-1} B_k(t)}\right)} 
	= \frac{B_i(t) \cdot \lambda_i \mu_i}{\sum_{j=1}^{n-1} B_j(t) \cdot \lambda_j \mu_j}
	$$
	The budgets for round $t+2$ are  
	$$B_n(t+2) = \sum_{j=1}^{n-1} B_j(t+1) = \sum_{j=1}^{n-1} f_j(t) \cdot B_n(t) = B_n(t)$$ 
	and for each player $i =1 \ldots n-1$,
	$$B_i(t+2) = f_i(t+1) \cdot B_n(t+1) = \left( \frac{x_i(t) \cdot \mu_i}{\sum_{k=1}^{n-1} x_k(t) \cdot \mu_k} \right) \sum_{j=1}^{n-1} B_j(t).
	$$
	
	The amounts at time $t+3$ are
	$$
	x_n(t+3) = \sum_{j=1}^{n-1} x_j(t+2) \cdot \mu_j = \sum_{j=1}^{n-1} f_j(t) \cdot \lambda_j \mu_j \cdot \left( \sum_{k=1}^{n-1} x_k(t) \cdot \mu_k \right) = \left(\sum_{j=1}^{n-1} x_j(t) \cdot \mu_j \right) \left(\sum_{j=1}^{n-1} f_j(t) \cdot \lambda_j \mu_j \right),
	$$
	while for each player $i =1 \ldots n-1$,
	\begin{eqnarray*}
		x_i(t+3) & = & \frac{B_i(t+2) \cdot x_n(t+2) \cdot \lambda_i}{\sum_{j=1}^{n-1} B_j(t+2)} \\
		& = & \left( \frac{\left( \frac{x_i(t) \cdot \mu_i}{\sum_{k=1}^{n-1} x_k(t) \cdot \mu_k} \right) \left(\sum_{j=1}^{n-1} B_j(t) \right)}{\sum_{j=1}^{n-1} \left( \frac{x_j(t)\cdot \mu_j}{\sum_{k=1}^{n-1} x_k(t) \cdot \mu_k} \right) \left(\sum_{k=1}^{n-1} B_k(t)\right)} \right) \left( \frac{x_n(t)}{\sum_{k=1}^{n-1} B_k(t)}\right) \left( \sum_{j=1}^{n-1} B_j(t) \cdot \lambda_j \mu_j \right) \cdot \lambda_i \\
		& = & \frac{x_n(t) \cdot x_i(t) \cdot \lambda_i \mu_i \cdot \left( \sum_{j=1}^{n-1} B_j(t) \cdot \mu_j \lambda_j \right)}{\left(\sum_{j=1}^{n-1} B_j(t)\right) \left(\sum_{j=1}^{n-1} x_j(t) \cdot \mu_j \right)}
	\end{eqnarray*}
	
	Finally, the fractions at time $t+3$ are:
	$$
	f_i(t+3) = \frac{x_i(t+2) \cdot \mu_i}{\sum_{j=1}^{n-1} x_j(t+2) \cdot \mu_j} = \frac{ f_i(t) \cdot \lambda_i \mu_i \left(  \sum_{j=1}^{n-1} x_j(t) \cdot \mu_j \right)}{\left( \sum_{j=1}^{n-1} x_j(t) \cdot \mu_j \right) \left(\sum_{j=1}^{n-1} f_j(t) \cdot \lambda_j \mu_j \right)} = \frac{f_i(t) \cdot \lambda_i \mu_i}{\sum_{j=1}^{n-1} f_j(t) \cdot \lambda_j \mu_j},
	$$
	for each player $i = 1 \ldots n-1$, as required.
\end{proof}

\begin{lemma} \label{lem:appstar_formula}
	In a star economy with player $n$ at the center,
	the fraction invested by player $n$ on player $i$ at time $t$ is:
	\begin{equation*}
	f_i(t) =
	\frac{f_i(r) \cdot (\lambda_i \mu_i)^{\lfloor t/3 \rfloor}}{\sum_{j=1}^{n-1} f_j(r) \cdot (\lambda_j \mu_j)^{\lfloor t/3 \rfloor}}, \; \; \mbox{for} \; r = t \pmod{3}, 
	\end{equation*}
	where $\lambda_j = a_{j,n}$ and $\mu_j = a_{n,j}$ for all $j = 1 \ldots n-1$.
\end{lemma}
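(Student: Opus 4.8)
The plan is to solve the period-three recursion of Lemma \ref{lem:appstar_recursion} in closed form by induction on $m = \lfloor t/3 \rfloor$. Because that recursion advances time in steps of three, it never mixes residue classes modulo $3$: for a fixed $r \in \{0,1,2\}$ the subsequence $f_i(r), f_i(r+3), f_i(r+6), \dots$ evolves entirely within its own class. Accordingly I would fix $r$ and run the induction on $m$ separately for each of the three starting residues, which is precisely why the claimed formula carries $f_i(r)$ with $r = t \pmod 3$ as its initial condition.

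For the base case $m = 0$ (that is, $t = r$), the exponent $\lfloor t/3 \rfloor$ is zero, so every factor $(\lambda_j \mu_j)^0 = 1$ and the right-hand side collapses to $f_i(r) / \sum_{j=1}^{n-1} f_j(r)$. Since the fractions $f_j$ always sum to one (as recorded just before Lemma \ref{lem:appstar_recursion}), the denominator equals $1$ and the formula reduces to the identity $f_i(r) = f_i(r)$.

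For the inductive step, assume the formula at time $t = 3m + r$ and apply Lemma \ref{lem:appstar_recursion} to pass to time $t+3 = 3(m+1)+r$. Writing $S = \sum_{k=1}^{n-1} f_k(r)(\lambda_k \mu_k)^m$ for the common denominator supplied by the induction hypothesis, each numerator $f_i(t)\,\lambda_i \mu_i$ becomes $f_i(r)(\lambda_i \mu_i)^{m+1}/S$, while the denominator $\sum_j f_j(t)\,\lambda_j \mu_j$ becomes $\tfrac{1}{S}\sum_j f_j(r)(\lambda_j \mu_j)^{m+1}$. The factor $1/S$ cancels between numerator and denominator, leaving exactly the claimed expression with exponent $m+1$ and residue $r$, which completes the induction.

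There is essentially no hard step here: the only points requiring minor care are that the induction must be carried out within each residue class rather than over all $t$ at once, and that the base case relies on the normalization $\sum_j f_j(r) = 1$. The clean cancellation of the common denominator $S$ is what makes the proportional recursion of Lemma \ref{lem:appstar_recursion} integrable into the stated geometric closed form; I would expect the write-up to be a short, routine verification rather than to present any genuine obstacle.
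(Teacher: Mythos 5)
Your proposal is correct and follows essentially the same route as the paper's own proof: fix the residue $r \in \{0,1,2\}$, induct on the number of three-step iterations, use the normalization $\sum_{j=1}^{n-1} f_j(r) = 1$ for the base case, and let the common denominator cancel when substituting the induction hypothesis into the recursion of Lemma \ref{lem:appstar_recursion}. No gaps; the paper's write-up is exactly the short routine verification you anticipated.
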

\begin{proof}
	Take any $r \in \{0,1,2\}$.
	We prove by induction that $f_i(3t+r) = \frac{f_i(r) \cdot (\lambda_i \mu_i)^{ t }}{\sum_{j=1}^{n-1} f_j(r) \cdot (\lambda_j \mu_j)^{ t }}$.
	The base case holds since $\sum_{j=1}^{n-1} f_j(t') = 1$ for all $t'$, so 
	$$f_i(r) = \frac{f_i(r) \cdot (\lambda_i \mu_i)^0}{\sum_{j =1}^{n-1} f_j(r) \cdot (\lambda_j \mu_j)^{0}}.$$
	
	Suppose that $f_i(3(t-1)+r) =  \frac{f_i(r) \cdot (\lambda_i \mu_i)^{ t -1}}{\sum_{j=1}^{n-1} f_j(r) \cdot (\lambda_j \mu_j)^{ t - 1}}$ holds. By Lemma \ref{lem:appstar_recursion}, we have that $f_i(t'+3) = \frac{f_i(t') \cdot \lambda_i \mu_i}{\sum_{j=1}^{n-1} f_j(t') \cdot \lambda_j \mu_j}$, from which we can derive the following expression for $f_i(r + 3t)$:
	\begin{eqnarray*}
		f_i(3t+r) & = & \frac{f_i(3(t-1)+r) \cdot \lambda_i \mu_i}{\sum_{j=1}^{n-1} f_j(3(t-1)+r) \cdot \lambda_j \mu_j}\\ 
		&= & \frac{ \left( \frac{f_i(r) \cdot (\lambda_i \mu_i)^{ t -1}}{\sum_{j=1}^{n-1} f_j(r) \cdot (\lambda_j \mu_j)^{ t - 1}} \right)\cdot \lambda_i \mu_i}{\sum_{j=1}^{n-1} \left( \frac{f_j(r) \cdot (\lambda_j \mu_j)^{ t -1}}{\sum_{k=1}^{n-1} f_k(r) \cdot (\lambda_k \mu_k)^{ t - 1}} \cdot \lambda_j \mu_j \right)} \\
		& = & 
		\frac{f_i(r) \cdot (\lambda_i \mu_i)^t}{\sum_{j=1}^{n-1} f_j(r) \cdot (\lambda_j \mu_j)^t}
	\end{eqnarray*}
	This is equivalent to the required statement, which completes the proof.
\end{proof}

\begin{proof}[Proof of Theorem \ref{thm:boats}]
	Denote by $\lambda_j = a_{j,n}$ and $\mu_j = a_{n,j}$ for all $j = 1 \ldots n-1$.
	Without loss of generality, suppose the best cycle is between player $1$ and player $n$.
	Consider any player $i \in \{2, \ldots, n-1\}$.
	Take any $r \in \{0,1,2\}$. By Lemma \ref{lem:appstar_formula}, we have that 
	$f_i(t) = \frac{f_i(r) \cdot (\lambda_i \mu_i)^{\lfloor t /3 \rfloor }}{\sum_{j=1}^{n-1} f_j(r) \cdot (\lambda_j \mu_j)^{ \lfloor t /3 \rfloor }}$, where $r = t \pmod{3}$.
	By Lemma \ref{lem:best_bidproduct}, there exists $\delta > 0$ such that $b_{1,n}(t) > \delta$ and $b_{n,1}(t) > \delta$ for all $t$, thus $B_1(t) > \delta$ and $B_n(t)> \delta$ for all $t$. Then the budget of player $i$ at time $t+1$ is 
	\begin{equation} \label{eq:budgetbound}
	B_i(t+1) = f_i(t) \cdot B_n(t) = \frac{f_i(r) \cdot (\lambda_i \mu_i)^{ \lfloor t /3 \rfloor}}{\sum_{j=1}^{n-1} f_j(r) \cdot (\lambda_j \mu_j)^{\lfloor t /3 \rfloor }} \cdot B_n(t) >  \frac{\delta \cdot f_i(r) \cdot (\lambda_i \mu_i)^{ \lfloor t/3 \rfloor }}{\sum_{j=1}^{n-1} f_j(r) \cdot (\lambda_j \mu_j)^{ \lfloor t/3 \rfloor }}
	\end{equation}
	Moreover, since the budget of player $1$ is also at least $\delta$, we additionally have $\sum_{j=1}^{n-1} B_j(t) > \delta$ for all $t$.
	By Proposition \ref{lem:minmaxbound}, there exist constants $\gamma, \zeta > 0$ such that 
	\begin{equation} \label{eq:amountbound}
	\gamma \cdot \sqrt{(\alpha^*)^t} < x_n(t) < \zeta \cdot \sqrt{(\alpha^{*})^t}
	\end{equation}
	The amount of player $i$ at time $t+1$ can be written as 
	\begin{equation} \label{eq:amount_i}
	x_i(t+1) = \frac{B_i(t)}{\sum_{j=1}^{n-1} B_j(t)} \cdot \lambda_i \cdot x_n(t) 
	\end{equation}
	
	\noindent \emph{Case 1}: $\alpha_i > 1 / \sqrt{\alpha^*}$. Using Equations \ref{eq:budgetbound}, \ref{eq:amountbound}, and \ref{eq:amount_i}, the amount of player $i$ at time $t+2$ can be bounded as follows, where $c$ is a constant independent of time, but dependent on $\lambda_j, \mu_j, \delta, \gamma$ as well as the initial amounts and bids:
	\begin{eqnarray*}
		x_i(t+2) & = & \frac{B_i(t+1)}{\sum_{j=1}^{n-1} B_j(t+1)} \cdot \lambda_i \cdot x_n(t+1) \\
		& > & B_i(t+1) \cdot \lambda_i \cdot x_n(t+1) \\
		& > & \left( \frac{\delta \cdot f_i(r) \cdot (\lambda_i \mu_i)^{ \lfloor t/3 \rfloor }}{\sum_{j=1}^{n-1} f_j(r) \cdot (\lambda_j \mu_j)^{ \lfloor t/3 \rfloor }} \right) \cdot \lambda_i \cdot x_n(t+1) \\
		& > &\left( \frac{\delta  \cdot f_i(r) \cdot \alpha_i^{ \lfloor t/3 \rfloor }}{\sum_{j=1}^{n-1} f_j(r) \cdot \alpha_j^{ \lfloor t/3 \rfloor }} \right) \cdot \lambda_i \cdot \gamma \cdot \sqrt{(\alpha^*)^{t+1}} \\
		& = & \left( \frac{\delta  \cdot f_i(r) \cdot (\alpha^*)^{ \lfloor t/3 \rfloor }}{\sum_{j=1}^{n-1} f_j(r) \cdot \alpha_j^{ \lfloor t/3 \rfloor }} \right) \cdot \lambda_i \cdot \gamma \cdot \sqrt{\alpha^*} \cdot (\alpha^*)^{t/2 - \lfloor t/3 \rfloor} \cdot \alpha_i^{\lfloor t/3 \rfloor} \\
		& > & c \cdot \left( \frac{f_i(r) \cdot (\alpha^*)^{ \lfloor t/3 \rfloor }}{\sum_{j=1}^{n-1} f_j(r) \cdot \alpha_j^{ \lfloor t/3 \rfloor }} \right) \cdot \left(\sqrt{\alpha^*} \cdot \alpha_i\right)^{\lfloor t/3 \rfloor}
	\end{eqnarray*}
	
	Since $\alpha^* > \alpha_j $ for all $j = 2\ldots n-1$, we get 
	$$\lim_{t \to \infty}   \frac{f_i(r) \cdot (\alpha^*)^{ \lfloor t/3 \rfloor }}{\sum_{j=1}^{n-1} f_j(r) \cdot \alpha_j^{ \lfloor t/3 \rfloor }} = d, \; \; \mbox{for some constant} \; \; d > 0.
	$$
	
	Moreover, $\lim_{t \to \infty} \left(\sqrt{\alpha^*} \cdot \alpha_i\right)^{\lfloor t/3 \rfloor}   = \infty$ since $\alpha_i > 1 / \sqrt{\alpha^*}$. It follows that $\lim_{t \to \infty} x_i(t+2) = \infty$, which completes the first case.
	
	\bigskip
	
	\noindent \emph{Case 2}: $\alpha_i < 1 / \sqrt{\alpha^*}$. Again there exists a constant $c' > 0$ independent of time such that we can bound the amount of player $i$ from above as follows:
	\begin{eqnarray*}
		x_i(t+2) & = & \frac{B_i(t+1)}{\sum_{j=1}^{n-1} B_j(t+1)} \cdot \lambda_i \cdot x_n(t+1) \\
		& < & \frac{B_i(t+1)}{\delta}   \cdot \lambda_i \cdot x_n(t+1) \\
		& < & \left( \frac{f_i(r) \cdot \alpha_i^{ \lfloor t /3 \rfloor}}{\sum_{j=1}^{n-1} f_j(r) \cdot \alpha_i^{\lfloor t /3 \rfloor }} \cdot \frac{B_n(t)}{\delta}  \right) \cdot \lambda_i \cdot \zeta \cdot \sqrt{(\alpha^*)^{t+1}} \\
		& < & \left( \frac{f_i(r) \cdot (\alpha^*)^{ \lfloor t /3 \rfloor}}{\sum_{j=1}^{n-1} f_j(r) \cdot \alpha_i^{\lfloor t /3 \rfloor }}  \right) \cdot \frac{\lambda_i \cdot \zeta}{\delta} \cdot (\alpha^*)^{t/2 - \lfloor t/3 \rfloor} \cdot \alpha_i^{\lfloor t/3 \rfloor} \\
		& < & c' \cdot \left( \frac{f_i(r) \cdot (\alpha^*)^{ \lfloor t /3 \rfloor}}{\sum_{j=1}^{n-1} f_j(r) \cdot \alpha_i^{\lfloor t /3 \rfloor }}  \right) \cdot \left( \sqrt{\alpha^*} \cdot \alpha_i\right)^{\lfloor t/3 \rfloor}
	\end{eqnarray*}
	Again we have that $\lim_{t \to \infty}   \frac{f_i(r) \cdot (\alpha^*)^{ \lfloor t/3 \rfloor }}{\sum_{j=1}^{n-1} f_j(r) \cdot \alpha_j^{ \lfloor t/3 \rfloor }} = d > 0$. However, $\lim_{t \to \infty} \left( \sqrt{\alpha^*} \cdot \alpha_i\right)^{\lfloor t/3 \rfloor} = 0$ since $ \sqrt{\alpha^*} \cdot \alpha_i < 1$, which implies that $\lim_{t \to \infty} x_i(t) = 0$.
	
	\bigskip
	
	\noindent \emph{Case 3}: $\alpha_i = 1 / \sqrt{\alpha^*}$. From Case 1 and 2, we obtain the following lower and upper bounds for the amount of player $i$:
	$$
	c \cdot \left( \frac{f_i(r) \cdot (\alpha^*)^{ \lfloor t/3 \rfloor }}{\sum_{j=1}^{n-1} f_j(r) \cdot \alpha_j^{ \lfloor t/3 \rfloor }} \right) \cdot \left(\sqrt{\alpha^*} \cdot \alpha_i\right)^{\lfloor t/3 \rfloor} < x_i(t+2) < c' \cdot \left( \frac{f_i(r) \cdot (\alpha^*)^{ \lfloor t /3 \rfloor}}{\sum_{j=1}^{n-1} f_j(r) \cdot \alpha_i^{\lfloor t /3 \rfloor }}  \right) \cdot \left( \sqrt{\alpha^*} \cdot \alpha_i\right)^{\lfloor t/3 \rfloor}
	$$
	Since $\alpha_i = 1 / \sqrt{\alpha^*}$, we equivalently get 
	$$
	c \cdot \left( \frac{f_i(r) \cdot (\alpha^*)^{ \lfloor t/3 \rfloor }}{\sum_{j=1}^{n-1} f_j(r) \cdot \alpha_j^{ \lfloor t/3 \rfloor }} \right) < x_i(t+2) < c' \cdot \left( \frac{f_i(r) \cdot (\alpha^*)^{ \lfloor t /3 \rfloor}}{\sum_{j=1}^{n-1} f_j(r) \cdot \alpha_i^{\lfloor t /3 \rfloor }}  \right) 
	$$
	Since $\lim_{t \to \infty}   \frac{f_i(r) \cdot (\alpha^*)^{ \lfloor t/3 \rfloor }}{\sum_{j=1}^{n-1} f_j(r) \cdot \alpha_j^{ \lfloor t/3 \rfloor }} = d \in (0,1)$, there exist constants $d', d'' > 0$ such that 
	$$d' < \frac{f_i(r) \cdot (\alpha^*)^{ \lfloor t/3 \rfloor }}{\sum_{j=1}^{n-1} f_j(r) \cdot \alpha_j^{ \lfloor t/3 \rfloor }} < d'' \; \; \mbox{for all} \; \; t.$$ 
	Then the bounds on $x_i(t+2)$ can be rewritten as
	$
	c \cdot d' < x_i(t+2) < c' \cdot d''
	$, where $c, c', d', d''$ are constants, and so player $i$'s amount stays in a constant interval bounded away from zero throughout time. This completes the proof.
\end{proof}

We conjecture that for each player with $\alpha_i = 1 / \sqrt{\alpha^*}$, the amount forms a pseudo-cycle that converges to a cycle in the limit.
\begin{figure}
	\centering
	\subfigure[Star economy.]
	{
		\includegraphics[scale=0.35]{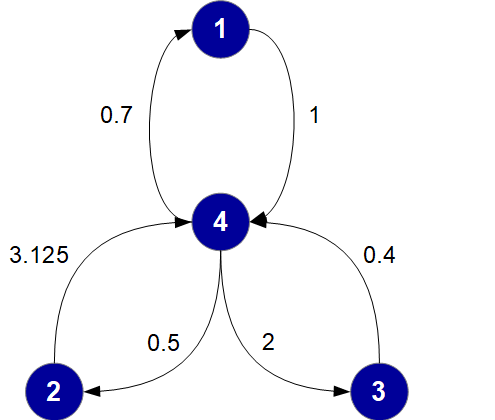}
		\label{fig:star_sim}
	}
	\subfigure[Amount of player $3$ over time (30 rounds).]
	{
		\includegraphics[scale=0.35]{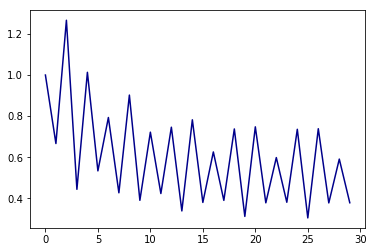}
		\label{fig:star_poorplayer_30}
	}\\
	\subfigure[Amount of player $3$ over time (300 rounds)]
	{
		\includegraphics[scale = 0.35]{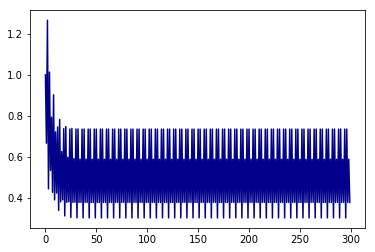}
		\label{fig:star_poorplayer_300}
	}
	\subfigure[Amount of player $4$ over time (300 rounds)]
	{
		\includegraphics[scale = 0.6]{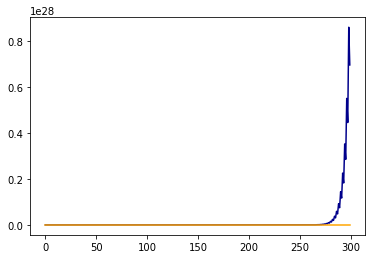}
		\label{fig:star_richplayer_300}
	}
	\label{fig:star_pseudo}
	\caption{Star with three players, with matrix $a = [[0, 0, 0.8], [0, 0, 1.5625], [1, 1, 0]]$ and initial bids $b = [[0, 0, 1], [0, 0, 1], [0.5, 0.5, 0]]$. Player $1$'s simple cycle with player $3$ (the center) has product $\alpha_1 = 0.8 \cdot 1 = 0.8$, while the best cycle is that of player $2$ with the center: $\alpha^* = 1.5625 \cdot 1 = 1.5625$, where $\alpha_1 = 1/\sqrt{\alpha^*}$. The amount of player $1$ approximately cycles.
	}
\end{figure}

\begin{figure}[h!]
	\centering
	\subfigure[Star economy where the cycle of player $3$ with $4$ is above the $0.8$ threshold.]
	{
		\includegraphics[scale=0.35]{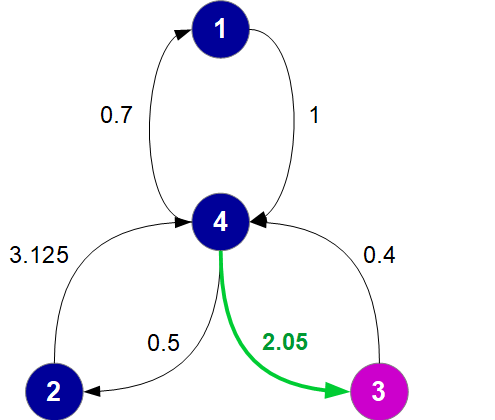}
		\label{fig:star_sim_growth_poor}
	}
	\subfigure[Amount of player $3$ over time (300 rounds).]
	{
		\includegraphics[scale=0.35]{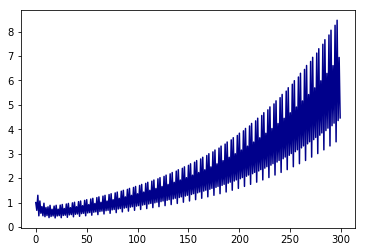}
		\label{fig:star_growth_poor}
	}\\
	\subfigure[Star economy where the cycle of player $3$ with $4$ is below the $0.8$ threshold]
	{
		\includegraphics[scale = 0.35]{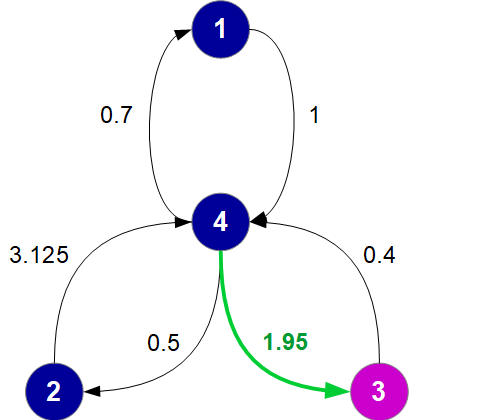}
		\label{fig:star_sim_decay_poor}
	}
	\subfigure[Amount of player $3$ over time (300 rounds).]
	{
		\includegraphics[scale = 0.35]{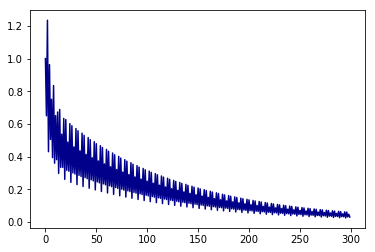}
		\label{fig:star_decay_poor}
	}
	\label{fig:xstar_varying}
	\caption{Star economy where player $3$'s threshold for growth is the value $0.8$ in its cycle with the center. Sub-figure (b) shows what happens with its quantity when the edge $(4,3)$ is changed from $2$ as it was in Figure 9 to $2.05$, while sub-figure (d) shows the evolution of its amount when this edge is changed instead to $1.95$.}
\end{figure}

The growth pattern of player $1$ when its cycle with the center is exactly at the threshold can be observed in Figure 7, and improved and degraded afterwards, in Figure 8.
We leave open the question of understanding phase transitions more generally. We include several additional examples in Appendix \ref{app:phase_sim} with phase transitions in two-player economies.

\section{Fixed Points and Cycles} \label{sec:fixed}

We briefly study fixed points and cycles. The proofs for this section are in Appendix \ref{app:fixed}.

\begin{proposition} \label{p:period}
	Let $\vec{a}$ be an economy that cycles with period $T \geq 1$. Then every cycle $C$ along which the amounts and bids are non-zero throughout time has a product one.
\end{proposition}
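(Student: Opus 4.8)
The plan is to reuse the multiplicative invariant $F$ from the proof of Theorem~\ref{thm:universal} and combine it with periodicity. Fix a cycle $C$ along which the amounts and bids stay positive for all time, write $\alpha = \prod_{(i,j)\in C} a_{i,j}$, and set
$$
F(t) = \prod_{(i,j)\in C} b_{i,j}(t)\cdot x_i(t).
$$
First I would re-derive the identity $F(t+1) = \alpha\cdot F(t)$ in this setting. In the universal-growth proof it was stated for a \emph{good} cycle, but the telescoping computation there never used $\alpha>1$: substituting the bid-update rule $b_{i,j}(t+1) = (a_{i,j}\,y_{i,j}(t)/x_i(t+1))\,B_i(t+1)$ and then $y_{i,j}(t) = (b_{i,j}(t)/B_j(t+1))\,x_j(t)$, the factors $x_i(t+1)$ cancel against $\prod_{i\in C}x_i(t+1)$, and the budgets $B_i(t+1)$ cancel because each node of $C$ appears exactly once as a tail and once as a head, leaving $F(t+1)=\alpha\cdot F(t)$. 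Iterating $T$ steps gives $F(t+T) = \alpha^T\, F(t)$ for every $t$.

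Next I would invoke the hypothesis that the economy cycles with period $T$: then every $x_i$ and every $b_{i,j}$ is $T$-periodic, so $F$ itself is $T$-periodic, i.e.\ $F(t+T)=F(t)$. Because the amounts and bids along $C$ are non-zero throughout time, $F(t)>0$ for all $t$; in particular $\alpha = F(t+1)/F(t) > 0$ is automatically positive. Equating the two expressions for $F(t+T)$ and cancelling the positive factor $F(t)$ yields $\alpha^T = 1$, and since $\alpha>0$ and $T\geq 1$ the unique real solution is $\alpha=1$, which is exactly the claim.

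The one step that requires genuine care, and the only place where the positivity hypothesis is really used, is the validity of the $F$-identity. The bid-update formula in the definition of the mechanism applies only when production actually occurs (the footnote treats the degenerate ``no production'' case separately), and the substitution for $y_{i,j}(t)$ needs $B_j(t+1)=\sum_k b_{k,j}(t)>0$ to avoid dividing by zero. Both hold precisely because the amounts $x_i(t+1)$ and the bids $b_{i,j}(t)$ along $C$ are positive for all $t$; this is exactly why the proposition restricts attention to cycles that stay active throughout time rather than to all cycles in $\vec{a}$. I would make this dependence explicit rather than inheriting it silently from the universal-growth proof.
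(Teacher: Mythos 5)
Your proof is correct and follows essentially the same route as the paper's: both define the invariant $F(t)=\prod_{(i,j)\in C}b_{i,j}(t)\cdot x_i(t)$, use the telescoping identity $F(t+1)=\alpha\cdot F(t)$ from the proof of Theorem~\ref{thm:universal}, and combine it with $T$-periodicity and positivity of $F$ to force $\alpha^T=1$, hence $\alpha=1$. Your added care in noting that the identity never uses $\alpha>1$ and that positivity of the bids along $C$ is what legitimizes the substitutions (nonzero $B_j(t+1)$, production actually occurring) is a point the paper leaves implicit when it simply cites the theorem, but it does not change the argument.
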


For fixed points in the bid space, we get the following statement for complete graphs.
\begin{proposition} \label{thm:fixedbids}
	Suppose the economy has a complete graph. If the bids are unchanging throughout time, then
	\begin{itemize}
		\item the growth of each player $i$ is given by $x_i(t+1) = a_{i,i} \cdot x_i(t)$, for all $ t \in \mathbb{N}$.
		\item for each cycle $C \subseteq N$, the coefficients satisfy the identity: $\prod_{(i,j) \in C} a_{i,j} = \prod_{i \in C} a_{i,i}$.
	\end{itemize}
\end{proposition}

Two player economies in which all cycles have product 1 always cycle as long as each player starts by bidding equally on the goods.

\begin{proposition} \label{thm:2cycle}
	Any two player economy where all cycles have product one \footnote{That is, for every $C \subseteq N$, $\prod_{(i,j) \in C} a_{i,j}  =1$.} cycles with period $3$ for any initial budgets and amounts, when the players start by splitting their budgets equally on the goods. 
\end{proposition}

Our simulation shows that in fact economies with up to five players cycle for any initial configuration of the amounts and budgets, as long as the players split their budgets equally among the goods, and the period remains $3$. 

\section{Gini Index Simulation}
We studied in simulations the Gini index of multiple economies, with additional simulations in Appendix \ref{app:gini}. 

\begin{figure}[h!]
	\centering
	\subfigure[Budgets.]
	{
		\includegraphics[scale=0.98]{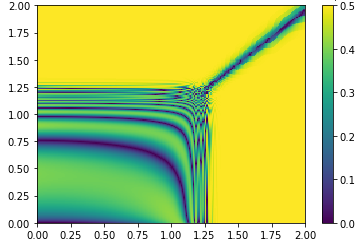}
		\label{fig:main_gini_money}
	}
	\subfigure[Amounts.]
	{
		\includegraphics[scale = 0.98]{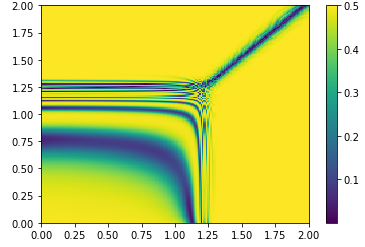}
		\label{fig:main_gini_amounts}
	}
	\caption{Gini index for $\vec{a} = [[x,0.1], [15,y]]$, where the $x$ and $y$ values are on the $X$ and $Y$ axes.}
	\label{fig:main_gini_all}
\end{figure}

Figure \ref{fig:main_gini_all} shows a two player economy 
$\vec{a} = [[x,0.1], [15,y]]$ where the $x$ and $y$ values are represented on the $X$ and $Y$ axes. All initial amounts are $1$ and all initial bids are $0.5$. The Gini coefficient is computed after 120 iterations, and lighter shades mean higher inequality. It can be observed that there is a threshold at around $1.25$ such that to the right of  this value, both on the $X$ and $Y$ axis, the inequality is very high. The reason is that the product of the cycle containing both players $1$ and $2$ is $1.5$, with a geometric mean of around $1.225$. When the self-loop of player $1$ becomes higher than this value (at the right of the figure), the inequality becomes very high because player $1$ has the best cycle and will grow at a much faster rate than player $2$ over time. The figure is symmetric, with the diagonal line showing blue because when both players have the same value for the self loops they will do equally well.

\begin{figure}[h!]
	\centering
	\subfigure[Budgets.]
	{
		\includegraphics[scale=1]{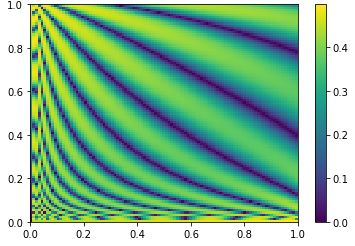}
		\label{fig:main_gini_money_bids}
	}
	\subfigure[Amounts.]
	{
		\includegraphics[scale = 1]{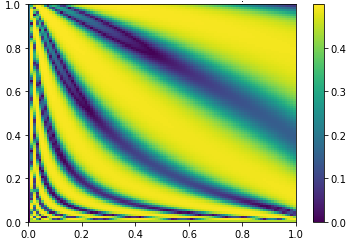}
		\label{fig:main_gini_amounts_bids}
	}
	\caption{Gini index for $\vec{a} = [[1, 0.1], [15,1]]$ with initial bids $[[1-x,x], [y, 1-y]]$.}
	\label{fig:main_gini_bids}
\end{figure}

Figure \ref{fig:main_gini_bids} shows the Gini coefficient for the two-player economy $\vec{a} = [[1, 0.1], [15,1]]$ with initial bids $[[1-x,x], [y, 1-y]]$ and initial amounts $1$. The only good cycle contains both players. The inequality in this case comes from the differences in the initial bids of the players and from the fact that player $1$ sees a ``bad'' edge (of $0.1$) from player $2$ compared to his own self-loop, while player $2$ sees a very good edge incoming from player $1$ (worth $15$). Thus player $1$ is much slower to invest on player $2$ than player $2$ is on $1$.

At a high level, 
our findings on the Gini coefficient are related to the discussion on the merits of capitalism versus socialism \cite{Marx,piketty15,hayek_serf,mises_econcalc} and their respective failures. In particular, in recent years a growing concern exists regarding the rising levels of inequality around the world, including countries such as United Kingdom \cite{ineqUK} and United States \cite{fortune16}.

\section{Acknowledgements}
We would like to thank Moshe Babaioff, Sergiu Hart, Samuel Kortum, Yuval Peres, Yuval Rabani, and Anthony Smith for useful discussions.
This project has received funding from the European Research Council (ERC) under the European Unions Horizon 2020 research and innovation programme (grant agreement No 740282), and from the ISF grant 1435/14 administered by the Israeli Academy of Sciences and Israel-USA Bi-national Science Foundation (BSF) grant 2014389.


\addcontentsline{toc}{section}{\protect\numberline{}References}%

\bibliographystyle{alpha}
\bibliography{universal_bib}
\appendix

\newpage
\section{Economy with Fixed Splitting Rule} \label{app:mechanisms}
\begin{example} \label{eg:two}
Consider a two player economy with $\vec{a} = [[1.1, 0], [0.2, 0]]$, meaning that player $1$ can make $1.1$ units (of good $1$) from one unit of good $1$ and zero units from one unit of good $2$, while player $2$ can make $0.2$ units (of good $2$) from one unit of good $1$ and zero units from one unit of good $2$. Let the initial amounts be $x_i(0) = 1$. 

If the mechanism is to give each player $100\%$ of his own good in every round, then $x_1(1) = 1.1 \cdot x_1(0) = 1.1$, $x_1(2) = 1.1 \cdot x_1(1) = 1.1^2$, and generally, $x_1(t) = 1.1 \cdot x_1(t-1) = 1.1^{t-1}$. Player $2$ receives a zero fraction of good $2$, and he cannot produce anything from his own good, so $x_2(1) = 0.2 \cdot x_1(0) + 0 \cdot x_{2}(0) = 0.2 \cdot 1 + 0 \cdot 1 = 0$, and then $x_2(t) = 0$ for all $t \geq 1$. So player $1$ grows while player $2$ vanishes.

Suppose on the other hand that the mechanism splits the goods equally each time, so in every round each player gets $50\%$ of every resource. Then the bundles obtained after trade at time $t=0$ are $\vec{y}_1(0) = (0.5, 0.5)$, $\vec{y}_2(0) = (0.5, 0.5)$, so the amounts produced at time $t=1$ will be $x_1(1) = 1.1 \cdot y_{1,1}(0) + 0 \cdot y_{1,2}(0) = 1.1 \cdot 0.5 = 0.55$ while $x_2(1) = 0.2 \cdot y_{2,1}(0) + 0 \cdot y_{2,2}(0) = 0.2 \cdot 0.5 = 0.1$. After trade at time $t=1$ the bundles obtained by players $1$ and $2$ are $\vec{y}_1(1) = (0.275,0.05)$ and $\vec{y}_2(1) = (0.275,0.05)$, which will lead to new amounts $x_1(2) = 1.1 \cdot 0.275 + 0 \cdot 0.05 = 0.3025$ and $x_2(2) = 0.2 \cdot 0.275 + 0 \cdot 0.05 = 0.055$. By induction it can be shown that $x_1(t) = 0.55^{t-1}$ and $x_2(t) = 0.2 \cdot 0.5 \cdot 0.55^{t-1}$. Thus $\lim_{t \to \infty} x_i(t) = 0$ for $i=1,2$, so both players vanish in the limit. 

The interpretation of this is that player $1$ is ``productive'' by himself, but he is not so productive as to support both himself and another player receiving equal shares throughout time.
\end{example}

\section{General Mechanisms}

In this section we show two characterizations, for economies that cannot be saved by any mechanism because all their cycles are bad, and economies in which any reasonable mechanism should achieve growth.

\begin{proposition} \label{thm:appallgood}[\ref{thm:allgood} in main text]
	A strongly connected economy grows with any non-wasteful mechanism if and only if $(i)$ it has least one good cycle and $(ii)$ each directed cycle is either good or has zero edges all along.
\end{proposition}
\begin{proof} 
	For the reverse direction, suppose all the non-zero cycles have product strictly greater than 1. An arbitrary non-wasteful mechanism can be seen as an infinite sequence of fixed splitting rules. To be precise, if $x(t)$ is the production vector at time $t$ and $\beta_{i,j}(t)$ is the fraction of good $j$ agent $i$ receives after production of $x(t)$ then, production in time $(t+1)$ is,
	\[
	x(t+1) = \langle A, \beta(t) \rangle x(t)
	\]
	where $\langle A, \beta\rangle$ is a matrix whose $(i,j)$th entry is $a_{ij} \beta_{i,j}(t)$. By unraveling this we get that total production $\sum_{i \in G} x_i(t)$ is a convex combination of product of $a_{i,j}$s along all $t$ length (non-simple) paths. If $t$ is big enough then these paths have to self-intersect and thereby contain cycles. Take any such path $P=(i_1, i_2,\dots, i_t)$. Whenever a vertex repeats, a cycle is formed. Once all the cycles are removed, what remains is a set of simple path segments that are not part of any cycle, and their total length is at most $n$ (number of nodes in the graph). The product of $a_{i,j}$s on the edges of these path segments is at least $\tau = (\min_{(i,j), a_{i,j} > 0} a_{i,j})^n$. Let $\gamma$ be the lower bound $(\Pi_{(i,j)\in C} a_{i,j})$ for any cycle $C$. Then by hypothesis $\gamma>1$. Now for any integer $h>0$, there exists an integer $k$ such that $\gamma^k \tau > h$, and therefore, for $t=n * k + n$, 
	\[
	\Pi_{q=1}^{(t-1)} a_{i_q, i_{(q+1)}} \ge \gamma^k \tau > h
	\]
	
	Since total production is a convex combination of such product terms across all $t$ length paths, it too is at least $h$, proving the claim.


	For the forward direction, suppose the economy grows with any non-wasteful mechanism. This is clearly not possible if product of $a_{i,j}$'s along every cycle is zero. Therefore, there is at least one cycle $C$ such that $\Pi_{(i,j) \in C} a_{i,j} >0$. However, suppose $\Pi_{(i,j) \in C} a_{i,j} \le 1$. We will construct a non-wasteful splitting rule under which the economy can not grow infinitely. Every node in the cycle will send all of its produced good to its successor on the cycle. To decide for the rest, shrink the cycle into one node $v_C$, and all the incoming and outgoing edges, except the cycle edges, of nodes on the cycle are now the incoming and outgoing edges respectively of node $v_C$. Note that the resulting graph, say $G_C$ remains strongly connected. 
	
	Construct the BFS tree in $G_C$ rooted at $v_C$ using only incoming edges at every node. This will create a path from every node in $G_C$ to node $v_C$. Each node not on $C$ sends all its production to its successor in the BFS tree. Since longest path in the BFS tree is of length at most $n$, clearly, after $n$ rounds all nodes outside $C$ will have no production. And after that the total production on $C$ will either decrease or remain constant. 
	
\end{proof}

If all the cycles are bad, then no mechanism can save the economy from shrinking.

\begin{proposition} \label{thm:appallbad} [\ref{thm:allbad} in main text]
	An economy vanishes with any mechanism if and only if all the cycles have product strictly less than $1$.
\end{proposition}
\begin{proof} 
	Let $\vec{a}$ be an economy where all the cycles have product less than $1$ and $\mathcal{M}$ some mechanism that is used on $\vec{a}$. Similarly to Theorem \ref{thm:allgood}, the mechanism $\mathcal{M}$ can be seen as an infinite sequence of fixed splitting rules. The total amount can get a temporary boost from traveling on a path of length at most $n-1$; this boost is bounded by a constant (that depends on the coefficients $a_{i,j}$ but is independent of time), while any amount of flow returns from a cycle reduced by a factor of at least $1- \epsilon$ for some fixed $\epsilon > 0$, so the the total amount in $\vec{a}$ goes to zero as $t \to \infty$.
	For the other direction, suppose $\vec{a}$ vanishes regardless of the mechanism. If $\vec{a}$ had a cycle with product greater than or equal to $1$, then the total amount could remain bounded away from zero by having each player along the cycle directly route their good to their successor in each round. Since this is not the case, it follows that $\vec{a}$ cannot have any such cycle.
\end{proof}

\section{Fixed Points and Cycles} \label{app:fixed}

\begin{proposition} \label{p:appperiod}[\ref{p:period} in main text]
	Let $\vec{a}$ be an economy that cycles with period $T \geq 1$. Then every cycle $C$ along which the amounts and bids are non-zero throughout time has product one.
\end{proposition}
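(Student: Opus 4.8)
The plan is to reuse the multiplicative invariant from the proof of Theorem~\ref{thm:universal}. For the cycle $C$ in question define
\[
F_C(t) = \prod_{(i,j)\in C} b_{i,j}(t)\cdot x_i(t),
\]
and let $\alpha = \prod_{(i,j)\in C} a_{i,j}$ be the product along $C$. The computation carried out in Theorem~\ref{thm:universal} shows that $F_C(t+1) = \alpha\cdot F_C(t)$, so $F_C(t) = \alpha^t F_C(0)$ for every $t$. The whole argument hinges on the observation that, because the state of the economy is periodic, this exponential quantity must itself be periodic, and that forces $\alpha = 1$.

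First I would check that the identity $F_C(t+1)=\alpha F_C(t)$ is legitimate in this setting. Its derivation uses, at each player $i\in C$, the \emph{non-degenerate} bid-update rule $b_{i,j}(t+1) = \bigl(a_{i,j}\,y_{i,j}(t)/x_i(t+1)\bigr)B_i(t+1)$ rather than the fallback ``bids unchanged'', and it divides by $x_i(t+1)$ and by $B_j(t+1)=\sum_k b_{k,j}(t)$. The hypothesis that the amounts and bids along $C$ are non-zero throughout time supplies exactly what is needed: $x_i(t+1)>0$ guarantees that production occurs at $i$, so the proportional update (and not the degenerate branch) applies, and $B_j(t+1)\ge b_{i,j}(t)>0$ keeps every denominator positive. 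The same hypothesis gives $F_C(0)=\prod_{(i,j)\in C}b_{i,j}(0)\,x_i(0)>0$, so the closed form $F_C(t)=\alpha^t F_C(0)$ holds for all $t$.

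Next I would invoke periodicity. Saying the economy cycles with period $T$ means $x_i(t+T)=x_i(t)$ and $b_{i,j}(t+T)=b_{i,j}(t)$ for all $i,j$ and all (sufficiently large) $t$. Substituting into the definition of $F_C$ gives $F_C(t+T)=F_C(t)$. Comparing with $F_C(t)=\alpha^t F_C(0)$ yields $\alpha^{t+T}F_C(0)=\alpha^t F_C(0)$, and since $F_C(0)>0$ this simplifies to $\alpha^T=1$. Finally, because bids along $C$ are non-zero and bids are supported only on edges with $a_{i,j}>0$, every factor of $\alpha$ is strictly positive, so $\alpha>0$; the only positive real root of $\alpha^T=1$ is $\alpha=1$. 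Hence $\prod_{(i,j)\in C}a_{i,j}=1$, as claimed.

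The only genuinely delicate point---and the step I would be most careful about---is the validity of the invariant identity along $C$: confirming that the ``non-zero throughout time'' hypothesis really rules out the degenerate branch of the bid update at every player of the cycle and keeps all denominators bounded away from zero. Once that bookkeeping is settled, the remainder is the short algebraic observation that a geometric sequence $\alpha^t F_C(0)$ can be periodic only when its ratio equals $1$.
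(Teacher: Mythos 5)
Your proposal is correct and follows essentially the same route as the paper: both define the cycle invariant $F_C(t)=\prod_{(i,j)\in C}b_{i,j}(t)\,x_i(t)$, appeal to the computation in Theorem~\ref{thm:universal} to get $F_C(t)=\alpha^t F_C(0)$, and then use periodicity together with $F_C(0)>0$ to force $\alpha=1$. Your additional bookkeeping (verifying the non-degenerate bid-update branch applies and that $\alpha>0$ before extracting the root of $\alpha^T=1$) only makes explicit what the paper leaves implicit.
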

\begin{proof}
	Since the economy cycles with period $T$, we have $x_i(t+T) = x_i(t)$ and $b_{i,j}(t+T) = b_{i,j}(t)$ for all $i,j \in N$ and $t \in \mathbb{N}$.
	Let $C \subseteq N$ be any cycle. Let $\alpha = \prod_{(i,j) \in C} a_{i,j}$ and define $F(t) = \prod_{(i,j) \in C} b_{i,j}(t) \cdot x_i(t)$. By Theorem \ref{thm:universal}, $F(t) = \alpha^t \cdot F(0)$. Then $F(t+T) =\prod_{(i,j) \in C} b_{i,j}(t+T) \cdot x_i(t+T) = \prod_{(i,j) \in C} b_{i,j}(t) \cdot x_i(t) = \alpha^{t+T} \cdot F(0) = \alpha^t \cdot F(0)$, so $\alpha^{t+T} = \alpha^t$. Since $C$ was chosen to have non-zero bids and amounts throughout time, it follows that $\alpha = 1$.
\end{proof}

For fixed points in the bid space, we get the following characterization.
\begin{proposition} \label{thm:appfixedbids}[\ref{thm:appfixedbids} in main text]
Suppose the economy has a complete graph. If the bids are unchanging throughout time, then
	\begin{itemize}
		\item the growth of each player $i$ is given by $x_i(t+1) = a_{i,i} \cdot x_i(t)$, for all $ t \in \mathbb{N}$.
		\item for each cycle $C \subseteq N$, the coefficients satisfy the identity: $\prod_{(i,j) \in C} a_{i,j} = \prod_{i \in C} a_{i,i}$.
	\end{itemize}
\end{proposition}
\begin{proof}
	Let $b_{i,j}^* > 0$ be the fixed bids for all $i,j \in N$.
	From the bid update rule and the fact that $b_{i,j}(t+1) = b_{i,j}(t) = b_{i,j}^*$, we have 
	$$
	b_{i,j}(t) = b_{i,j}(t+1) = \left( \frac{a_{i,j} \cdot y_{i,j}(t)}{x_i(t+1)} \right) \cdot B_i(t+1) =  \left( \frac{a_{i,j} \cdot \frac{b_{i,j}(t)}{B_j(t+1)} \cdot x_j(t)}{x_i(t+1)} \right) \cdot B_i(t+1),
	$$ 
	which implies 
	$$a_{i,j} = \frac{x_i(t+1) \cdot B_j(t+1)}{x_j(t) \cdot B_i(t+1)}$$
	Taking $i = j$, we get that $a_{i,i} = x_i(t+1)/x_i(t)$. Then for any cycle $C$ we get that
	$$
	\prod_{(i,j) \in C} a_{i,j} = \prod_{(i,j) \in C} \frac{x_i(t+1) \cdot B_j(t+1)}{x_j(t) \cdot B_i(t+1)} = \prod_{i \in C} \frac{x_i(t+1)}{x_i(t)} = \prod_{i \in C} a_{i,i},
	$$
	which is the second required property.
\end{proof}

\begin{figure}
	\centering
	\subfigure[Two player economy.]
	{
		\includegraphics[scale=0.45]{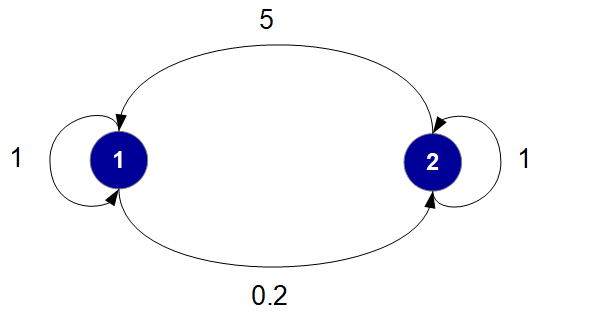}
		\label{fig:simple_cycle}
	}\\
	\subfigure[Amounts over time]
	{
		\includegraphics[scale = 0.4]{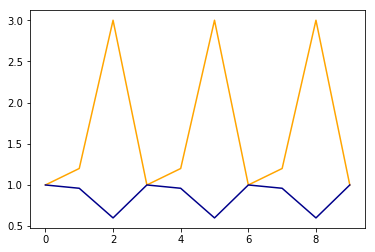}
		\label{fig:xcycle}
	}
	\subfigure[Players' bids over time, with red and blue for player $1$ and green and orange for player $2$.]
	{
		\includegraphics[scale=0.4]{bcycle.png}
		\label{fig:b1cycle}
	}
	\label{fig:cycling}
	\caption{Economy that cycles with period $3$. The initial amounts are $(1,1)$ and budgets $(25,100)$. The players start by splitting their budgets equally between the goods.}
\end{figure}

\begin{proposition} \label{thm:app2cycle}[\ref{thm:2cycle} in main text]
	Any two player economy where all cycles have product one \footnote{That is, for every $C \subseteq N$, $\prod_{(i,j) \in C} a_{i,j}  =1$.} cycles with period $3$ for any initial budgets and amounts, when the players start by splitting their budgets equally between the goods. 
\end{proposition}
\begin{proof}
	Consider an arbitrary two player economy where all the cycles have product 1. Then $a_{1,1} = a_{2,2} = 1$ and $a_{1,2} = c$, $a_{2,1} = 1/c$ for some $c > 0$. Let $B_i(0)$ and $x_i(0)$ be arbitrary initial budgets and amounts of the players. By the normalization of money, we have that $B_2(0) = 1 - B_1(0)$. The players start by investing equally on the goods, so $b_{1,1}(0) = b_{1,2}(0) = B_1(0)/2$, while $b_{2,1}(0) = b_{2,2}(0) = 0.5 - B_1(0)/2$. The amounts after the first round of production are:
	\begin{eqnarray*}
		x_1(1) &=& \frac{b_{1,1}(0)}{b_{1,1}(0) + b_{2,1}(0)} \cdot a_{1,1} \cdot x_1(0) + \frac{b_{1,2}(0)}{b_{1,2}(0) + b_{2,2}(0)} \cdot a_{1,2} \cdot x_2(0) 
		= B_1(0) (x_1(0) + c \cdot x_2(0))\\
		x_2(1) &= & (1 - B_1(0)) \left( \frac{x_1(0)}{c} + x_2(0) \right)
	\end{eqnarray*}
	The updated budgets are $B_1(1) = B_2(1) = 0.5$, while the updated bids are
	\begin{eqnarray*}
		b_{1,1}(1) & = & \frac{0.5 x_1(0)}{x_1(0) + c \cdot x_2(0)} \; \; \mbox{and} \; \; b_{1,2}(1) = \frac{0.5  \cdot c \cdot x_2(0)}{x_1(0) + c \cdot x_2(0)} \\
		b_{2,1}(1) & = & \frac{0.5  x_1(0)}{x_1(0) + c \cdot x_2(0)} \; \; \mbox{and} \; \; b_{2,2}(1)  = \frac{0.5 \cdot c \cdot x_2(0)}{x_1(0) + c \cdot x_2(0)}
	\end{eqnarray*}
	For the second round, since $b_{1,1}(1) = b_{2,1}(1)$ and $b_{1,2}(1) = b_{2,2}(1)$, we have 
	\begin{eqnarray*}
		x_1(2) & = & 0.5 \cdot x_1(1) + 0.5 \cdot c \cdot x_2(1) 
		=   0.5 B_1(0) (x_1(0) + c \cdot x_2(0)) + 0.5 c  (1 - B_1(0)) \left( \frac{x_1(0)}{c} + x_2(0) \right) \\
		&= &0.5 (x_1(0) + c \cdot x_2(0)) \\
		\bigskip
		x_2(2) & = & 0.5 \cdot \frac{1}{c} \cdot x_1(1) + 0.5 \cdot x_2(1) 
		= \frac{0.5}{c}(x_1(0) + c \cdot x_2(0))
	\end{eqnarray*}
	The updated budgets are $B_1(2) = x_1(0)/(x_1(0) + c \cdot x_2(0))$ and $B_2(2) = c \cdot x_2(0)/(x_1(0) + c \cdot x_2(0))$, while the bids become
	\begin{eqnarray*}
		b_{1,1}(2)& =& 
		= \frac{B_1(0) x_1(0)}{x_1(0) + c \cdot x_2(0)} \; \; \mbox{and} \; \;
		b_{1,2}(2) = \frac{(1-B_1(0)) x_1(0)}{x_1(0) + c \cdot x_2(0)}\\
		b_{2,1}(2) &=&
		= \frac{B_1(0) \cdot c \cdot x_2(0)}{x_1(0) + c \cdot x_2(0)} \; \; \mbox{and} \; \; 
		b_{2,2}(2) = \frac{(1-B_1(0)) \cdot c \cdot x_2(0)}{x_1(0) + c \cdot x_2(0)}
	\end{eqnarray*}
	Finally, for the third round, it can be verified that the economy returns to the initial state, with $x_i(3) = x_i(0)$ and $b_{i,j}(3) = b_{i,j}(0)$ for all $i,j$.
\end{proof}
An example of a cycling economy with two players can be found in Figure 11. 

\section{Simulations for Phase Transitions} \label{app:phase_sim}

In Figure \ref{fig:cycling2pseudo} it can be seen that the fraction it invests on player $1$ approximately cycles (the peaks are not $100\%$ identical).
\begin{figure}[h!]
	\centering
	\subfigure[200 iterations]
	{
		\includegraphics[scale=0.75]{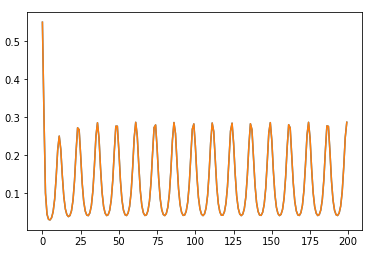}
		\label{fig:cycle_macro}
	}
	\subfigure[40 iterations]
	{
		\includegraphics[scale = 0.75]{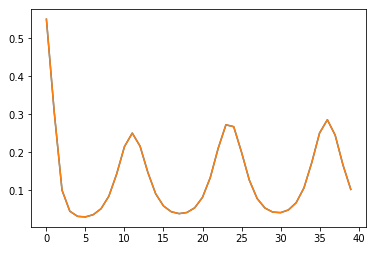}
		\label{fig:cycle_micro}
	}
	\caption{Two player economy with $a = [[1.2, 0.2], [1.1, 0.9]]$, where the fraction of the budget invested by player $1$ on good $2$ (approximatey) cycles.}
	\label{fig:cycling2pseudo}
\end{figure}

Figure 13
shows an example of how the amounts, budgets, fractions invested by the players on each other, and the Gini coefficient evolve over time in an economy where the only good cycle is the self loop of player $1$, while Figure 14
shows the same economy with the exception that the self loop of player $2$ has been reduced to $0.83$ (from $0.85$); the result is that player $2$ now vanishes.

\begin{figure}[h!]
	\centering
	\subfigure[Two player economy.]
	{
		\includegraphics[scale=0.45]{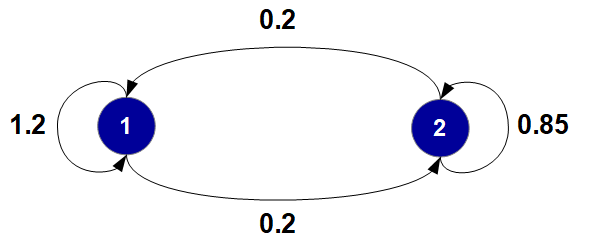}
		\label{fig:n=2selfloop}
	}
	\subfigure[Amount of player 2 over time.]
	{
		\includegraphics[scale = 0.4]{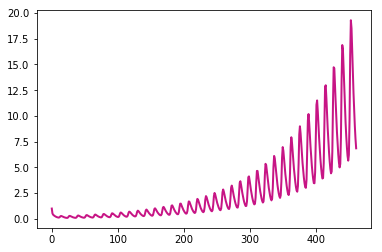}
		\label{fig:n=2selfloop_x}
	}
	\subfigure[Fraction invested by each player on the other player.]
	{
		\includegraphics[scale=0.4]{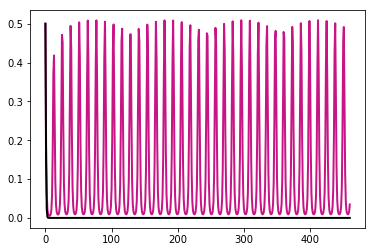}
		\label{fig:n=2selfloop_F}
	}
	\subfigure[Budgets of the players.]
	{
		\includegraphics[scale=0.4]{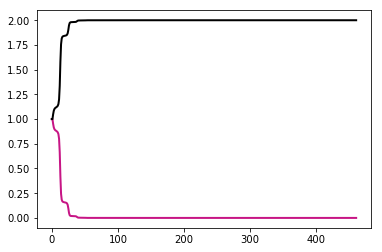}
		\label{fig:n=2_Bselfloop}
	}
	\label{fig:xtwo_selfloop}
	\caption{Two player economy with amounts, budgets, fractions invested by the players on each other, 
		and Gini coefficient over time. The only good cycle is the self loop of player 1 but player 2 also grows. Player 1 is shown in black and player 2 in purple.}
\end{figure}

\begin{figure}[h!]
	\centering
	\subfigure[Two player economy.]
	{
		\includegraphics[scale=0.45]{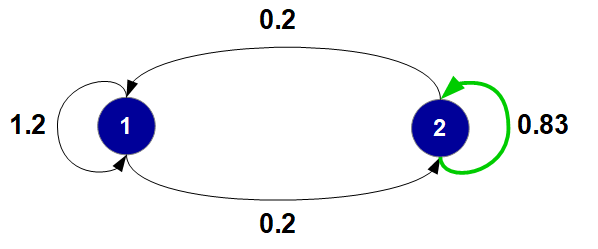}
		\label{fig:n=2selfloop_decay}
	}
	\subfigure[Amount of player 2 over time.]
	{
		\includegraphics[scale = 0.4]{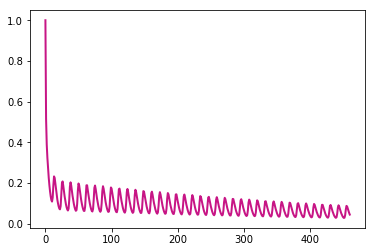}
		\label{fig:n=2selfloop_x_decay}
	}
	\subfigure[Fraction invested by each player on the other player.]
	{
		\includegraphics[scale=0.4]{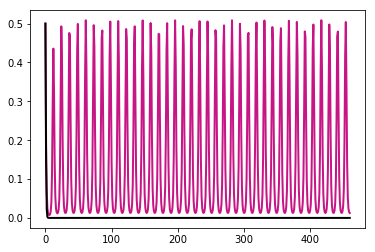}
		\label{fig:n=2selfloop_F_decay}
	}
	\label{fig:xtwo_selfloop_decay}
	\caption{Economy with the same initial state as the one in the previous Figure (13), except the self loop of player $2$ has been decreased from $0.85$ to $0.83$, and player $2$ now decays. Player $1$ is shown in black and player $2$ in purple.}
\end{figure}

\bigskip

\bigskip

\bigskip

\bigskip
\newpage

\newpage

\section{Simulations for the Gini Index} \label{app:gini}

In this section we provide simulations for the Gini coefficient of two player economies. We plot the Gini coefficient for both amounts and budgets and consider several scenarios: varying self loops (keeping the edges between the players fixed), varying edges between the players (keeping the self loops fixed), and varying initial bids (keeping the economy fixed). 

In Figures 15-17 the coefficient is computed after 120 iterations, and in the remaining ones (18-25) after 350 iterations.

\begin{figure}[h!]
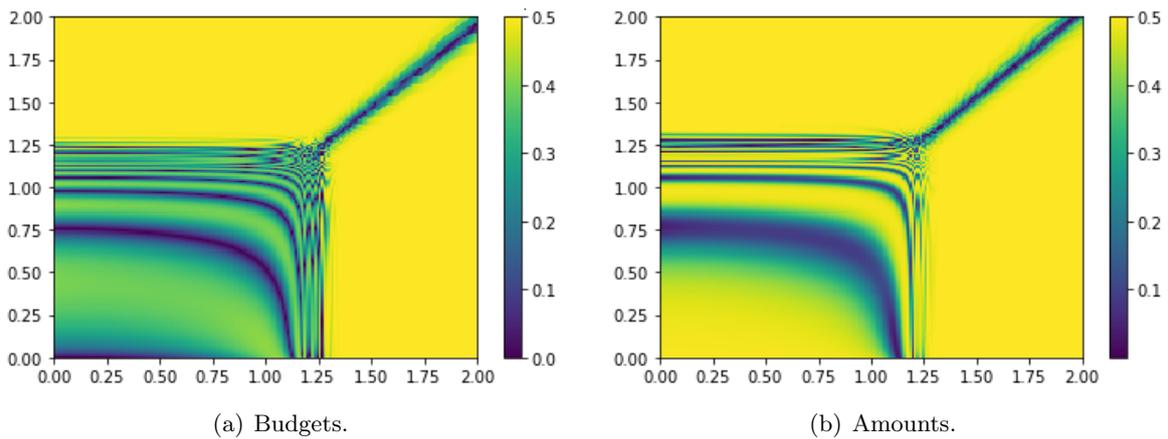

	\centering
	\subfigure[Budgets.]
	{
		\includegraphics[scale=0.98]{gini_money_120_iterations.png}
		\label{fig:gini_money}
	}
	\subfigure[Amounts.]
	{
		\includegraphics[scale = 0.98]{gini_amounts_120_iterations.png}
		\label{fig:gini_amounts}
	}
	\caption{Gini coefficient for the two-player economy $\vec{a} = [[x,0.1], [15,y]]$ where the $x$ and $y$ values are represented on the $X$ and $Y$ axes. Initial amounts are $1$ and all initial bids are $0.5$.}
	\label{fig:gini_all}
\end{figure}


\begin{figure}[h!]
	\centering
	\subfigure[Budgets.]
	{
		\includegraphics[scale=0.98]{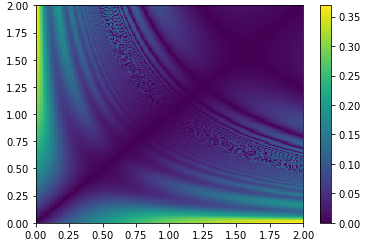}
		\label{fig:gini_money_cross}
	}
	\subfigure[Amounts.]
	{
		\includegraphics[scale = 0.98]{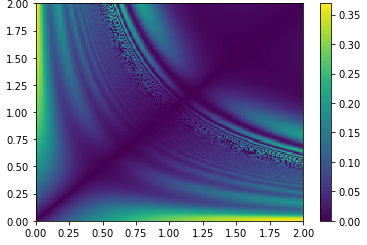}
		\label{fig:gini_amounts_cross}
	}
	\caption{Gini coefficient for the two-player economy $\vec{a} = [[1,x], [y,1]]$ where the $x$ and $y$ values are represented on the $X$ and $Y$ axes. The initial amounts are $1$, the initial bids $0.5$, and the coefficient is computed after 120 iterations.}
	\label{fig:gini_all_cross}
\end{figure}

\begin{figure}[h!]
	\centering
	\subfigure[Budgets.]
	{
		\includegraphics[scale=1]{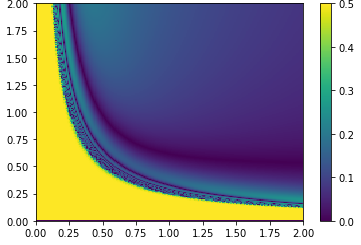}
		\label{fig:gini_money_asym}
	}
	\subfigure[Amounts.]
	{
		\includegraphics[scale = 1]{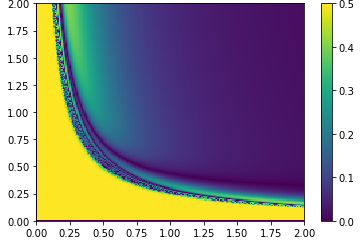}
		\label{fig:gini_amounts_asym}
	}
	\caption{Gini coefficient for the two-player economy $\vec{a} = [[0,x], [y,0.5]]$ where the $x$ and $y$ values are represented on the $X$ and $Y$ axes. The initial amounts are $1$, initial bids $0.5$.}
	\label{fig:gini_all_asym}
\end{figure}


\begin{figure}[h!]
	\centering
	\subfigure[Budgets.]
	{
		\includegraphics[scale=1]{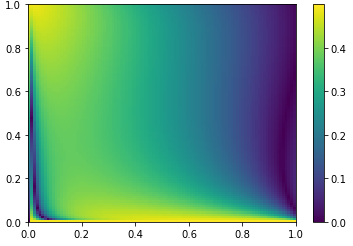}
		\label{fig:gini_money_bids_xxsmallselfloops}
	}
	\subfigure[Amounts.]
	{
		\includegraphics[scale = 1]{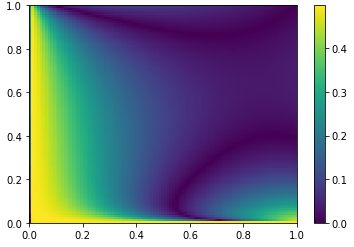}
		\label{fig:gini_amounts_bids_xxsmallselfloops}
	}
	\caption{Gini coefficient for the two-player economy $\vec{a} = [[0.25, 0.1], [15,0.25]]$ with initial bids $[[1-x,x], [y, 1-y]]$ and initial amounts $1$.}
	\label{fig:gini_bids_xxsmallselfloops}
\end{figure}

\begin{figure}[h!]
	\centering
	\subfigure[Budgets.]
	{
		\includegraphics[scale=1]{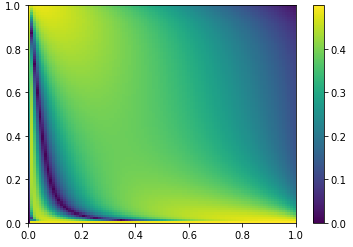}
		\label{fig:gini_money_bids_xsmallselfloops}
	}
	\subfigure[Amounts.]
	{
		\includegraphics[scale = 1]{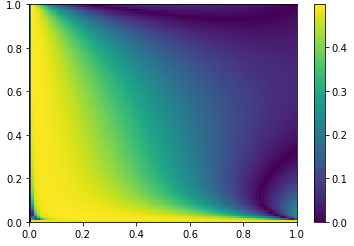}
		\label{fig:gini_amounts_bids_xsmallselfloops}
	}
	\caption{Gini coefficient for the two-player economy $\vec{a} = [[0.25, 0.1], [15,0.25]]$ with initial bids $[[1-x,x], [y, 1-y]]$ and initial amounts $1$.}
	\label{fig:gini_bids_xsmallselfloops}
\end{figure}

\begin{figure}[h!]
	\centering
	\subfigure[Budgets.]
	{
		\includegraphics[scale=1]{gini_money_350_iterations.png}
		\label{fig:gini_money_bids}
	}
	\subfigure[Amounts.]
	{
		\includegraphics[scale = 1]{gini_amounts_350_iterations.png}
		\label{fig:gini_amounts_bids}
	}
	\caption{Gini coefficient for the two-player economy $\vec{a} = [[1, 0.1], [15,1]]$ with initial bids $[[1-x,x], [y, 1-y]]$ and initial amounts $1$.}
	\label{fig:gini_bids}
\end{figure}

\begin{figure}[h!]
	\centering
	\subfigure[Budgets.]
	{
		\includegraphics[scale=1]{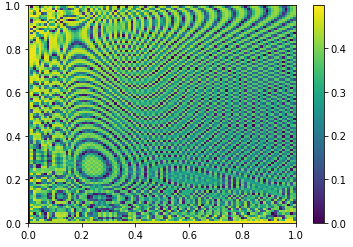}
		\label{fig:gini_money_bids_smallerselfloops}
	}
	\subfigure[Amounts.]
	{
		\includegraphics[scale = 1]{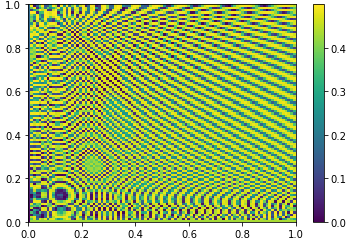}
		\label{fig:gini_amounts_bids_smallerselfloops}
	}
	\caption{Gini coefficient for the two-player economy $\vec{a} = [[1.21, 0.1], [15,1.21]]$ with initial bids $[[1-x,x], [y, 1-y]]$ and initial amounts $1$.}
	\label{fig:gini_bids_smallerselfloops}
\end{figure}

\begin{figure}[h!]
	\centering
	\subfigure[Budgets.]
	{
		\includegraphics[scale=1]{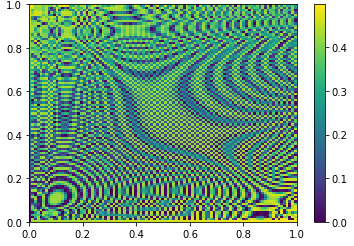}
		\label{fig:gini_money_bids_equalselfloops}
	}
	\subfigure[Amounts.]
	{
		\includegraphics[scale = 1]{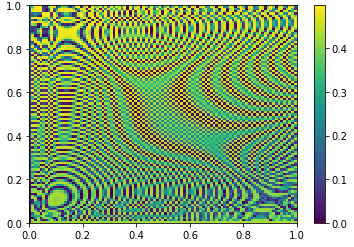}
		\label{fig:gini_amounts_bids_equalselfloops}
	}
	\caption{Gini coefficient for the two-player economy $\vec{a} = [[\sqrt{1.5}, 0.1], [15,\sqrt{1.5}]]$ with initial bids $[[1-x,x], [y, 1-y]]$ and initial amounts $1$.}
	\label{fig:gini_bids_equalselfloops}
\end{figure}

\begin{figure}[h!]
	\centering
	\subfigure[Budgets.]
	{
		\includegraphics[scale=1]{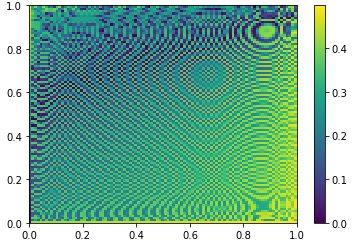}
		\label{fig:gini_money_bids_largeselfloops}
	}
	\subfigure[Amounts.]
	{
		\includegraphics[scale = 1]{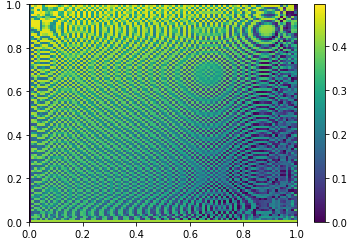}
		\label{fig:gini_amounts_bids_largeselfloops}
	}
	\caption{Gini coefficient for the economy $\vec{a} = [[1.24, 0.1], [15,1.24]]$ with initial bids $[[1-x,x], [y, 1-y]]$ and initial amounts $1$.}
	\label{fig:gini_bids_largeselfloops}
\end{figure}

\begin{figure}[h!]
	\centering
	\subfigure[Budgets.]
	{
		\includegraphics[scale=1]{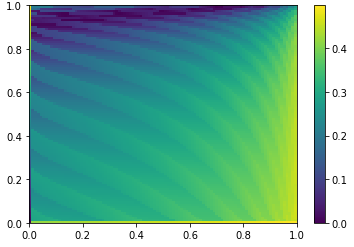}
		\label{fig:gini_money_bids_xlargeselfloops}
	}
	\subfigure[Amounts.]
	{
		\includegraphics[scale = 1]{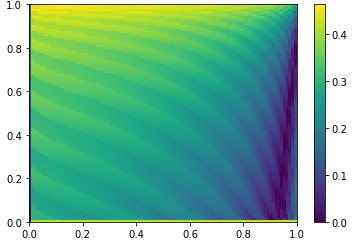}
		\label{fig:gini_amounts_bids_xlargeselfloops}
	}
	\caption{Gini coefficient for the two-player economy $\vec{a} = [[1.4, 0.1], [15,1.4]]$ with initial bids $[[1-x,x], [y, 1-y]]$ and initial amounts $1$.}
	\label{fig:gini_bids_xlargeselfloops}
\end{figure}

\begin{figure}[h!]
	\centering
	\subfigure[Budgets.]
	{
		\includegraphics[scale=1]{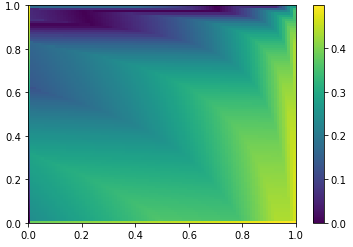}
		\label{fig:gini_money_bids_xxlargeselfloops}
	}
	\subfigure[Amounts.]
	{
		\includegraphics[scale = 1]{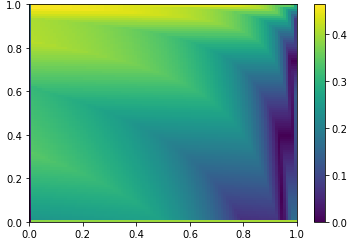}
		\label{fig:gini_amounts_bids_xxlargeselfloops}
	}
	\caption{Gini coefficient for the two-player economy $\vec{a} = [[2, 0.1], [15,2]]$ with initial bids $[[1-x,x], [y, 1-y]]$ and initial amounts $1$.}
	\label{fig:gini_bids_xxlargeselfloops}
\end{figure}

\end{document}